\def\confversion{0}
\def\ifconf{\ifnum\confversion=1}
\def\ifnotconf{\ifnum\confversion=0}
\def\showauthornotes{1}
\def\showkeys{0}
\def\showdraftbox{0}
\def\confversion{0}
\def\widemargin{1}
\def\ipadcompile{0}
\definecolor{darkred}{rgb}{0.5,0,0}
\definecolor{darkgreen}{rgb}{0,0.35,0}
\definecolor{darkblue}{rgb}{0,0,0.55}
\DeclareSymbolFont{sfitgreek}{LGR}{cmss}{m}{it}
\DeclareMathSymbol{\sfpi}{\mathord}{sfitgreek}{`p}
\DeclareMathAlphabet{\mathpazocal}{OMS}{zplm}{m}{n}
\DeclareRobustCommand*{\mathcal}[1]{\mathpazocal{#1}}
\newcommand{\Authornote}[3]{{\sf\small\color{#3}{[#1: #2]}}}
\newcommand{\Authorcomment}[2]{{\sf \small\color{gray}{[#1: #2]}}}
\newcommand{\Authorfnote}[2]{\footnote{\color{red}{#1: #2}}}
\newcommand{\Authornote}[3]{}
\newcommand{\Authorcomment}[2]{}
\newcommand{\Authorfnote}[2]{}
\newcommand{\draftbox}{\begin{center}
  \fbox{%
    \begin{minipage}{2in}%
      \begin{center}%
        \begin{Large}%
          \textsc{Working Draft}%
        \end{Large}\\
        Please do not distribute%
      \end{center}%
    \end{minipage}%
  }%
\end{center}
\vspace{0.2cm}}
\newcommand{\draftbox}{}
\declaretheorem[numberwithin=section]{theorem}
\declaretheorem[sibling=theorem]{lemma}
\declaretheorem[sibling=theorem]{claim}
\declaretheorem[sibling=theorem]{proposition}
\declaretheorem[sibling=theorem]{fact}
\declaretheorem[sibling=theorem]{corollary}
\theoremstyle{definition}
\declaretheorem[sibling=theorem]{definition}
\declaretheorem[sibling=theorem]{remark}
\declaretheorem[sibling=theorem]{observation}
\declaretheorem[sibling=theorem]{example}
\newtheorem{algo}[theorem]{Algorithm}
\def\FullBox{\hbox{\vrule width 6pt height 6pt depth 0pt}}
\def\qed{\ifmmode\qquad\FullBox\else{\unskip\nobreak\hfil
\penalty50\hskip1em\null\nobreak\hfil\FullBox
\parfillskip=0pt\finalhyphendemerits=0\endgraf}\fi}
\def\qedsketch{\ifmmode\Box\else{\unskip\nobreak\hfil
\penalty50\hskip1em\null\nobreak\hfil$\Box$
\parfillskip=0pt\finalhyphendemerits=0\endgraf}\fi}
\DeclareMathOperator{\agr}{agr}
\DeclareMathOperator{\Ima}{Im}
\newcommand{\Hx}{\sfH_{\vec x}}
\newcommand{\tG}{\widetilde{G}} 
\newcommand{\tH}{\widetilde{H}} 
\newcommand{\trho}{\tilde{\rho}} 
\newcommand{\LieHom}{\textup{LieHom}}
\newcommand{\LiftHom}{\textup{LiftHom}}
\newcommand{\indi}{\mathds{1}}
\newcommand{\indicator}[1]{\mathds{1}_{#1}\xspace}
\newcommand\inv[1]{#1\raisebox{0.8ex}{$\scriptscriptstyle-1$}}
\newcommand{\D}{D}
\newcommand{\Inn}{\textup{Inn}}
\newcommand{\Hom}{\textup{Hom}}
\newcommand{\Aut}{\textup{Aut}}
\newcommand{\Sym}{\textup{Sym}}
\newcommand{\stab}{\textup{Stab}}
\newcommand{\SL}{\mathrm{SL}}
\newcommand{\GL}{\mathrm{GL}}
\newcommand{\triv}{\mathrm{triv}}
\newcommand{\sfT}{{\sf T}}
\newcommand{\sfG}{{\sf G}}
\newcommand{\ep}{\varepsilon}
\newcommand{\sfH}{{\sf H}}
\newcommand{\sfS}{{\sf S}}
\DeclareMathOperator{\rank}{rank}
\newcommand{\test}{{\sf Test}}
\newcommand{\vgen}[1]{\langle {#1} \rangle}
\newcommand{\cotype}[1]{{\mathrm{cotype}} ({#1})}
\newcommand{\ctuple}[2]{ {\sf Count}_{#1} ({#2})}
\newcommand{\vtest}{\indicator{f(\vec x)\in \Hx }  }
\newcommand{\tq}{\tilde{q}}
\newcommand{\fix}{\textup{Fix}}
\newcommand{\agrHom}{\agr{(f,\phi)}}
\newcommand{\tagrHom}{\widetilde{\agr}{(f,\phi)}}
\newcommand{\stackalign}[1]{
	\vcenter{
		\Let@ \restore@math@cr \default@tag
		\baselineskip\fontdimen10 \scriptfont\tw@
		\advance\baselineskip\fontdimen12 \scriptfont\tw@
		\lineskip\thr@@\fontdimen8 \scriptfont\thr@@
		\lineskiplimit\lineskip
		\ialign{\hfil$\m@th\scriptstyle##$&$\m@th\scriptstyle{}##$\crcr
			#1\crcr
		}
	}
}
\let\latexcirc=\circ
\newcommand{\ccirc}{\mathbin{\mathchoice
  {\xcirc\scriptstyle}
  {\xcirc\scriptstyle}
  {\xcirc\scriptscriptstyle}
  {\xcirc\scriptscriptstyle}
}}
\newcommand{\xcirc}[1]{\vcenter{\hbox{$#1\latexcirc$}}}\let\circ\ccirc
\def\to{\rightarrow}
\def\eps{\varexpan}
\def\epsilon{\varepsilon}
\def\eps{\epsilon}
\def\phi{\varphi}
\def\cal{\mathcal}
\def\implies{\Rightarrow}
\newcommand{\ie}{i.e.,\xspace}
\newcommand{\E}{{\mathbb E}}
\newcommand{\C}{{\mathbb C}}
\newcommand{\Z}{{\mathbb Z}}
\newcommand{\F}{{\mathbb F}}
\newcommand{\cG}{\mathcal{G}}
\newcommand{\cD}{\mathcal{D}}
\DeclarePairedDelimiter\parens{\lparen}{\rparen}
\DeclarePairedDelimiter\abs{\lvert}{\rvert}
\DeclarePairedDelimiter\norm{\lVert}{\rVert}
\DeclarePairedDelimiter\braces{\lbrace}{\rbrace}
\DeclarePairedDelimiter\brackets{\lbrack}{\rbrack}
\DeclarePairedDelimiter\angles{\langle}{\rangle}
\DeclarePairedDelimiterXPP\lnorm[1]{}\lVert\rVert{_2}{#1}
\DeclareMathDelimiter{\given}
      {\mathbin}{symbols}{"6A}{largesymbols}{"0C}
\newcommand{\prob}{\mathsf{Pr}}
\newcommand{\Esymb}{\mathbb{E}}
\newcommand{\Psymb}{\mathrm{Pr}}
\newcommand{\talpha}{\tilde{\alpha}}
\DeclarePairedDelimiterXPP{\Prob}[1]
 {\prob}{\lparen}{\rparen}{}
 {\renewcommand{\given}{\;\delimsize\vert\nonscript\;\mathopen{}}#1}
\def\Pr#1{%
    \ProbabilityRender{\Psymb}{#1}%
}
\def\Ex#1{%
    \ProbabilityRender{\Esymb}{#1}%
}
\def\condPE#1#2{%
	\@ifnextchar\bgroup
	{\ConditionalProbabilityRender{\widetilde{\Esymb}}{#1}{#2}}
	{\ProbabilityRender{\widetilde{\Esymb}}{#1 \given #2}}
}
\def\ConditionalProbabilityRender#1#2#3#4{
	\renderwithdist{#1}{#2}{#3 \given #4}	
}
\def\ProbabilityRender#1#2{
  \@ifnextchar\bgroup%
  {\renderwithdist{#1}{#2}}
   {\singlervrender{#1}{#2}}
}
\def\singlervrender#1#2{%
   {\mathchoice
       {{#1}\brackets*{#2}}
       {{#1}[ #2 ]}
       {{#1}[ #2 ]}
       {{#1}[ #2 ]}
   }
}
\def\renderwithdist#1#2#3{%
   \@ifnextchar\bgroup
   {\superfancyrender{#1}{#2}{#3}}
   {\mathchoice
      {\underset{#2}{#1}\brackets*{#3}}
      {{#1}_{#2}[ #3 ]}
      {{#1}_{#2}[ #3 ]}
      {{#1}_{#2}[ #3 ]}
     }
}
\def\superfancyrender#1#2#3#4#5{
   \ensuremath{\mathchoice
      {\underset{#1}{{#1}}\left#4 #3 \right#5}
      {{#1}_{#2}#4 #3 #5}
      {{#1}_{#2}#4 #3 #5}
      {{#1}_{#2}#4 #3 #5}
   }
}
 \newcommand\SetSymbol[1][]{%
     \nonscript\:#1\vert
     \allowbreak
     \nonscript\:
     \mathopen{}}
  \DeclarePairedDelimiterX\Set[1]\{\}{%
     \renewcommand\given{\SetSymbol[\delimsize]}
     #1
}
\newcommand{\tr}{\mathrm{tr}}
\newcommand{\fkg}{\mathfrak{g} }
\newcommand{\fkh}{\mathfrak{h} }
\definecolor{asparagus}{rgb}{0.53, 0.66, 0.42}
\definecolor{cambridgeblue}{rgb}{0.64, 0.76, 0.68}
\definecolor{celadon}{rgb}{0.67, 0.88, 0.69}
\definecolor{charcoal}{rgb}{0.21, 0.27, 0.31}
\definecolor{cadmiumgreen}{rgb}{0.0, 0.42, 0.24}
\newcommand{\snote}[1]{\textcolor{asparagus}{ {\textbf{(Sourya: #1)}}}}
\begin{document}
	
	\title{A General Framework for Low Soundness Homomorphism Testing}
	\author{Tushant Mittal\thanks{Stanford University, \; {\tt{tushant@stanford.edu}}. TM is a postdoctoral fellow supported by the NSF grants CCF-2143246 and CCF-2133154.} \and
        Sourya Roy\thanks{The University of Iowa, \; {\tt{sourya-roy@uiowa.edu}}. SR was partially supported by the Old Gold Summer Fellowship from The University of Iowa.}     }
	
	
	\date{\today}

	\date{}
	
	\maketitle
	\draftbox

We introduce a general framework to design and analyze algorithms for the problem of testing homomorphisms between finite groups in the low-soundness regime. 

\medskip

In this regime, we give the first constant-query tests for various families of groups. These include tests for:
(i) homomorphisms between arbitrary cyclic groups, (ii) homomorphisms between any finite group and $\Z_p$, (iii) automorphisms of dihedral and symmetric groups, (iv) inner automorphisms of non-abelian {finite simple groups} and {extraspecial groups}, and (v) testing linear characters of $\GL_n(\F_q)$, and finite-dimensional Lie algebras over $\F_q$. We also recover the result of Kiwi [TCS'03] for testing homomorphisms between $\F_q^n$ and $\F_q$. 

\medskip

Prior to this work, such tests were only known for abelian groups with a constant maximal order (such as $\F_q^n$). No tests were known for non-abelian groups.

\medskip

As an additional corollary, our framework gives combinatorial list decoding bounds for cyclic groups with list size dependence of $O(\ep^{-2})$ (for agreement parameter $\ep$). This improves upon the currently best-known bound of $O(\ep^{-105})$ due to Dinur, Grigorescu, Kopparty, and Sudan~[STOC'08], and Guo and Sudan~[RANDOM'14]. 
\thispagestyle{empty}
\addtocounter{page}{-1}
 \clearpage
 \tableofcontents
 \thispagestyle{empty}
 \addtocounter{page}{-1}
\newpage
 \setcounter{page}{1}
\section{Introduction}
The problem of \textit{homomorphism testing} has been extensively studied in the theoretical computer science literature~\cite{BCH+95, BLR90, Kiwi03,Sam07,HW03}.  A primary motivation for studying this question is its relevance to the theory of \textit{probabilistically checkable proofs} \cite{BSVW03,BK21} and \textit{locally testable codes}. Additionally, there has been an interest in studying such tests in quantum complexity, for example, \textit{entanglement testing}~\cite{NV17} involves homomorphism testing which played an important role in the proof of $\mathrm{MIP}^*\text{=}\mathrm{RE}$~\cite{MIP21}. 


In this work, we study this homomorphism testing problem in the context of general finite groups. To make the discussion precise, we begin by formally describing the setup. Let $G$ and $H$ be two finite groups. Denote by $\Hom(G,H)$, the set of all homomorphisms from $G$ to $H$, \ie a function such that, $f(xy) = f(x)\cdot f(y)$ for each $x,y \in G$.

\begin{definition}
	$\Hom(G,H)$ is $(k, \delta, \ep)$-testable if there exists an algorithm (test) that, given oracle access to a function $f: G\to H$, makes $k$ queries to it, and satisfies the following: 
	\begin{itemize}
		\item{(Completeness)} If $f$ is a homomorphism, the test passes with probability $1$.
		\item{(Soundness)} If test passes with probability $\delta$ (over the choice of queries), then there exists a homomorphism $\phi$ such that 
		 $\agr(f,\phi)~:=~\Pr{x\sim G}{f(x)=\phi(x)} ~\geq~ \ep(\delta)$. 
	\end{itemize}
\end{definition}
 As an example, the famous Blum--Luby--Rubinfield~\cite{BLR90} (BLR) test samples a random pair $x,y\sim G$ and checks if $f(x)\cdot f(y) = f(xy)$. This test shows that $\Hom(\F_2^n, \F_2)$, also known as the \textit{Hadamard code}, is $\parens{3,\delta, \delta}$-testable. We are interested in identifying finite groups, $(G,H)$, for which the set $\Hom(G,H)$ is testable and designing such tests.

\paragraph{High Soundness Regime} It is much easier to construct a test that only guarantees soundness when a function passes the test with a probability much larger than the test passing probability of a random function. This regime is known as the high soundness or the unique decoding regime, as there is often a unique homomorphism that agrees with the input function. There are many results in this regime and in particular, Ben Or-Coppersmith-Luby-Rubinfeld~\cite{BCLR07}, showed that $\Hom(G,H)$ is $(3,\delta, 1-\frac{\delta}{2})$-testable for $\delta > \frac{7}{9}$ for any finite groups $G,H$. Moreover, the test is the same as the BLR test.

\paragraph{Low Soundness Regime} It is significantly more difficult to design and analyze tests in the low soundness (\textit{list decoding}) setting when the test passing probability can be arbitrarily small, and the function has a tiny agreement with many homomorphisms. As a sharp contrast to the result of~\cite{BCLR07}, the only known cases for which the BLR test has been analyzed in this low soundness setting are: 
 (i) $(\F_p^n, \F_p)$ for some prime $p$, by H{\aa}stad and Wigderson~\cite{HW03}, and (ii) $(\F_p^n, \F_p^m)$ by Samrodnitsky\footnote{For $p=2$, this setting is equivalent to the Freiman-Rusza conjecture for which improved bounds were proven in the breakthrough works of~\cite{Sanders10, GGMT23}.}~\cite{Sam07} which can be generalized to the setting of $G= \Z_p^{n_1}\oplus \dots \oplus \Z_{p^r}^{n_r}$, where $r = O(1)$.


\paragraph{Issues with BLR} The high-soundness result of~\cite{BCLR07} cannot be hoped to generalized to the low soundness setting, as they also give the following counterexample:  for any $r \geq 3$, there exists a function $f: \Z_{3^r} \to \Z_{3^{r-1}}$ that passes the BLR test with probability $\frac{7}{9}$ but agrees with any homomorphism at most $3^{-(r-1)}$-fraction of points. This demonstrates that BLR fails catastrophically, even for cyclic groups, in the low-soundness regime. 

Moreover, even in cases for which BLR works, it is not always known to yield the best agreement guarantee. For instance, \cite{HW03} showed that the BLR test for $(\F_p^n, \F_p)$ achieves an agreement guarantee of $\ep(\delta) = \frac{1+\delta}{p}$. Hence, even when the function passes with probability $1$, the guarantee is small for large $p$. This was remedied by Kiwi~\cite{Kiwi03} by giving a different test\footnote{Grigorescu,
Kopparty and Sudan~\cite{GKS06}, and Gopalan~\cite{Gopalan13} gave an alternate Fourier analytic proof of Kiwi's result.}  which showed that $\Hom(\F_q^n, \F_q)$ is $(3,\delta, \delta)$-testable.

 
The above discussion shows that new tests and techniques must be devised to handle new families of groups and/or obtain optimal parameters. Additionally, it is desirable to have tests that can provide an improved soundness guarantee by using more queries. 



\subsection{Our Contribution}

We take a first step towards this by defining a general testing framework and using it to get testability results for a variety of groups. In particular, we give the first tests for classes of non-abelian groups in the low-soundness setting. Our meta-test is the following. Let $G,H$ be finite groups, $k$ be an integer, and let $\cD_k$ be a distribution on $G^k$.  
 
 \vspace{1em}
\begin{tcolorbox}[colframe=teal, colback=white, title={$\test_k(G,H, \cD_k)$}]
	\begin{itemize}
		\item Sample $(x_1,\dots, x_k) \sim \cD_k \subseteq G^k$.
		\item Return $1$ if and only if there exists\footnotemark a homomorphism (or automorphism) $\phi$ such that $f(x_i) = \phi(x_i)$ for $i \in [k]$. 
	\end{itemize}
\end{tcolorbox} \vspace{0.5em}

\footnotetext{For almost all the groups we study, there is a simple and efficient way to check this. }


We explain the framework in~\cref{sec:tech_over}, but briefly summarize its salient features:
\begin{itemize}
	\item \textbf{Defining the Distribution $\cD_k$} -- A key feature of the framework is that this distribution naturally emerges from an (approximating) expression for the soundness, \ie $\max_{\phi \in \Hom(G,H)} \agr(f,\phi)$.    The BLR test uses the uniform distribution over $\{(x_1,x_2,x_3) \mid x_1x_2x_3 = 1\} $ as $\cD_k$, regardless of the groups $G,H$. In contrast, our distribution takes into account the group-theoretic data. In the special case of $(\F_2^n, \F_2)$, our distribution coincides with the one in BLR, but for other groups, they are quite different. For example, in the case of cyclic groups, our distribution (roughly) weighs elements based on their order and is not uniform. This difference intuitively explains why BLR fails for cyclic groups, but our test works. 		    
		\item \textbf{Distance Approximation and List Decoding} -- Our analysis works by giving an exact expression for the $k^{\mathrm{th}}$-moment,  $\sum_\phi \agrHom^k$. This can be seen as a \enquote{degree-$k$ variant} of the general {Johnson bound} (which is for $k=2$). Such an expression not only yields a soundness guarantee but also implies a bound on  (i) the number of \enquote{large-agreement homomorphisms}, \ie combinatorial list size bounds for homomorphism codes, and (ii) the largest possible agreement that gives a tight control on the distance of the input function $f$ to the property of being a homomorphism.  This task of distance approximation was introduced in~\cite{PPR06}, where they show that approximating distance implies \textit{tolerant tests}. 
		\item  \textbf{Avoiding Fourier Analysis} -- Our technique works entirely in the \enquote{physical space} by reducing the soundness analysis to the computation of certain group-theoretic constants. For some classes of groups (such as finite simple groups), these constants have been studied in the literature. This is helpful when the target group $H$ does not embed easily into $\C$, and one cannot directly rely on Fourier analysis. 
	\item 	\textbf{Query vs Soundness Tradeoff} -- The above test works for any (large enough) number of queries $k$, and the analysis shows that the test guarantee improves exponentially with the number of queries, $\ep(\delta) \approx \delta^{\frac{1}{k}}$, giving us a smooth tradeoff between query complexity and the soundness guarantee.  
\end{itemize}


\subsubsection{Homomorphism Testing}

 We now summarize our results for homomorphism testing over various classes of finite abelian groups.  
 To the best of our knowledge, all the tests here are novel except for the $3$-query test for $\Hom(\F_q^n, \F_q)$ due to~\cite{Kiwi03}.




\begin{theorem}[Summary of Homomorphism Testing]\label{theo:main}
	For each pair of groups $G,H$ as listed in~\cref{tab:main}, and correspondingly allowed integers $k$, $\Hom(G,H)$ is $(k,\delta, \ep(\delta))$-testable. The test is $\test_k(G,H,\cD_k)$, for an explicitly defined distribution $\cD_k$ on $G^k$. Additionally, we also get an upper bound of $\max_\phi \agrHom \leq O(\delta^{\frac{1}{k}})$, for any appropriate $k$ as in the table.  
	\end{theorem}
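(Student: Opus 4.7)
The plan is to prove the theorem via a single master identity relating the test's acceptance probability to a $k$-th moment of agreements, then verify the identity's hypotheses row by row in \cref{tab:main}.

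Step 1 (master identity). For each pair $(G,H)$ I would construct $\cD_k$ on $G^k$ so that two properties hold simultaneously. The first is an \emph{independence property}: each coordinate of $\cD_k$ is marginally uniform on $G$, and for every $\phi \in \Hom(G,H)$,
\[
\Pr_{\vec x \sim \cD_k}\bigl[f(x_i)=\phi(x_i)\ \forall\, i\bigr] \;=\; \agr(f,\phi)^k.
\]
The second is a \emph{uniqueness property}: every tuple in the support of $\cD_k$ is generic enough that at most one $\phi$ can agree with $f$ on all $k$ coordinates. Together these give
\[
\Pr[\test_k(G,H,\cD_k)\text{ accepts}] \;=\; \sum_{\phi\in\Hom(G,H)} \agr(f,\phi)^k.
\]
Completeness is immediate, and taking the single largest term on the right produces the upper bound $\max_\phi \agr(f,\phi) \le \delta^{1/k}$ whenever the test accepts with probability $\delta$.

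Step 2 (abelian rows). For $\Hom(G,\Z_p)$, $\Hom(\F_q^n,\F_q)$, and homomorphisms between cyclic groups, I would take $\cD_k$ to be supported on $k$-tuples that generate $G$ as a $\Z$-module or $\F_q$-vector space, reweighted so each marginal is uniform. Independence then reduces to an elementary computation, and uniqueness holds because generating tuples determine homomorphisms. In the cyclic case the weight on $x \in G$ comes out roughly proportional to $1/\varphi(|x|)$, which explains why uniform sampling (as in BLR) fails on cyclic groups while this weighted sampling succeeds. The case $\Hom(\F_q^n,\F_q)$ with $k=3$ collapses to Kiwi's test, recovering~\cite{Kiwi03}.

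Step 3 (non-abelian rows, the main obstacle). For dihedral and symmetric groups, inner automorphisms of non-abelian finite simple and extraspecial groups, and linear characters of $\GL_n(\F_q)$, there is no embedding of $H$ into $\C$ to lean on, so the independence property must be enforced \enquote{in physical space}. The \textbf{hardest step} is simultaneously enforcing independence and uniqueness in the non-abelian setting, because conditioning on a generating tuple destroys independence; the remedy is to include correction factors involving centralizer sizes and conjugacy-class averages, which are precisely the \enquote{group-theoretic constants} invoked in the framework overview. Verifying uniqueness amounts to showing random $k$-tuples generate $G$, which is classical for symmetric and finite simple groups via probabilistic generation, and a direct Heisenberg-type computation for extraspecial groups.

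Step 4 (soundness and list sizes). The matching lower bound $\max_\phi \agr(f,\phi) \ge \Omega(\delta^{1/k})$ required for $(k,\delta,\varepsilon)$-testability follows from the master identity combined with a list-decoding bound on the number of $\phi$ with $\agr(f,\phi) \ge \alpha$. For cyclic groups the $k=2$ instance of the identity itself yields a Johnson-type list bound of $O(\varepsilon^{-2})$, producing the improvement over the $O(\varepsilon^{-105})$ bound of Dinur--Grigorescu--Kopparty--Sudan and Guo--Sudan cited in the introduction. For the remaining rows the appropriate classical list-size estimates (character counts for $\GL_n(\F_q)$, conjugacy-class enumeration for symmetric and simple groups, and so on) can be plugged in to finish the soundness argument and complete the proof row by row.
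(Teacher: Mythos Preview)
Your Step 1 contains a genuine gap: the independence property and the uniqueness property cannot coexist. The requirement $\Pr_{\cD_k}[f(x_i)=\phi(x_i)\ \forall i] = \agr(f,\phi)^k$ for every $f$ and every $\phi$ forces the coordinates of $\cD_k$ to be i.i.d.\ uniform on $G$, because as $f$ varies the agreement sets $\{x:f(x)=\phi(x)\}$ range over all subsets of $G$, and the only distribution giving the product formula for all product events is the i.i.d.\ uniform one. But under i.i.d.\ uniform sampling, uniqueness fails: whenever the tuple $\vec x$ lies in the kernel of some nontrivial homomorphism, several $\phi$ agree with $f$ on $\vec x$. You acknowledge in Step 3 that ``conditioning on a generating tuple destroys independence,'' but the appeal to unspecified ``correction factors'' does not repair this; no reweighting can simultaneously preserve the product formula for all $f,\phi$ and exclude non-generating tuples.

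The paper's resolution is different: it drops uniqueness entirely. Starting from i.i.d.\ uniform $\vec x$, the evaluation map $\Gamma_{\vec x}:\Hom(G,H)\to H^k$ is $|\ker(\Gamma_{\vec x})|$-to-one on its image, so $\sum_\phi \indicator{f(\vec x)=\phi(\vec x)} = \indicator{f(\vec x)\in\mathrm{Im}(\Gamma_{\vec x})}\cdot|\ker(\Gamma_{\vec x})|$, and hence $\sum_\phi \agr(f,\phi)^k = \Ex{\vec x}{\indicator{f(\vec x)\in\mathrm{Im}(\Gamma_{\vec x})}\cdot|\ker(\Gamma_{\vec x})|}$. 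The distribution $\cD_k$ then \emph{emerges} as $\vec x \propto |\ker(\Gamma_{\vec x})|$: the weight absorbs the multiplicity rather than eliminating it. The soundness lower bound is also not obtained from external list-decoding inputs as in your Step 4 (that would be circular, since the paper's list bounds are \emph{consequences} of the moment identity), but from the telescoping ratio $\max_\phi\agr(f,\phi)\ge \sum_\phi\agr(f,\phi)^{i}\big/\sum_\phi\agr(f,\phi)^{i-1}$, multiplied over a range of $i$ and controlled via explicit bounds on $\gamma_k=\sum_\phi|\ker\phi|^k$ for each row of \cref{tab:main}. Finally, the $\F_q^n\to\F_q$ row requires an additional idea your proposal does not address: the unshifted moments are dominated by the trivial $1/q$ agreement, so one must pass to a shifted agreement $\widetilde{\agr}(f,\phi)=(q\,\agr(f,\phi)-1)/(q-1)$ and restrict the test to rank-$(k{-}1)$ tuples before the analysis goes through.
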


\begin{table}[h]
	\begin{center}
		\small
		\begin{tabular}{ccccc}
			\toprule
		Result &	$G$ & $H$ & Query & Soundness  $\brackets{\eps(\delta)}$ 
			\\[2pt]
			\midrule
		{\cref{theo:vspace}} 	&  $\F_q^n$& $\F_q$ &
		 {Odd $k \geq 3$} & \multirow{2}{*}{${\frac{1}{q} - O(\frac{1}{q^n}) + \parens{\frac{q-1}{q}}\cdot  \parens[\big]{\frac{q\delta-1}{q-1}}^{\frac{1}{k-2}} }$}	\\[4pt]
		 	\cite{Kiwi03} 	&  $\F_q^n$& $\F_q$ &  $k = 3$ &
\\[3pt]
		\midrule
	{\cref{thm:cyclic_gen}}		&  $\Z_n$ & $\Z_m$	& {Any $k \geq 4$} & {$ \parens[\big]{\zeta(2)^2\cdot\delta}^{\frac{1}{k-3}}$}\\
			\midrule
			\cref{theo:fq_vspace} & $\F_q$  & $\F_q^n$	&  {Any $k \geq 2$} &  {$\frac{q-1}{q}\cdot \delta^{\frac{1}{k-1}}$} \\
			 \midrule
	 {\cref{thm:cyclic_bounded}} & $\Z_{p^r}$ & \makecell{Abelian group of\\$p$-rank $\leq t$}	& {Any $k \geq t+2$} & {$\parens[\Big]{\frac{(p-1)^2}{p^2}\cdot \delta}^{\frac{1}{k-t-1}}$} \\[2pt]
			\hline\\
		\end{tabular}
			\vspace{-0.3em}
\begin{minipage}{\linewidth}
\centering \small
Note: The $p$-rank of an Abelian group is the number of cyclic groups of order a power of $p$ in its decomposition, see~\cref{fact:decomp}.
\end{minipage}
\caption{A summary of our results on homomorphism testing. }\label{tab:main}
	\end{center}
\end{table}
\vspace{-2em}
\paragraph{Combinatorial List Decoding}
While the focus of our work is not list decoding, our technique yields stronger bounds than currently known for some groups. This is because our analysis gives bounds on $\sum_\phi\agr(f,\phi)^k$, and the list size then immediately follows. 

The work of Dinur, Grigorescu, Kopparty, and Sudan~\cite{DGKS08}, and later Guo and Sudan~\cite{GS14} gave a list size bound of $O\parens[\big]{\eps^{-105}}$ that works for every pair of abelian groups (and in fact \enquote{supersolvable} groups). However, their bound does not improve even when $H$ is cyclic. We prove a much better bound of $\eps^{-2}$ for cyclic groups of prime power and $\eps^{-3}$ for general cyclic groups.    

\begin{theorem}[List Decoding for Cyclic groups]
	Let  $G=\Z_{p^r}$  and $H = \Z_{p^s}$ be cyclic groups. Let $f: G\to H$ be any function.  Then the following holds:
		\[ \abs[\big]{\{ \phi\in \Hom(G,H)  ~:~\agr(f,\phi) ~\geq~ \ep  \}} ~\leq~ \parens[\Big]{\frac{p}{p-1}} \cdot  \frac{1}{\ep^{2}}.\]
In general, we get a list size bound of ${2\ep^{-(t+1)}}$ when $H$ is an abelian group of $p$-rank $t\geq 1$. Additionally, for any integers $n, m \geq 1$, we get the following list size bound:  			
	\[ \abs[\big]{\{ \phi\in \Hom(\Z_n,\Z_m)  ~:~\agr(f,\phi) ~\geq~ \ep  \}} ~\leq~  \frac{\zeta(2)^2}{\ep^{3}}\;,\]
	where $\zeta(2) = \frac{\pi^2}{6}$, is the Riemmann-Zeta function.
\end{theorem}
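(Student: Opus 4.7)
My plan is to prove all three bounds via a single moment argument combined with Markov's inequality: if $\sum_{\phi \in \Hom(G,H)} \agr(f,\phi)^k \leq C$ for an explicit constant $C = C(G,H)$ independent of $f$, then the number of $\phi$ with $\agr(f,\phi) \geq \ep$ is at most $C/\ep^k$. I take $k = 2$ for the cyclic $p$-group bound, $k = t+1$ for the $p$-rank-$t$ bound, and $k = 3$ for the general cyclic bound.

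The moment admits a clean combinatorial form,
$$\sum_{\phi} \agr(f,\phi)^k \;=\; \E_{x_1,\dots,x_k \sim G}\brackets*{\abs{\{\phi \in \Hom(G,H) : \phi(x_\ell) = f(x_\ell) \text{ for all } \ell\}}},$$
and the set inside is either empty or a coset of the kernel $K(x_1,\dots,x_k) := \{\phi : \phi(x_\ell) = 0 \text{ for all } \ell\}$. Hence its cardinality is bounded by $\abs{K(x_1,\dots,x_k)}$, a quantity depending only on the inputs and not on the values $f(x_\ell)$. This $f$-independence is the workhorse of the proof.

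For $G = \Z_{p^r}$, $H = \Z_{p^s}$, a homomorphism is parametrized by $c = \phi(1)$ ranging over a cyclic group of size $p^{\min(r,s)}$, and an explicit calculation gives $\abs{K(x,y)} = p^{\min(s,\,v_p(x),\,v_p(y))}$, where $v_p$ is the $p$-adic valuation. Using that $v_p(x) \geq i$ holds with probability $p^{-i}$ for $x$ uniform in $\Z_{p^r}$, a geometric-sum calculation bounds $\E_{x,y}[\abs{K(x,y)}]$ by $\frac{p}{p-1}$. For $H = \bigoplus_{j=1}^t \Z_{p^{s_j}}$ of $p$-rank $t$, the kernel factors as a product of $t$ cyclic kernels each of size $\leq p^{\min_\ell v_p(x_\ell)}$, so $\abs{K} \leq p^{t \cdot \min_\ell v_p(x_\ell)}$. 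Taking $k=t+1$ i.i.d.\ samples, $\min_\ell v_p(x_\ell) \geq i$ has probability $p^{-(t+1)i}$, and the resulting expectation is at most $\frac{1-p^{-(t+1)}}{1-p^{-1}} \leq \frac{p}{p-1} \leq 2$.

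For the general cyclic case $G = \Z_n$, $H = \Z_m$, I apply the Chinese Remainder Theorem: $\Z_n \cong \bigoplus_p \Z_{p^{a_p}}$, $\Z_m \cong \bigoplus_p \Z_{p^{b_p}}$, and correspondingly $\Hom(\Z_n,\Z_m) \cong \prod_p \Hom(\Z_{p^{a_p}}, \Z_{p^{b_p}})$. The hard step is that $f$ itself does \emph{not} decompose as a product across primes, so a priori the moment is not an Euler product. I navigate this via the $f$-independence noted above: the kernel bound $\abs{K}$ depends only on the inputs, and the uniform measure on $\Z_n$ is the product measure across its prime components, so $\E[\abs{K}]$ decomposes cleanly as an Euler product over primes. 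With $k=3$, each per-prime factor is a rational expression in $p^{-1}$ of order $(1-p^{-2})^{-1}$ (up to lower-order terms), and the product over all primes is bounded by $\zeta(2)^2$. The principal obstacle will be executing this last step cleanly: factorizing the moment despite $f$'s non-factorization, and bookkeeping the constants carefully—keeping track of the asymmetry between $a_p$ and $b_p$ on the source and target sides—to land on the stated $\zeta(2)^2$ bound.
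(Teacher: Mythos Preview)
Your proposal is correct and follows essentially the same route as the paper: both bound the $k$-th moment $\sum_\phi \agr(f,\phi)^k$ by the $f$-independent quantity $\E_{\vec x}[|K(\vec x)|] = \gamma_k/|G|^k$ (the paper obtains this by writing $\sum_\phi \agr^k = \delta_k \cdot \gamma_k/|G|^k$ and using $\delta_k \le 1$), then compute $\gamma_k$ explicitly for prime-power cyclic groups and pass to general cyclic groups via the CRT/Euler-product factorization $\gamma_k(G,H)=\prod_p \gamma_k(G_p,H_p)$. The only cosmetic difference is that the paper evaluates $\gamma_k$ via the dual sum $\sum_\phi |\ker\phi|^k$ whereas you compute $\E_{\vec x}[|K(\vec x)|]$ directly through $p$-adic valuations; in fact your per-prime factor $(1-p^{-2})^{-1}$ already gives $\zeta(2)$, which is sharper than the paper's $\zeta(2)^2$ (the squared constant arises only from a loose inequality $1+2x \le (1-x)^{-2}$ in the paper's bookkeeping).
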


\subsubsection{Automorphism Testing over Non-Abelian Groups}
A very important class of homomorphisms arises from automorphisms of groups. Our next set of results concern testing automorphisms and inner automorphisms over various families for finite non-abelian groups. We quickly recall the relevant definitions,
\begin{align*}
	\Aut(G) ~&=~  \braces{\phi \in \Hom(G,G) \mid\phi\; \text{is bijective} } , \\ 
	\Inn(G) ~&=~  \braces{\phi_g,\; g \in G  \mid \phi_g(x) = gx\inv{g} } ~\subseteq~ \Aut(G) . 
\end{align*}

The set of inner automorphisms is easier to work with as the maps are very explicit. Moreover, for many groups, the inner automorphisms capture most of the automorphisms. For example, for the symmetric group $\Sym_n$, all the automorphisms are inner (\cite{Seg40}) (for $n \neq 6$). The following theorem consolidates all our automorphism testing results. 

\begin{theorem}[Automorphism Testing (Summary of \cref{thm:dihedral_aut,theo:main_aut2})]\label{theo:main_aut}
	The following results hold by using (a modification of) $\test_k(G,G,\cD_k)$ for an explicitly defined distribution $\cD_k$ on $G^k$ \textup{:} 
	\begin{itemize}
		\item For the family of dihedral groups, $D_{2p}$ ($p$ prime), $\Aut(D_{2p})$ is $\parens[\Big]{k,\,\delta,\, \frac{1}{2} \delta^{\frac{1}{k-2}} }$-testable for every $k \geq 3$.
		\item For the family of symmetric groups, $\Aut(\Sym_n) = \Inn(\Sym_n)$ is $\parens[\Big]{k,\,\delta,\,  \delta^{\frac{1}{k-2}} - o_n(1)}$-testable for every $k \geq 3$, and $n \neq 6$. 
		\item For every non-abelian finite simple group $G$, $\Inn(G)$ is $\parens[\Big]{k,\,\delta,\,  \delta^{\frac{1}{k-2}} - o_{|G|}(1)}$-testable for every $k \geq 4$.
		\item For any extraspecial group $G$ of order $p^r$, $\Inn(G)$ is $\parens[\Big]{k,\,\delta,\,  \delta^{\frac{1}{k-r}} - o_{p}(1)}$-testable for every $k \geq r+1$. In particular, for the family of Heisenberg groups ($H_p$) (the group of $3\times 3$ unitrianguar matrices over $\F_p$),   $\Inn(H_p)$ is   $\parens[\Big]{k,\,\delta,\,  \delta^{\frac{1}{k-3}} - o_{p}(1)}$-testable for any $k \geq 4$.
	\end{itemize}
	Additionally, we also get an upper bound of $\max_\phi \agrHom \leq O(\delta^{\frac{1}{k}})$, for any group $G$ and $k$ as above, except the dihedral group.
	\end{theorem}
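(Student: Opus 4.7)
The plan is to handle all four items via a unified $k$-th moment method tailored to the automorphism setting. For items 2--4 the sampling distribution $\cD_k$ is the product of $k$ independent uniform samples on $G$; for the dihedral case the distribution is modified to pin down an outer twist in addition to the inner shift.

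First I would establish the core identity for inner automorphisms. Expanding $\agr(f,\phi_g)^k = \Ex{\vec x}{\prod_{i=1}^k \indi[g x_i \inv{g} = f(x_i)]}$, exchanging sum and expectation, and using that $g \mapsto \phi_g$ has kernel $Z(G)$ yields
\[
\sum_{\phi \in \Inn(G)} \agr(f,\phi)^k \;=\; \Ex{\vec x \sim G^k}{\frac{|\cC(\vec x)|}{|Z(G)|} \cdot \indi[\test_k \text{ accepts } f \text{ on } \vec x]},
\]
where $\cC(\vec x) := \bigcap_{i=1}^{k} C_G(x_i)$ is the common centralizer. Since $\cC(\vec x) \supseteq Z(G)$, the right-hand side always dominates $\Pr{\vec x}{\test_k \text{ accepts}}$, so soundness reduces to controlling $\sum_\phi \agr(f,\phi)^k$. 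The final ingredient is an $m$-th moment bound $\sum_\phi \agr(f,\phi)^m \leq 1 + o(1)$ for a small value $m$; once in hand, the standard interpolation $\sum_\phi \agr^k \leq (\max_\phi \agr)^{k-m} \cdot \sum_\phi \agr^m$ turns the test-acceptance hypothesis $\delta \leq \Pr{\vec x}{\test_k}$ into the guarantee $\max_\phi \agr(f,\phi) \geq \delta^{1/(k-m)} - o(1)$. The matching upper bound $\max_\phi \agr \leq O(\delta^{1/k})$ is then immediate, because $\max^k \leq \sum_\phi \agr^k$ and when $\cC(\vec x) = Z(G)$ with high probability the right-hand side equals $(1 + o(1)) \cdot \Pr{\vec x}{\test_k}$.

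Next I would instantiate $m$ per family. For $\Sym_n$ (with $n \neq 6$) and for non-abelian finite simple groups, take $m = 2$: since $Z(G) = 1$, a double-counting reduction bounds $\sum_\phi \agr(f,\phi)^2$ by $\sum_{h \in G}(|C_G(h)|/|G|)^2$, which equals $1 + o(1)$ because two random elements of $G$ generate it with probability $1 - o(1)$ by Dixon's theorem (for $\Sym_n$) and by Guralnick--Kantor / Liebeck--Shalev (for non-abelian simple groups); the slightly larger query bound $k \geq 4$ for simple groups (versus $k \geq 3$ for $\Sym_n$) accommodates the weaker quantitative rates in the classification-based estimates. For extraspecial $G$ of order $p^r$, take $m = r$: since $G/Z(G) \cong \F_p^{r-1}$, $r$ uniform elements of $G$ project to $r$ vectors in $\F_p^{r-1}$ which span with probability $1 - o_p(1)$, and on this event $\cC(\vec x) = Z(G)$, giving $\sum_\phi \agr^r \leq 1 + o_p(1)$. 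The Heisenberg group $H_p$ is the case $r = 3$. For the dihedral case, $\Aut(D_{2p}) \cong \Z_p \rtimes \Z_p^*$ strictly contains $\Inn(D_{2p})$, so outer automorphisms must be incorporated by hand; I would parametrize each $\phi \in \Aut(D_{2p})$ by a pair $(a, b) \in \Z_p \times \Z_p^*$ and take $\cD_k$ to sample one reflection and $k - 1$ rotations, pinning down $b$ from the reflection sample and $a$ from the rotation samples. A direct moment computation in this parametrization yields both the prefactor $\tfrac{1}{2}$ (from the index $[D_{2p} : \Z_p] = 2$) and the $\delta^{1/(k-2)}$ rate, and the absence of a clean centralizer-type identity for outer automorphisms is precisely why the $O(\delta^{1/k})$ upper bound on $\max_\phi \agr$ is unavailable in this case.

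The hardest part will be establishing the $m$-th moment estimate $\sum_\phi \agr^m \leq 1 + o(1)$ with the precise error rates claimed in the theorem: Dixon's bound for $\Sym_n$ is fully explicit and the extraspecial computation is essentially linear algebra over $\F_p$, but for general non-abelian simple groups one must carefully track the $o_{|G|}(1)$ rate in the classification-based generation estimates and verify it survives the interpolation step. A secondary technical concern is the efficient implementation of the ``exists an automorphism agreeing on $\vec x$'' check that $\test_k$ requires; this reduces in each family to solving a small system of conjugation (or affine) equations using an explicit presentation of $G$, and must be verified family-by-family to keep the query complexity at the stated constant $k$.
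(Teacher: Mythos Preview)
Your overall framework---the core identity $\sum_{\phi}\agr(f,\phi)^k = \Ex{\vec x}{\tfrac{|\cC(\vec x)|}{|Z(G)|}\,\indi[\text{accept}]}$, followed by the interpolation $\delta \le \sum_\phi \agr^k \le (\max_\phi\agr)^{k-m}\sum_\phi \agr^m$---matches the paper's approach exactly. There is one difference in the test itself: the paper samples $\vec x \propto |\cC(\vec x)|$ rather than uniformly, which makes $\sum_\phi \agr^k$ equal to $\delta_k \cdot \trho_k/|G|^k$ \emph{exactly} and cleans up both the lower and upper bounds; with your uniform test the lower bound still goes through, but the upper bound needs the same tail control discussed below. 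For the dihedral case the paper does \emph{not} use a custom ``one reflection, $k-1$ rotations'' test; it runs the general weighted test over $\Aut(D_{2p})$ and computes $\rho_k=\sum_\phi|\fix(\phi)|^k$ directly by casework on the parameters $(\ell,m)$.

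The substantive gap is in how you justify the $m$-th moment bound $\sum_\phi \agr^m \le 1+o(1)$. Random generation results (Dixon, Guralnick--Kantor) tell you only that $\cC(\vec x)=Z(G)$ \emph{with high probability}; they do not by themselves control the expectation $\Ex{\vec x}{|\cC(\vec x)|/|Z(G)|}$, because on the low-probability non-generation event $|\cC(\vec x)|$ can be huge. Concretely, for extraspecial $G$ of order $p^r$ your bound ``$\Pr[\text{not span}]\cdot\max|\cC|/|Z|$'' is of order $p^{r-2}$, which is not $o_p(1)$. What is actually needed is a bound on the conjugacy-class zeta function $\sum_{\cC}|\cC|^{1-m}$ (equivalently $\sum_{g}(|C_G(g)|/|G|)^m$), and this is what the paper proves directly: a cycle-type computation for $\Sym_n$, the Liebeck--Shalev bound $\sum_\cC|\cC|^{-t}\to 1$ for finite simple groups (together with a quasirandomness argument for the small-rank exceptions $\mathrm{PSL}_2,\mathrm{PSL}_3,\mathrm{PSU}_3$), and the explicit conjugacy-class structure for extraspecial groups. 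Once you replace the generation black boxes with these class-size estimates, your argument goes through.
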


\subsubsection{Lifting Homomorphism Tests} 
We give a general method to extend results for testability of known $\Hom(G,H)$ to those of $\Hom(\tG, H)$ in cases when $\Hom(\tG, H)$ factors through $\Hom(G,H)$. 
\[\vcenter{
\hbox{
\begin{tikzcd}
\tG \arrow[d, two heads, "\pi"] \arrow[rd, dashed, " \forall\tilde{\varphi}"] &  \\
 G \arrow[r,"\exists! \varphi" below] & H
 \end{tikzcd}
}}\]	
For instance, when $H$ is abelian, every homomorphism from $G$ to $H$ factors through an \enquote{abelianization} of $G$, which is defined as $G/[G,G]$ where $[G,G]$ is the subgroup generated by $\angles{xyx^{-1}y^{-1} \mid x,y \in G}$. This also works when we have a more general structure like a \textit{Lie algebra}. We quickly define the notion of a character  over a Lie algebra.

\paragraph{Lie algebra characters}

 A finite-dimensional Lie algebra, $\mathfrak{g}$, is a finite-dimensional vector space over a field $\F$ with a Lie bracket $[\cdot, \cdot ] : \fkg\times \fkg\to \fkg$ which is a bilinear map such that 
\[
 [x,x] = 0 \quad \text{and} \quad  [x,[y,z]] + [y,[z,x]] + [z,[x,y]] = 0, \;\; \forall \,  x,y,z \in \fkg.
\]

A \textit{linear character} of a Lie algebra is a linear map $\phi: \fkg\to \F$ such that  $f([x,y]) = 0$ for every $x,y \in \fkg$. Therefore, it is a linear map between vector spaces subject to an additional constraint. For example, let $\mathfrak{gl}_n(q) = \F_q^{n\times n}$, the vector space of $n\times n$-matrices. The bracket is defined as $[x,y] := xy-yx$, where the multiplication is matrix multiplication. A character of $\mathfrak{gl}_n$ is a linear map with the property that $f(xy) =f(yx)$ for every pair of matrices $x,y$. 

Character testing has also been studied in the literature~\cite{BFL03,MR15,OY16,GH17,MR24} for general groups (and more general representations).  However, to the best of our knowledge, all known results work in the $L^2$-metric, \ie the soundness guarantee is in terms of $\norm{f(x)-\phi}_2^2$. Our result gives the first character testing results for non-abelian groups in the Hamming metric.

\begin{theorem}[Lifting results]\label{theo:cyclic_1}
	For the groups/Lie algebras mentioned in~\cref{tab:main2}, one can obtain testing results by lifting from their base code. Moreover, the queries and soundness guarantees are identical to those of the base code.
\end{theorem}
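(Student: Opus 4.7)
The plan is to reduce testing of $\Hom(\tG,H)$ directly to the already-established test of $\Hom(G,H)$, exploiting the bijection $\varphi \leftrightarrow \tilde\varphi := \varphi \circ \pi$ between $\Hom(G,H)$ and $\Hom(\tG,H)$ that the factorization hypothesis supplies (here $\pi:\tG \twoheadrightarrow G$ is the surjective map onto either the abelianization $G/[G,G]$ or, in the Lie-algebra case, the Lie abelianization $\fkg^{\mathrm{ab}} := \fkg/[\fkg,\fkg]$). The lifted test is $\test_k(\tG,H,\tilde\cD_k)$, where $\tilde\cD_k$ on $\tG^k$ is the pullback of $\cD_k$: first draw $(x_1,\dots,x_k)\sim \cD_k$, then independently and uniformly draw $\tilde x_i \in \pi^{-1}(x_i)$. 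The query complexity is thus unchanged. Completeness is immediate, since any $\tilde f \in \Hom(\tG,H)$ equals $\varphi \circ \pi$ for a unique $\varphi$, so $\tilde f(\tilde x_i) = \varphi(x_i)$ on every draw.

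For soundness the key observation is that the base analyses from \cref{tab:main} are \emph{multilinear in the agreement indicators}. Applying the union bound, unfolding $\tilde\cD_k$, and using independence of the fiber samples yields
\[
\delta_{\mathrm{lifted}}(\tilde f) ~\leq~ \sum_{\tilde\varphi \in \Hom(\tG,H)} \Pr{(\tilde x_i)\sim \tilde\cD_k}{\tilde\varphi(\tilde x_i) = \tilde f(\tilde x_i)~\forall i} ~=~ \sum_{\varphi \in \Hom(G,H)} \E_{(x_i)\sim \cD_k}\!\left[\prod_{i=1}^{k} q_{x_i,\varphi}\right],
\]
where $q_{x,\varphi} := \Pr{\tilde x \sim \pi^{-1}(x)}{\tilde f(\tilde x) = \varphi(x)}$ is a $[0,1]$-valued weight on $G$. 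Since $\pi$ has uniform fibers of size $|\ker\pi|$, we have $\E_{x\sim G}[q_{x,\varphi}] = \agr(\tilde f, \tilde\varphi)$. The analogous expression for the base test on $f:G\to H$ is identical with $\indi\{\varphi(x)=f(x)\}$ in place of $q_{x,\varphi}$, and the base bound is obtained by computing this product-expectation as a polynomial in $\{\agr(f,\varphi)\}_\varphi$ using only the distributional structure of $\cD_k$ and group-theoretic constants. Because that derivation is multilinear in the per-element indicators, substituting the fractional weight $q_{x,\varphi}$ (which has the same mean as the indicator it replaces) produces the identical bound for $\tilde f$ in terms of $\{\agr(\tilde f,\tilde\varphi)\}_{\tilde\varphi}$, so the soundness function $\ep(\delta)$ carries over verbatim.

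The main obstacle is verifying the multilinear/fractional-substitution property for each base analysis: one must check that the base proof never uses $\indi^2 = \indi$ or similar nonlinear features in a way that would break under replacement by a $[0,1]$-valued weight of the same mean. For the tests constructed earlier in the paper this holds by design, since the proofs proceed entirely by computing moments of the form $\E_{(x_i)\sim\cD_k}[\prod_i \indi\{\varphi(x_i)=f(x_i)\}]$ and manipulating them as polynomials in the marginal quantities $\agr(f,\varphi)$, with correction terms depending only on the support of $\cD_k$. The Lie-algebra case is entirely analogous with the linear surjection $\pi:\fkg\twoheadrightarrow \fkg^{\mathrm{ab}}$, reducing character testing for $\fkg$ to Kiwi's vector-space test (\cref{theo:vspace}); this is the precise form of \enquote{lifting} announced in~\cref{tab:main2}.
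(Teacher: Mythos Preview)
Your lifted test (the pullback of $\cD_k$ along $\pi$) is exactly the paper's lifted test, so completeness and query complexity are fine. The gap is in your soundness argument, and it is a real one.

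You claim that the base analysis computes $\sum_{\varphi}\E_{(x_i)\sim\cD_k}\bigl[\prod_i \indi\{f(x_i)=\varphi(x_i)\}\bigr]$ as a polynomial in $\{\agr(f,\varphi)\}_\varphi$, and that therefore the \enquote{multilinear substitution} $\indi\rightsquigarrow q_{x,\varphi}$ (same mean) preserves the bound. Both parts of this are wrong. First, the base analysis does \emph{not} work with $\E_{\cD_k}[\prod_i\indi]$; it works with $\sum_\varphi\agr(f,\varphi)^k=\sum_\varphi\E_{x_1,\dots,x_k\text{ iid uniform}}[\prod_i\indi]$ and then relates this, via the exact $N$-to-one identity of \cref{lem:alpha_k}, to $\delta_k$. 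Second, your expression $\sum_\varphi\E_{\cD_k}[\prod_i q_{x_i,\varphi}]$ is not determined by the marginals $\E_x[q_{x,\varphi}]=\agr(\tilde f,\tilde\varphi)$ alone, because under $\cD_k$ the coordinates $x_1,\dots,x_k$ are correlated (they are weighted by $|\ker\Gamma_{\vec x}|$), so the product does not factor. Concretely, one computes $\sum_\varphi\E_{\cD_k}[\prod_i\indi]=\gamma_k^{-1}\sum_{\vec x}|\ker\Gamma_{\vec x}|^{2}\,\indi\{f(\vec x)\in\sfH_{\vec x}\}$, which is neither $\delta_k$ nor a function of the agreements. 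Your union-bound step is likewise not an equality: it overcounts each passing tuple by exactly $|\ker\Gamma_{\pi(\vec{\tilde x})}|$.

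The paper's route is simpler and avoids all of this. It applies the framework identity directly to $\LiftHom(\tG,H)$: since $\ker(\tilde\Gamma_{\vec{\tilde x}})=\ker(\Gamma_{\pi(\vec{\tilde x})})$ (\cref{lem:lift_gamma}), one gets the \emph{exact} equality $\sum_{\tilde\varphi}\agr(\tilde f,\tilde\varphi)^k=\delta_k^{\text{lifted}}\cdot\gamma_k(G,H)/|G|^k$ (and the analogous refined identity for the vector-space case, \cref{claim: Vexpression2}). This is literally the same right-hand side as in the base case, so the remainder of each base proof is reused verbatim---no union bound, no fractional substitution, no appeal to multilinearity. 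In short: lift the identity, not the inequality.
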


\begin{table}[ht]
	\begin{center}
		\small
		\begin{tabular}{cccc}
			\toprule
		Result &	$G$ & $H$ & Base Code 
			\\[2pt]
			\midrule
\multirow{2}{*}{{\cref{cor:lift_lie}}}		&  Any finite group & \multirow{2}{*}{$\F_p$} &
		\multirow{2}{*}{ $\Hom(\F_p^{n}, \F_p)$  } 	\\[2pt]
				&  Any Lie Algebra over $\F_p$  &  &   	\\[2pt]
		\midrule
	{\cref{theo:GL_n}}		&  $\GL_n(q), \;q \neq 2$  & $\F^*_q$	& $\Hom(\Z_{q-1}, \Z_{q-1})$ \\
			 \midrule
	 	{\cref{theo:gln}}		&  $\mathfrak{gl}_n(q)$  & $\F_q$ & $\Hom(\Z_{q}, \Z_{q})$	\\[2pt]
			\hline\\
		\end{tabular}
			\vspace{-0.3em}
\begin{minipage}{0.8\linewidth}
\centering \small
Note: For the first result, $n$ is the $p$-rank of $G$, \ie the $p$-rank of its abelianization, or the rank of the Lie algebra (see~\cref{fact:lie_ab}).
\end{minipage}
\caption{A summary of our lifting results.}\label{tab:main2}
	\end{center}
\end{table}
\vspace{-1em}

\begin{remark}
In coding theory terms, the lifted homomorphism code (up to permutation) is the base code tensored with the repetition code of length $m = (\abs{\ker(\pi)})$. Thus, one could alternatively design a test from this perspective, or appeal to results on local testability of tensor codes such as~\cite{DSW06}. However, our approach yields the result easily by allowing us to mechanically reuse the analysis of the base code. 	
\end{remark}

\subsection{Technical Overview}\label{sec:tech_over}

For a function $f$ and a homomorphism $\phi$, let $\agr(f, \phi)$ be the agreement between these functions, \ie the fraction of inputs on which they agree. We wish to estimate $ \max_{\phi} \agr(f, \phi)$. To do this, we will define a distribution $\cal{F}$ on $\Hom(G,H)$. Clearly,
\begin{equation}
\label{eq:approx} \max_{\phi\in \Hom(G,H)} \agr(f, \phi) ~\geq~  \Ex{\phi\sim \cal{F}}{\agr(f, \phi)}	\;.
\end{equation}

For this approximation to be useful, the distribution must have a large mass on homomorphisms that agree significantly with $f$. A natural choice for such a distribution is  $\Pr{\phi} \propto \agr(f, \phi)^k$ for some positive integer $k$. Note that this is general and agnostic to the choice of groups (or even the fact that the code is a homomorphism code).


This expectation then becomes:
\[ \Ex{\phi\sim \cal{F}}{\agr(f, \phi)} ~=~ \frac{\sum_\phi \agr(f, \phi)^{k+1}}{\sum_\phi \agr(f, \phi)^{k}} \;.
\] 

The next step is to estimate this expectation via a test that queries $f$ at only a few points. We do this by reinterpreting the expression for the $k^{\mathrm{th}}$ powers algebraically, using the knowledge that the code is a homomorphism code. The key point of the framework is that after this reinterpretation, the definition of a test pops quite intuitively, such that 
\begin{equation}
\label{eq:test_k} {\sum_{\phi\in \Hom(G,H)} \agr(f, \phi)^{k}} ~\propto~  {\Pr{f \;\text{ passes }\; \mathsf{Test}_k }}. 
\end{equation} 

Once we have this, the main testing result is a simple calculation. We now explain our algebraic reinterpretation of this expression and the test that emerges from it. The key to our method is the following evaluation map\footnote{We thank MO user \textit{t3suji} whose comment on~\cite{suji} was the inspiration for us to study this map. }.

\paragraph{The evaluation map} One important way in which this framework utilizes the structure of $\Hom(G,H)$, is that for any fixed tuple $\vec{x} = (x_1, \dots, x_k) \in G^k$, there is a naturally associated evaluation map,
\[
\Gamma_{\vec{x}} : \Hom(G,H) \to H^k, \; \Gamma_{\vec{x}} (\phi) = (\phi(x_1), \dots, \phi(x_k)). \] 

While this map could be defined for any subset of functions (and not necessarily homomorphisms), for the set of homomorphisms, the map is $N$-to-one on its image, where $N = \abs{\ker(\Gamma_{\vec{x}})}$. This crucial property immediately implies that,
\[
\sum_\phi \agr(f, \phi)^{k} ~=~   \Ex{\vec{x} \sim G^k}{ \indicator{f(\vec x) \in \mathrm{Im}(\Gamma_{\vec x})} \cdot \abs{\ker(\Gamma_{\vec x})} } . \] 

The right-hand side can now be interpreted as a test wherein a tuple is sampled with a weight $\propto |\ker \Gamma_{\vec x} |$, and the indicator $\indicator{f(\vec x) \in \mathrm{Im}(\Gamma_{\vec x})}$ tests if there exists a homomorphism $\phi$ with which $f$ agrees on the entire tuple $\vec{x}$. This test trivially passes if $\mathrm{Im}(\Gamma_{\vec x})= H^k$  and thus we exclude such tuples. The final test is then, 
 \vspace{1em}
\begin{tcolorbox}[colframe=teal, colback=white, title={$\test_k(G,H)$}, label={tvan}]
	\begin{itemize}
		\item Sample $(x_1,\dots, x_k) \propto |\ker \Gamma_{\vec x} |$ subject to $\mathrm{Im}(\Gamma_{\vec x}) \neq H^k$.
		\item If $(f(x_1),\dots, f(x_k))\in \mathrm{Im}(\Gamma_{\vec x}) $: return $1$; otherwise: return $0$.
	\end{itemize}
\end{tcolorbox}

\paragraph{Analyzing and adjusting the test}

The above recipe works as it is for a given pair of groups if the following hold:
\begin{enumerate}
	\item  Our approximation, \ie ~\cref{eq:approx}, is not too weak, and
	\item  The test we have defined indeed approximates $\sum_\phi \agrHom^k$ well.
\end{enumerate}

Among the cases we analyze, this happens for cyclic (and cyclic-like) groups, and our results in~\cref{sec:cyclic} directly use this test. However, for $\Hom(\F_q^n, \F_q)$, the first condition does not hold. This is remedied by using a shifted variant of $\agr(f,\phi)$. Moreover, verifying the second point, \ie checking if \cref{eq:test_k} holds can be difficult in general, and another trick we employ is to define the test on a subset of all $k$-tuples that makes the analysis more manageable. We wish to emphasize that~\hyperref[tvan]{$\test_k(G,H)$} gives a starting point from which to derive a test for a general pair of groups. 


%
%
%
%
%

\subsection{Outline} 
We start in \cref{sec:prelim} by summarizing basic definitions and 
some of the notation used throughout the paper. The general testing framework developed in 
~\cref{sec:test} captures the core methodology that is used throughout the paper to prove our main results. 

As our first application, we use the proposed framework in~\cref{sec:cyclic} to establish the testing and list decoding results for cyclic groups. In particular,  in~\cref{subsec:cyclic_bounded}, we prove the testing result, ~\cref{thm:cyclic_bounded}, for $G=\Z_{p^r}$ and $H=\text{abelian group of bounded } p$-rank; here, we also prove the list decoding theorem, ~\cref{theo:list_size}. In \cref{subsec:cyclic_gen}, we generalize to the case when $G$ and $H$ are arbitrary cyclic groups. 

We focus on vector spaces in~\cref{sec:vspace}. The main result of this section is~\cref{theo:vspace} that allows us to test functions from $\F_q^n \to \F_{q}$. In the same section, we also derive a testing result (\cref{theo:fq_vspace}) for $\Hom(\F_{q},\F^n_{q})$. 

In~\cref{sec:auto} and ~\cref{sec:beyond}, we consider the non-abelian setting. \cref{sec:auto} focuses on testing for closeness to an automorphism or an inner automorphism. We look at dihedral groups (\cref{thm:dihedral_aut}), symmetric group, quasirandom groups, and more generally, finite simple groups (\cref{theo:main_aut2}). Finally, in~\cref{sec:beyond}, we prove a general lifting theorem. This allows us to prove our character testing results (\cref{theo:GL_n,theo:gln}), and other lifted results~\cref{cor:lift_lie}. 
\subsection{Prelims}\label{sec:prelim}

 \begin{fact}[$p$-components of Abelian groups]
		\label{fact:decomp}
Every finite abelian group decomposes into $G = \oplus_p G_p$ where $G_p = \oplus_i \Z_{p^{b_i}}$ is the $p$-component of $G$. The $p$-rank of $G$ is the number of summands in $G_p$. Moreover, 
$\Hom(G,H) \cong \oplus_p\Hom(G_p,H_p)$.
\end{fact}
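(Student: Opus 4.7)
The plan is to establish the two parts in sequence: the structural decomposition of $G$ into its $p$-components together with the cyclic refinement, and then the induced isomorphism on $\Hom$ sets.

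For the decomposition $G = \oplus_p G_p$, I would define $G_p := \{g \in G : p^k g = e \text{ for some } k \geq 0\}$ (writing $G$ additively), which is a subgroup because $G$ is abelian. Writing $|G| = \prod_p p^{a_p}$ and applying B\'ezout's identity to the pairwise coprime integers $|G|/p^{a_p}$, any $g \in G$ can be written as a sum of elements lying in the respective $G_p$'s. Uniqueness follows because $G_p \cap \sum_{q \neq p} G_q = \{e\}$: an element of the intersection would simultaneously have $p$-power order and order coprime to $p$, forcing it to be trivial. This yields $G = \oplus_p G_p$ as an internal direct sum. The further refinement $G_p \cong \oplus_i \Z_{p^{b_i}}$ is the classical structure theorem for finite abelian $p$-groups, which I would invoke as a known result.

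For the $\Hom$ decomposition, the crucial observation is that every homomorphism $\phi : G \to H$ must send $G_p$ into $H_p$. Indeed, if $g \in G_p$ satisfies $p^k g = e$, then $p^k \phi(g) = \phi(p^k g) = e$, so $\phi(g) \in H_p$. Therefore, restriction gives a well-defined map
\[
\Psi : \Hom(G,H) \longrightarrow \bigoplus_p \Hom(G_p, H_p), \qquad \phi \longmapsto (\phi|_{G_p})_p.
\]
In the other direction, given a family $(\phi_p : G_p \to H_p)_p$, the universal property of the direct sum $G = \oplus_p G_p$ assembles them into a unique homomorphism $\phi : G \to H$ whose restriction to each $G_p$ is $\phi_p$. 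Checking that these constructions are mutually inverse gives the claimed isomorphism. Note that only finitely many summands are nontrivial since $G_p = \{e\}$ (and $H_p = \{e\}$) for primes $p \nmid |G| \cdot |H|$.

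There is no serious obstacle here: the statement is a direct consequence of the primary decomposition of finite abelian groups plus a routine order-preservation argument for homomorphisms. The only step requiring care is verifying that $\Psi$ is bijective, but both directions reduce to the universal property of direct sums combined with the fact that $\phi(G_p) \subseteq H_p$.
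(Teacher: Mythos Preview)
Your proposal is correct and follows the standard textbook argument. Note, however, that the paper does not actually prove this statement: it is stated as a \textbf{Fact} in the preliminaries section without proof, treated as well-known background from the structure theory of finite abelian groups. So there is nothing to compare against; your write-up simply supplies the (routine) justification the paper omits.
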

 
 \begin{fact}[Cyclic Homomorphisms]
		\label{fact:cyc_homs}
	Let $G=\Z_{p^r}$ and $H$ be any abelian group with a $p$-component $\oplus_i \Z_{p^{b_i}}$. Then, $\Hom(G,H) = \oplus_i \Z_{p^{\min(r,b_i)}}$, and thus, $\abs{\Hom(G,H)} \leq |H|$.
\end{fact}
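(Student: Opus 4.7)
The plan is to use the universal property of cyclic groups: a homomorphism $\phi \colon \Z_{p^r} \to H$ is uniquely determined by the image $h := \phi(1)$ of the generator $1$, and conversely any $h \in H$ yields a well-defined homomorphism iff the relation $p^r \cdot 1 = 0$ is respected, i.e. iff the order of $h$ divides $p^r$. Thus $\Hom(\Z_{p^r}, H)$ is naturally in bijection with the $p^r$-torsion subgroup $H[p^r] := \{h \in H : p^r h = 0\}$, and this bijection is a group isomorphism (evaluation at $1$ is clearly a homomorphism of abelian groups into $H$).

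Next I would reduce the computation of $H[p^r]$ to the $p$-component $H_p$. Using the primary decomposition $H = \bigoplus_q H_q$ from~\cref{fact:decomp}, any element $h$ can be written as $h = \sum_q h_q$ with $h_q \in H_q$, and $p^r h = 0$ is equivalent to $p^r h_q = 0$ in each $H_q$. For $q \neq p$, multiplication by $p^r$ is invertible on $H_q$, forcing $h_q = 0$, so $H[p^r] = H_p[p^r]$.

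Then I would compute $H_p[p^r]$ coordinate-by-coordinate using $H_p = \bigoplus_i \Z_{p^{b_i}}$. In the summand $\Z_{p^{b_i}}$, the $p^r$-torsion subgroup consists of those $a$ with $p^r a \equiv 0 \pmod{p^{b_i}}$, i.e. $a$ divisible by $p^{\max(b_i - r, 0)}$; this is the unique subgroup of order $p^{\min(r,b_i)}$, isomorphic to $\Z_{p^{\min(r,b_i)}}$. Assembling gives $\Hom(\Z_{p^r}, H) \cong \bigoplus_i \Z_{p^{\min(r,b_i)}}$, as claimed. The size bound then follows immediately from $\min(r, b_i) \leq b_i$:
\[
|\Hom(G,H)| = \prod_i p^{\min(r,b_i)} \leq \prod_i p^{b_i} = |H_p| \leq |H|.
\]

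There is no real obstacle here; the only step requiring a moment of care is verifying that the isomorphism $\Hom(\Z_{p^r}, H) \cong H[p^r]$ is an isomorphism of abelian groups (pointwise addition of homomorphisms corresponds to addition of images of $1$), and that the $p$-primary decomposition genuinely splits $H[p^r]$ as a direct sum (which is immediate from the fact that the torsion functor commutes with finite direct sums). Everything else is a direct unpacking.
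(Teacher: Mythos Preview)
Your proposal is correct and follows essentially the same approach as the paper: both identify a homomorphism $\phi\colon \Z_{p^r}\to H$ with the image $\phi(1)$, note that this image must have order dividing $p^r$, and then count (or describe) such elements. Your write-up is more careful than the paper's terse version---you explicitly handle the reduction to the $p$-component and the direct-sum decomposition, and you verify that the bijection $\Hom(\Z_{p^r},H)\cong H[p^r]$ is a group isomorphism---but the underlying idea is identical.
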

\begin{proof}
Any homomorphism $\phi: \Z_{p^r} \to \Z_{p^b}$, is determined by $\phi(1)$ which must have order $p^{a_\phi} \leq p^{\min(r,b)}$. For a given order $p^j$, the number of elements with order at most $j$ is $p^j$ and thus $\Hom(G,H) = \Z_p^{\min(r,b)}$.  	
\end{proof}

\noindent \textbf{Notation}
Let $f: G\rightarrow H$ be a function. For any ${\vec x} = (x_1,\dots, x_k) \in G^k$, we use the shorthand: $f(\vec x) := (f(x_1),\dots, f(x_k))$.

\section{A General Test}
\label{sec:test}
	For any,
	$\vec x\in G^k$, we define the following evaluation map:
	\[ \Gamma_{\vec x}: \Hom(G,H)\rightarrow H^k~:~\Gamma_{\vec x}(\phi)=\parens{\phi(x_1),\dots,\phi(x_k)} \;. \]
	Because $H$ is an abelian group, the set $\Hom(G,H)$ is an abelian group under pointwise multiplication. Moreover, the maps $\{ \Gamma_{\vec x}\}_{\vec x}$ are homomorphisms. We make the following important definitions that we will use throughout:  
	\[
\sfH_{\vec x} := \Ima\parens{\Gamma_{\vec x}} \leqslant H^k, \quad	\cG_k := \braces[\big]{\vec x \in G^k \mid \sfH_{\vec x}\neq H^k}, \quad\eta_k = \frac{|\cG_k|}{|G^k|}\;.
	\]
	
	
	\begin{lemma}[Rewriting the agreement]\label{lem:alpha_k} Let $G,H$ be finite abelian groups, and let $f:G\to H$ be any function. 
	\[
		\sum_{\phi\in \Hom(G,H)}  \agrHom^k ~=~ \eta_k\cdot \Ex{\vec{x} \sim \cG_k}{ \indicator{f(\vec x) \in \sfH_{\vec x}} \abs{\ker(\Gamma_{\vec x})} } + (1-\eta_k)\cdot \frac{\abs{\Hom(G,H)}}{\abs{H^k}}\;.
	\]		
	\end{lemma}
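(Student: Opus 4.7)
The plan is to expand $\agr(f,\phi)^k$ as a probability over $G^k$, interchange the sum over $\phi$ with the expectation over $\vec x$, and then reinterpret the inner sum as a fiber count of $\Gamma_{\vec x}$. Finally, I will split the expectation according to whether $\vec x \in \cG_k$ or not, using that on the complement the map $\Gamma_{\vec x}$ is surjective.

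First, since coordinates of $\vec x \sim G^k$ are independent,
\[
\agr(f,\phi)^k ~=~ \Pr_{x \sim G}[f(x) = \phi(x)]^k ~=~ \Ex{\vec x \sim G^k}{\indicator{f(\vec x) = \phi(\vec x)}}\;.
\]
Summing over $\phi$ and swapping the (finite) sum and expectation gives
\[
\sum_{\phi \in \Hom(G,H)} \agr(f,\phi)^k ~=~ \Ex{\vec x \sim G^k}{\, \abs[\big]{\{\phi : \Gamma_{\vec x}(\phi) = f(\vec x)\}}\,}\;.
\]
For a fixed $\vec x$, the set inside is the fiber of $\Gamma_{\vec x}$ over $f(\vec x)$; since $\Gamma_{\vec x}$ is a group homomorphism, this fiber is either empty (when $f(\vec x) \notin \sfH_{\vec x}$) or a coset of $\ker(\Gamma_{\vec x})$ (when $f(\vec x) \in \sfH_{\vec x}$). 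Thus
\[
\sum_{\phi} \agr(f,\phi)^k ~=~ \Ex{\vec x \sim G^k}{\, \indicator{f(\vec x) \in \sfH_{\vec x}}\cdot \abs{\ker(\Gamma_{\vec x})}\,}\;.
\]

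Now I split the expectation according to $\vec x \in \cG_k$ or $\vec x \in G^k \setminus \cG_k$. For the second case, $\sfH_{\vec x} = H^k$, so the indicator is $1$ and, by the first isomorphism theorem, $|\ker(\Gamma_{\vec x})| = |\Hom(G,H)|/|H^k|$. Hence the contribution from this part is
\[
(1 - \eta_k) \cdot \frac{\abs{\Hom(G,H)}}{\abs{H^k}}\;.
\]
The contribution from $\vec x \in \cG_k$ is $\eta_k \cdot \Ex{\vec x \sim \cG_k}{\indicator{f(\vec x)\in \sfH_{\vec x}}\cdot |\ker(\Gamma_{\vec x})|}$. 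Adding the two parts yields the claimed identity.

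There is essentially no obstacle here: the only substantive input is the observation that $\Gamma_{\vec x}$ is a group homomorphism (already noted right before the lemma), which makes the fiber either empty or of size $|\ker(\Gamma_{\vec x})|$, and the trivial bookkeeping on the complement of $\cG_k$. The argument is a direct unpacking of the definitions and does not require any hypotheses beyond finiteness of $G$ and $H$ and abelianness (which is only used so that $\Hom(G,H)$ is itself a group, making the $N$-to-$1$ property from the technical overview applicable).
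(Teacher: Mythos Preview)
Your proof is correct and follows essentially the same approach as the paper: expand $\agr(f,\phi)^k$ as an expectation over $G^k$, swap the sum over $\phi$ inside, use that each nonempty fiber of the homomorphism $\Gamma_{\vec x}$ is a coset of its kernel, and then split according to whether $\sfH_{\vec x}=H^k$. The only cosmetic difference is that you invoke the first isomorphism theorem explicitly to compute $|\ker(\Gamma_{\vec x})|$ on the complement of $\cG_k$, whereas the paper leaves this implicit.
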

	\begin{proof}
	By expanding the definition, and using the fact that the expectation of independent samples is a product, we get,
		\begin{align*}
		  \agrHom^k ~&=~ \parens[\Big]{\Ex{x\sim G}{\indicator{f(x) = \phi(x)}}}^k\\		  		 ~&=~ \prod_{i=1}^k{\Ex{x_i \sim G}{\indicator{f(x_i) = \phi(x_i)}}}\\
		  ~&=~ {\Ex{\vec{x} \sim G^k}{\indicator{f(\vec x) = \phi(\vec x)}}}.
		\end{align*}
Now, we sum over the homomorphisms to get,
	\begin{align*}
	\sum_\phi	 \agrHom^k ~&=~ \sum_\phi{\Ex{\vec{x} \sim G^k}{\indicator{f(\vec x) = \phi(\vec x)}}}.\\
	~&=~   \Ex{\vec{x} \sim G^k}{ \sum_\phi \indicator{f(\vec x) = \phi(\vec x)} }.	\\
			~&=~   \Ex{\vec{x} \sim G^k}{ \indicator{f(\vec x) \in \sfH_{\vec x}} \abs{\ker(\Gamma_{\vec x})} }.
\end{align*}
The last equality is a consequence of the fact that if an element $y \in \Hx$, then it is the image of exactly $\abs{\ker((\Gamma_{\vec x}))}$ many homomorphisms. Now, if $\Hx = H^k$, then the indicator is always $1$, and we will separate those terms out. Then, we get, 
\begin{align*}
	\sum_\phi	 \agrHom^k	~&=~   \Ex{\vec{x} \sim G^k}{ \indicator{f(\vec x) \in H_x} \abs{\ker(\Gamma_{\vec x})} } \\
	~&=~ \eta_k\cdot \Ex{\vec{x} \sim \cG_k}{ \indicator{f(\vec x) \in H_x} \abs{\ker(\Gamma_{\vec x})} } + (1-\eta_k)\cdot \frac{\abs{\Hom(G,H)}}{\abs{H^k}}. \qedhere
\end{align*}
	\end{proof}

\paragraph{General Test} Motivated by the non-constant term in the expression, we define the distribution on $\cG_k$ which samples  $x \propto |\ker(\Gamma_{\vec x})|$, \ie 
\[
\cD_{\ker}(\vec x) ~:=~ 
 \frac{ |\ker(\Gamma_{\vec x})| }{\sum_{\vec x \in \cG_k} | \ker(\Gamma_{\vec x})|}\;\;.\]

	\begin{tcolorbox}[colframe=teal, colback=white, title={$\test\_ {\ker}_k (G,H)$}, label=test:generic]
		\label{box:test}
	\begin{itemize}
		\item Sample $\vec x\sim \cD_{\ker}$, \ie $\vec{x}\in \cG_k \propto \abs{\ker(\Gamma_{\vec x})}$.
		\item If $f(\vec x)\in \sfH_{\vec x}$: return $1$; otherwise: return $0$.
	\end{itemize}
\end{tcolorbox}


\begin{corollary}[Test passing Probability]\label{cor:test_prob}
	Let $f: G\to H$ and $\delta_k$ be the probability that $f$ passes the \hyperref[test:generic]{$\test\_ {\ker}_k (G,H)$}. Then,
	\[
	\delta_k ~=~ \frac{\Ex{\vec{x} \sim \cG_k}{ \indicator{f(\vec x) \in H_x} \abs{\ker(\Gamma_{\vec x})} }}{ \Ex{\vec{x} \sim \cG_k}{\abs*{\ker(\Gamma_{\vec x})}}}\;\;.\]
	And therefore,
	\[
	\sum_\phi \agrHom^k ~=~  \delta_k\cdot  \eta_k\cdot \Ex{\vec{x} \sim \cG_k}{\abs*{\ker(\Gamma_{\vec x})}} ~+~  (1-\eta_k)\cdot\frac{\Hom(G,H)}{|H|^k}.
	\]
\end{corollary}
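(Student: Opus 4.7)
The proof will be essentially a direct unpacking of the definitions followed by substitution into \cref{lem:alpha_k}, so I expect no real obstacle — the content of the corollary is just a bookkeeping reformulation of the lemma in terms of the explicit sampling distribution $\cD_{\ker}$.

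First I would compute $\delta_k$ directly from the description of \hyperref[test:generic]{$\test\_\ker_k(G,H)$}. Since the test returns $1$ exactly when the sampled tuple $\vec x$ satisfies $f(\vec x)\in \sfH_{\vec x}$, we have
\[
\delta_k ~=~ \sum_{\vec x\in \cG_k}\cD_{\ker}(\vec x)\cdot \indicator{f(\vec x)\in \sfH_{\vec x}} ~=~ \frac{\sum_{\vec x\in \cG_k}|\ker(\Gamma_{\vec x})|\cdot \indicator{f(\vec x)\in \sfH_{\vec x}}}{\sum_{\vec x\in \cG_k}|\ker(\Gamma_{\vec x})|}\,.
\]
Dividing numerator and denominator by $|\cG_k|$ converts each sum into an expectation over the uniform distribution on $\cG_k$, which yields exactly the stated closed form for $\delta_k$.

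Next I would plug this into \cref{lem:alpha_k}. That lemma expresses $\sum_\phi \agrHom^k$ as the sum of a term proportional to $\Ex{\vec{x} \sim \cG_k}{ \indicator{f(\vec x) \in \sfH_{\vec x}}\cdot \abs{\ker(\Gamma_{\vec x})}}$ and a $\vec{x}$-independent constant. By the expression for $\delta_k$ derived above, the first expectation equals $\delta_k\cdot \Ex{\vec{x} \sim \cG_k}{|\ker(\Gamma_{\vec x})|}$. Substituting gives
\[
\sum_\phi \agrHom^k ~=~ \delta_k\cdot \eta_k\cdot \Ex{\vec{x}\sim \cG_k}{|\ker(\Gamma_{\vec x})|} ~+~ (1-\eta_k)\cdot\frac{|\Hom(G,H)|}{|H|^k}\,,
\]
which is precisely the second claim. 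The only minor care point is ensuring the denominator $\sum_{\vec x\in \cG_k}|\ker(\Gamma_{\vec x})|$ is nonzero so that $\cD_{\ker}$ is well defined; this is immediate because the trivial homomorphism contributes to every kernel, so $|\ker(\Gamma_{\vec x})|\geq 1$ for all $\vec x$, and the interesting regime is exactly when $\cG_k$ is nonempty (otherwise $\eta_k = 0$ and the statement reduces to the trivial constant term of \cref{lem:alpha_k}).
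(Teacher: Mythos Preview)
Your proposal is correct and is exactly the intended argument: the paper does not even write out a proof for this corollary, since both displayed identities follow immediately from the definition of $\cD_{\ker}$ and direct substitution into \cref{lem:alpha_k}, precisely as you describe.
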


The following claim merely gives an alternate way to compute the term on the RHS of the above equation, \ie the expected kernel size.

\begin{claim}
	\label{clm:gamma1} 
	For any finite groups $G,H$, and $k\geq 1$, we have	
	\[
	\gamma_k ~:=~ \sum_{\vec x \in G^k}{\abs*{\ker(\Gamma_{\vec x})}} ~~= \sum_{\phi \in \Hom(G,H)}{\abs{\ker(\phi)}^k}  . \]
\end{claim}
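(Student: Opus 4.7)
The plan is a straightforward double counting argument (Fubini), swapping the order of summation between the $\vec x$ variable and the homomorphism $\phi$. The key observation is that a homomorphism $\phi \in \Hom(G,H)$ lies in $\ker(\Gamma_{\vec x})$ precisely when $\phi(x_i) = 1_H$ for every $i \in [k]$, so we can rewrite the kernel size as a sum of products of indicators that factor nicely once summed over $\vec x \in G^k$.

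More concretely, I will begin by recording the identity
\[
|\ker(\Gamma_{\vec x})| ~=~ \sum_{\phi \in \Hom(G,H)} \prod_{i=1}^k \indicator{\phi(x_i) = 1_H},
\]
which follows directly from the definition of $\Gamma_{\vec x}$. Substituting this into the definition of $\gamma_k$ and exchanging the two (finite) sums gives
\[
\gamma_k ~=~ \sum_{\phi \in \Hom(G,H)} \sum_{\vec x \in G^k} \prod_{i=1}^k \indicator{\phi(x_i) = 1_H}.
\]
Since the indicator $\indicator{\phi(x_i) = 1_H}$ depends only on the coordinate $x_i$, the inner sum over $\vec x \in G^k$ factors as
\[
\sum_{\vec x \in G^k} \prod_{i=1}^k \indicator{\phi(x_i) = 1_H} ~=~ \prod_{i=1}^k \sum_{x_i \in G} \indicator{\phi(x_i) = 1_H} ~=~ |\ker(\phi)|^k.
\]
Plugging this back yields the desired identity
\[
\gamma_k ~=~ \sum_{\phi \in \Hom(G,H)} |\ker(\phi)|^k.
\]

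There is essentially no obstacle; the step worth highlighting is the legitimacy of swapping the two finite sums (which is automatic) and the fact that the indicators factor across coordinates. It is worth noting in passing that the identity can also be viewed as counting the set $\{(\phi,\vec x) : \phi(x_i) = 1_H \text{ for all } i\}$ in two different ways — first by fixing $\vec x$ and varying $\phi$, and then by fixing $\phi$ and varying $\vec x$ — which is the combinatorial content of the argument.
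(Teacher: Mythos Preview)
Your proof is correct and follows essentially the same approach as the paper: expand $|\ker(\Gamma_{\vec x})|$ as a sum of indicators over $\phi$, swap the order of summation, and observe that the inner sum factors as $|\ker(\phi)|^k$. If anything, your write-up makes the factoring step more explicit than the paper's.
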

\begin{proof} The proof is a simple use of the definition of $\Gamma_{\vec x}$ and switching the order of summation. 
\begin{align*}
		\sum_{ {\vec x} \in G^k }   |\ker(\Gamma_{\vec x})| 
		&~=~\sum_{ {\vec x} \in G^k } \sum_{\phi} \indi_{\{\phi(x_i)=0~\forall i \}}
		\\&~=~ \sum_{\phi} \sum_{\vec x \in G^k } \indi_{\{\phi(x_i)=0~\forall i \}}
		\\&~=~ \sum_\phi \abs{\ker(\phi)}^k . \qedhere
	\end{align*}
\end{proof}

\paragraph{Summarizing the expressions} We now summarize the expressions we need for ready reference in our proofs.
\begin{align}
\label{eq:max_approx}\max_\phi \agrHom ~&\geq~  \frac{\sum_\phi \agrHom^{k+1}}{\sum_\phi \agrHom^{k}}\\
\label{eq:1}\sum_\phi \agrHom^{k} ~&=~   \Ex{\vec{x} \sim G^k}{ \indicator{f(\vec x) \in \sfH_{\vec x}} \abs{\ker(\Gamma_{\vec x})} }\\
\label{eq:2}~&=~\eta_k\cdot \Ex{\vec{x} \sim \cG_k}{ \indicator{f(\vec x) \in {\sfH}_x} \abs{\ker(\Gamma_{\vec x})} } + (1-\eta_k)\cdot \frac{\abs{\Hom(G,H)}}{\abs{H^k}}\\
\label{eq:3}~&=~\delta_k \cdot  \frac{\gamma_k}{|G|^k} + (1-\eta_k)\cdot \frac{\abs{\Hom(G,H)}}{\abs{H^k}}.
\end{align}




\section{Cyclic Groups}\label{sec:cyclic}
\subsection{Cyclic groups of prime power order to abelian groups of small rank.}
\label{subsec:cyclic_bounded}
We will first start with both the domain being a cyclic group of prime power order. For such groups, the expression from~\cref{cor:test_prob} can be further simplified, as $\eta_k = 1$. 



\begin{observation}\label{lem:eta_one}
Let $G= \Z_{p^r} $ and $H$ be any abelian group. Then, for any $k \geq 2$, $\eta_k(G,H) = 1$ and thus, for any $f: G\to H$,
\[
\sum_{\phi\in\Hom(G,H)} \agrHom^k ~=~  \delta_k(f)\cdot \frac{\gamma_k }{|G|^k}\;\;.
\]	
\end{observation}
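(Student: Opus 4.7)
The plan is to prove the first assertion that $\eta_k(G,H) = 1$, since the displayed equation then follows immediately by substituting $\eta_k = 1$ into \cref{eq:3}, which collapses the constant term $(1-\eta_k)\cdot\abs{\Hom(G,H)}/\abs{H^k}$ to zero. So the entire proof reduces to verifying that for every $\vec x \in G^k$, the image $\sfH_{\vec x} = \Ima(\Gamma_{\vec x})$ is a proper subgroup of $H^k$, which is the definition of $\vec x \in \cG_k$.

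The key step is a size comparison. Since $\Gamma_{\vec x}: \Hom(G,H) \to H^k$ is a group homomorphism (by the discussion preceding \cref{lem:alpha_k}), the first isomorphism theorem gives
\[
\abs{\sfH_{\vec x}} ~=~ \frac{\abs{\Hom(G,H)}}{\abs{\ker(\Gamma_{\vec x})}} ~\leq~ \abs{\Hom(G,H)}\;.
\]
Now I would invoke \cref{fact:cyc_homs}, which gives $\abs{\Hom(\Z_{p^r}, H)} \leq \abs{H}$ for any abelian $H$. Combining these yields $\abs{\sfH_{\vec x}} \leq \abs{H} < \abs{H}^k = \abs{H^k}$ whenever $k \geq 2$ and $\abs{H} \geq 2$. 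Hence $\sfH_{\vec x}$ is a proper subgroup of $H^k$, so $\vec x \in \cG_k$ for every $\vec x \in G^k$, giving $\eta_k = |G^k|/|G^k| = 1$.

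The trivial edge case $\abs{H} = 1$ can be handled separately: in that case the unique function $f$ is the unique (zero) homomorphism, both sides of the displayed identity equal $1$ under the natural conventions, and we may even exclude this case as uninteresting for testing. I do not expect any real obstacle — the proof is essentially a bookkeeping step that exploits the structural feature of cyclic domains: a homomorphism out of $\Z_{p^r}$ is determined by the image of a single generator, so the number of homomorphisms is bounded by $\abs{H}$, which is vanishingly small compared to $\abs{H^k}$ and forces the evaluation map to always have a proper image.
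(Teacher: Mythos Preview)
Your proof is correct and follows essentially the same approach as the paper: both argue that $\abs{\sfH_{\vec x}} \leq \abs{\Hom(G,H)} \leq \abs{H} < \abs{H}^k$ for $k\geq 2$ via \cref{fact:cyc_homs}, hence $\Gamma_{\vec x}$ is never surjective and $\eta_k = 1$, after which the displayed identity drops out of \cref{eq:3}. Your version is in fact a bit more careful, treating general abelian $H$ directly and noting the degenerate case $\abs{H}=1$, whereas the paper's write-up implicitly specializes to $H=\Z_{p^s}$.
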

\begin{proof}
From~\cref{fact:cyc_homs}, we have $\abs{\Hom(G,H)} = p^{\min(r,s)} \leq |H| < H^k$, for $k \geq 2$. Therefore, the map $\Gamma_{\vec{x}}$ cannot be surjective for any $\vec{x}\in G^k$, and so, $\cG_k = G^k$ and $\eta_k(G,H) = 1$. Now, one can plug this in~\cref{eq:3}. 	
\end{proof}

%

\paragraph{Bounding $\gamma_k$} From the expression it is clear that the only quantity we need to analyze is $\gamma_k$ which we will do via~\cref{clm:gamma1}. 

%

\begin{lemma}[Cyclic to Abelian]
	Let $G = \Z_{p^r}$ and $H$ any abelian group. Then, 
\[
\gamma_k(G,H) ~=~ \abs[\big]{\Hom (\Z_{p^{r}}, H)} + \parens{ 1 - p^{-k}}\sum_{a=1}^{r}  p^{ak} \cdot \abs[\big]{\Hom (\Z_{p^{r-a}}, H)}. \]
\end{lemma}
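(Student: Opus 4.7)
The plan is to invoke \cref{clm:gamma1}, which rewrites $\gamma_k(G,H) = \sum_{\phi \in \Hom(G,H)} |\ker(\phi)|^k$, and then exploit the fact that subgroups of the cyclic group $G = \Z_{p^r}$ are totally ordered. Explicitly, every subgroup of $G$ is of the form $K_c := p^c \Z_{p^r}$ for some $0 \leq c \leq r$, and $|K_c| = p^{r-c}$. So each homomorphism $\phi$ has kernel $K_{c(\phi)}$ for a unique $c(\phi)$, and the sum can be partitioned by kernel:
\[
\gamma_k(G,H) ~=~ \sum_{c=0}^{r} N_c \cdot p^{k(r-c)}, \quad \text{where } N_c := \#\{\phi : \ker(\phi) = K_c\}.
\]

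Next, I would count the $N_c$'s by a first-isomorphism-theorem argument. The set of homomorphisms $\phi:\Z_{p^r}\to H$ with $\ker(\phi) \supseteq K_c$ is in bijection, via $\phi \mapsto \phi \circ \pi_c^{-1}$, with $\Hom(\Z_{p^r}/K_c, H) = \Hom(\Z_{p^c}, H)$. Writing $M_c := |\Hom(\Z_{p^c}, H)|$, we therefore have the cumulative count $\sum_{c'\leq c} N_{c'} = M_c$, giving
\[
N_c ~=~ M_c - M_{c-1} \;\; \text{for } c\geq 1, \qquad N_0 ~=~ M_0 = 1,
\]
(with the convention $M_{-1}:=0$).

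Substituting back and telescoping is the last step. I would write
\[
\gamma_k ~=~ \sum_{c=0}^{r} (M_c - M_{c-1}) p^{k(r-c)} ~=~ \sum_{c=0}^{r} M_c p^{k(r-c)} - p^{-k}\sum_{c=0}^{r-1} M_c p^{k(r-c)},
\]
then isolate the $c = r$ term of the first sum (which equals $M_r = |\Hom(\Z_{p^r},H)|$) and combine the remaining two sums over $c = 0, \ldots, r-1$ with coefficient $(1 - p^{-k})$. Re-indexing by $a := r - c$ converts the range $0 \leq c \leq r-1$ to $1 \leq a \leq r$ and gives exactly the claimed expression. The only potentially subtle bookkeeping is keeping track of the endpoints (in particular the $c = 0$ term where one should not overcount), but this is handled cleanly by the $M_{-1} = 0$ convention, so no serious obstacle is expected.
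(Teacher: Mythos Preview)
Your proposal is correct and follows essentially the same approach as the paper: partition $\sum_\phi |\ker(\phi)|^k$ by the (totally ordered) kernels of $\Z_{p^r}$, count homomorphisms with a given kernel via the first isomorphism theorem, and rearrange the resulting telescoping sum. The only cosmetic difference is that the paper indexes kernels by their size exponent $a$ (so $\ker(\phi)=\Z_{p^a}$) while you index by the quotient exponent $c$ (so $\ker(\phi)=K_c$ with $|K_c|=p^{r-c}$), which is just the substitution $a=r-c$.
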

\begin{proof}
The kernel of any homomorphism $\phi:G\to H$ is $\Z_{p^a}$ for $0 \leq a \leq r$. We thus only need to count the number of homomorphisms with kernel exactly $\Z_{p^a}$. To start we observe that the following sets are in bijection,
\[
\braces{\phi: \Z_{p^a } \subseteq  \ker(\phi)} ~\simeq~    \Hom (\Z_{p^r}/\Z_{p^a}, H) ~\simeq~    \Hom (\Z_{p^{r-a}}, H) .
\] 

Using this we can count the homomorphisms with kernel exactly $\Z_{p^a}$ by excluding those that have a larger kernel, \ie $\Z_{p^{a+1}}$. Thus, for any $ a < r$: 
\[
\braces{\phi: \Z_{p^a } =  \ker(\phi)} ~\simeq~    \Hom (\Z_{p^{r-a}}, H) \setminus \Hom (\Z_{p^{r-a-1}}, H) .
\] 
Using the above bijection, we can perform our computation quite easily as follows,
\begin{align*}
\sum_{\phi\in\Hom(G,H)}|\ker(\phi)|^k ~&=~ \sum_{a=0}^r \sum_{\ker{\phi} = \Z_{p^a}} p^{ak} \\
	~&=~ \sum_{a=0}^{r}  p^{ak} \cdot \abs[\big]{\braces{\phi: \Z_{p^a } =  \ker(\phi)}} \\
		~&=~ p^{rk} + \sum_{a=0}^{r-1}  p^{ak} \cdot \brackets[\big]{\abs[\big]{\Hom (\Z_{p^{r-a}}, H)} -  \abs[\big]{\Hom (\Z_{p^{r-a-1}}, H)}}.
		\end{align*}
We can now rearrange the terms on the right-hand side, to obtain:
\begin{align*}
\gamma_k ~&=~ p^{rk} + \sum_{a=1}^{r-1}  \parens{p^{ak} - p^{(a-1)k}} \cdot \abs[\big]{\Hom (\Z_{p^{r-a}}, H)} + \abs[\big]{\Hom (\Z_{p^{r}}, H)} - p^{(r-1)k}\abs[\big]{\Hom (\Z_{p^{0}}, H)}\\
~&=~  \abs[\big]{\Hom (\Z_{p^{r}}, H)} + \sum_{a=1}^{r}  \parens{p^{ak} - p^{(a-1)k}} \cdot \abs[\big]{\Hom (\Z_{p^{r-a}}, H)}\\
~&=~  \abs[\big]{\Hom (\Z_{p^{r}}, H)} + \sum_{a=1}^{r}  p^{ak}\parens{ 1 - p^{-k}} \cdot \abs[\big]{\Hom (\Z_{p^{r-a}}, H)}. \qedhere
\end{align*}
\end{proof}

Note that due to~\cref{fact:decomp}, any homomorphism maps $\Z_{p^r}$ only to a \textit{$p$-group}. Since, $\gamma_k$ only concerns homomorphisms, it is only dependent on the $p$-component $H_p := \oplus_{i=1}^t \Z_{p^{b_i}}$. Here, $t$ is the $p$-rank of $H$. We now explicitly bound $\gamma_k$ for $k$ larger that the $p$-rank of $H$.

%

\begin{corollary}\label{cor:boundedrank}
Let $G = \Z_{p^r}$, and let $H$ be an abelian group of $p$-rank $t$, and let $k >  t$. Then, 
	\[
(1- p^{-k})\cdot p^{kr}   ~\leq~	\gamma_k(G,H) ~\leq~  \parens[\Big]{\frac{p^{k-t}}{p^{k-t}-1}}  \cdot  p^{kr} 
	.\]
\end{corollary}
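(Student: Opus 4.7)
The plan is to bound the closed-form expression for $\gamma_k$ from the previous lemma, namely
\[
\gamma_k ~=~ \abs[\big]{\Hom (\Z_{p^{r}}, H)} + \parens{ 1 - p^{-k}}\sum_{a=1}^{r}  p^{ak} \cdot \abs[\big]{\Hom (\Z_{p^{r-a}}, H)},
\]
by controlling the terms $\abs{\Hom(\Z_{p^s}, H)}$ using the rank hypothesis.

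First, since $H$ has $p$-rank $t$, \cref{fact:decomp,fact:cyc_homs} give $H_p = \bigoplus_{i=1}^t \Z_{p^{b_i}}$ and
\[
\abs[\big]{\Hom(\Z_{p^s}, H)} ~=~ \abs[\big]{\Hom(\Z_{p^s}, H_p)} ~=~ \prod_{i=1}^t p^{\min(s, b_i)} ~\leq~ p^{ts}.
\]
This linear-in-$s$ (in the exponent) upper bound is the key input; the rest is arithmetic.

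For the lower bound, I will simply retain the single $a=r$ summand, which equals $(1-p^{-k})\cdot p^{rk}\cdot \abs{\Hom(\Z_{p^0}, H)} = (1-p^{-k})\cdot p^{rk}$, and discard the other (nonnegative) terms.

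For the upper bound, substituting $\abs{\Hom(\Z_{p^{r-a}}, H)} \leq p^{t(r-a)}$ and dropping the factor $(1 - p^{-k}) \leq 1$ yields
\[
\gamma_k ~\leq~ p^{tr} + p^{tr}\sum_{a=1}^{r} p^{a(k-t)} ~=~ p^{tr}\sum_{a=0}^{r} p^{a(k-t)}.
\]
Since $k > t$, the ratio $p^{k-t} > 1$, and evaluating the geometric series,
\[
p^{tr}\sum_{a=0}^{r} p^{a(k-t)} ~=~ p^{tr}\cdot \frac{p^{(r+1)(k-t)}-1}{p^{k-t}-1} ~\leq~ p^{tr}\cdot \frac{p^{(r+1)(k-t)}}{p^{k-t}-1} ~=~ \frac{p^{k-t}}{p^{k-t}-1}\cdot p^{kr},
\]
which is exactly the stated bound. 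There is no substantive obstacle here; the proof is essentially a two-line geometric-series estimate once the rank hypothesis is used to control $\abs{\Hom(\Z_{p^{r-a}}, H)}$. The only care needed is checking that $k > t$ makes the geometric series convergent in the relevant sense and that the dominant term after summation is $p^{kr}$ (coming from $a = r$), which justifies why the bound becomes tight as $p^{k-t} \to \infty$.
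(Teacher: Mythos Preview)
Your proof is correct and essentially identical to the paper's: both use $\abs{\Hom(\Z_{p^{r-a}},H)}\leq p^{t(r-a)}$, take the single $a=r$ term for the lower bound, and sum the resulting geometric series (using $k>t$) for the upper bound. The paper's write-up says ``$a=0$'' for the lower bound, which is a typo for $a=r$, exactly as you have it.
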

\begin{proof}
The lower bound is directly obtained from the expression by only picking the term corresponding to $a =0$.
	From, the expression in 
	\[ \parens{ 1 - p^{-k}} \sum_{a=0}^{r}  p^{ak} \cdot \abs[\big]{\Hom (\Z_{p^{r-a}}, H)}  ~\leq~ \gamma_k(G,H) ~\leq~ \sum_{a=0}^{r}  p^{ak} \cdot \abs[\big]{\Hom (\Z_{p^{r-a}}, H)}. \]

Now, for any $H$ of $p$-rank $t$, we have from~\cref{fact:cyc_homs}:
\[ \abs[\big]{\Hom (\Z_{p^{r-a}}, H)} ~=~ \prod_{i=1}^t p^{\min(r-a,b_i)} ~\leq~ p^{(r-a)t}.\]
The upper bound then can be calculated as:
\begin{align*}
\gamma_k ~&\leq~ \sum_{a=0}^r p^{ak}p^{(r-a)t} ~=~ p^{rt}\sum_{a=0}^r p^{(k-t)a}\\
	 ~&\leq~ p^{rt}p^{(k-t)r} \parens[\Big]{1+\frac{1}{p^{k-t}-1}}. \qedhere 
\end{align*}
\end{proof}

We can now use the above calculation for $\gamma_k$ to deduce a testing result, and a (combinatorial) list decoding bound.

\begin{theorem}[Testing prime power cyclic groups to Abelian groups of bounded rank]\label{thm:cyclic_bounded}
		Let  $G=\Z_{p^r}$ be a cyclic group and $H$ be an abelian group of $p$-rank $t\geq 1$. Let $k \geq t+2$ be an integer, and let $f: G\to H$ be any function. Then if $f$ passes 
		\hyperref[test:generic]{$\test\_\ker_k(G,H)$} 
		with probability $\delta_k$, then,
		 \[
	  \parens[\Big]{\frac{(p-1)^2}{p^2}\cdot \delta_k}^{\frac{1}{k-t-1}} ~\leq~ \agr(f,\phi)  ~\leq~ \parens[\Big]{\frac{p^2}{p^2-1}\delta_k}^{\frac{1}{k}} 	 \;  . \] 
\end{theorem}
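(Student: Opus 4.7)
The plan is to exploit the exact identity
\[
\sum_{\phi} \agrHom^k ~=~ \delta_k \cdot \frac{\gamma_k}{|G|^k},
\]
which follows from \cref{cor:test_prob} together with \cref{lem:eta_one} (since $\eta_k = 1$ when $G = \Z_{p^r}$), and then to extract both directions of the claimed double inequality by pairing this identity with the two-sided control on $\gamma_k$ from \cref{cor:boundedrank}.

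For the upper bound, the argument is immediate. For any single $\phi$ one has $\agr(f,\phi)^k \leq \sum_{\phi'} \agr(f,\phi')^k$, so
\[
\agr(f,\phi)^k ~\leq~ \delta_k \cdot \frac{\gamma_k}{|G|^k} ~\leq~ \delta_k \cdot \frac{p^{k-t}}{p^{k-t}-1} ~\leq~ \delta_k \cdot \frac{p^2}{p^2-1},
\]
using the upper estimate on $\gamma_k$ from \cref{cor:boundedrank} together with $|G|^k = p^{kr}$, and then $k - t \geq 2$ for the last step. Taking $k$-th roots yields the claimed upper bound.

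For the lower bound, let $M := \max_{\phi} \agr(f,\phi)$. Since $\agrHom \in [0,1]$, for every integer $\ell \in [1, k]$ one has the pointwise comparison $\agrHom^k \leq M^{k-\ell} \agrHom^{\ell}$, and hence $\sum_{\phi} \agrHom^k \leq M^{k-\ell} \sum_{\phi} \agrHom^{\ell}$. The key design choice is $\ell = t+1$: this is the smallest $\ell$ for which the upper bound in \cref{cor:boundedrank} still applies (which needs $\ell > t$), and thereby maximises the exponent $k - \ell$ in the final estimate. For the denominator, I would use the trivial bound obtained from \cref{eq:1} by dropping the indicator:
\[
\sum_{\phi} \agrHom^{t+1} ~=~ \Ex{\vec x \sim G^{t+1}}{\indicator{f(\vec x) \in \sfH_{\vec x}} \, |\ker(\Gamma_{\vec x})|} ~\leq~ \frac{\gamma_{t+1}}{|G|^{t+1}}.
\]
Plugging in the lower bound $\gamma_k \geq (1 - p^{-k}) p^{kr}$ and the upper bound $\gamma_{t+1} \leq \frac{p}{p-1} p^{(t+1)r}$ from \cref{cor:boundedrank}, the factor $|G|^{k-t-1} = p^{(k-t-1)r}$ cancels the matching power of $p^r$, giving
\[
M^{k-t-1} ~\geq~ \frac{\delta_k \gamma_k/|G|^k}{\gamma_{t+1}/|G|^{t+1}} ~\geq~ \delta_k \cdot \frac{(1 - p^{-k})(p-1)}{p} ~\geq~ \delta_k \cdot \frac{(p-1)^2}{p^2},
\]
where the last step uses $1 - p^{-k} \geq 1 - p^{-1} = (p-1)/p$. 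Taking $(k-t-1)$-th roots finishes the proof.

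I do not anticipate any real obstacle: the only substantive choice is $\ell = t+1$, and every other step is either the identity from \cref{cor:test_prob}, the routine power-mean comparison, or the $\gamma_k$ estimates already established in \cref{cor:boundedrank}.
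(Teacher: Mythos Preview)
Your proof is correct and follows essentially the same route as the paper. The only cosmetic difference is that for the lower bound the paper telescopes the inequalities $M \geq \frac{\sum_\phi \agrHom^i}{\sum_\phi \agrHom^{i-1}}$ over $i \in [t+2,k]$, whereas you obtain the same inequality $M^{k-t-1} \geq \frac{\sum_\phi \agrHom^k}{\sum_\phi \agrHom^{t+1}}$ in one step via the pointwise comparison $\agrHom^k \leq M^{k-t-1}\agrHom^{t+1}$; both then plug in the identical $\gamma_k$ and $\gamma_{t+1}$ bounds from \cref{cor:boundedrank} and use $\delta_{t+1}\leq 1$ (equivalently, dropping the indicator) for the denominator.
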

\begin{proof}
The upper bound directly follows from~\cref{lem:eta_one} and \cref{cor:boundedrank}.  For the lower bound we use~\cref{eq:max_approx} and \cref{lem:eta_one} to get,
\begin{align*}
\max_{\phi}\, \agr(f,\phi) ~&\geq~  \frac{\sum_{\phi}{\agr(f,\phi)^i}}{\sum_{\phi}{\agr(f,\phi)^{i-1}}} ~\geq~  \frac{\delta_i \gamma_i}{|G| \delta_{i-1} \gamma_{i-1}} .	
\end{align*}

Multiplying this for $i \in [t+2, k]$, we get
\begin{align*}
\parens{\max_{\phi} \agr(f,\phi)}^{k-t-1} ~&\geq~ \parens[\Big]{\frac{1}{|G|}}^{k-t} \cdot \frac{\delta_k \gamma_k}{\delta_{t+1} \gamma_{t+1}}\\
~&\geq~ \parens[\Big]{\frac{1}{|G|}}^{k-t-1} \cdot \frac{\delta_k \gamma_k}{\gamma_{t+1}}  && [\delta_{t+1} \leq 1]\\
~&\geq~  \delta_k  \cdot \frac{(1-p^{-k})(p-1)}{p}  && [\text{Using~\cref{cor:boundedrank}}]\\
~&\geq~  \delta_k  \cdot \frac{(p-1)^2}{ p^2 }  \qedhere
\end{align*} 
\end{proof}

Using the above bound we also immediately get a list decoding bound.

\begin{theorem}[List Size Bound]\label{theo:list_size}
	Let  $G=\Z_{p^r}$ be a cyclic group and $H$ be an abelian group of $p$-rank $t\geq 1$. Let $f: G\to H$ be any function.  Then the following holds:
		\[ \abs{\{ \phi\in \Hom(G,H)  ~:~\agr(f,\phi) ~\geq~ \ep  \}} ~\leq~ \parens[\Big]{\frac{p}{p-1}} \cdot  \frac{1}{\ep^{t+1}}.\]
		In particular, we get a list size bound of $\frac{2}{\ep^2}$ for homomorphisms between cyclic groups.
\end{theorem}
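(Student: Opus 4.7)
The plan is to apply the agreement-moment identity from \cref{lem:alpha_k,cor:test_prob} at a carefully chosen value of $k$, and then extract the list size by a one-line Markov-type argument. Since $H$ has $p$-rank $t \geq 1$, I would take $k = t+1$: this is the smallest value permitted by both \cref{lem:eta_one} (which needs $k \geq 2$, satisfied since $t \geq 1$) and \cref{cor:boundedrank} (which needs $k > t$). It is also the value that balances the tradeoff between the $\gamma_k$ upper bound and the eventual $\ep^{-k}$ blowup, so larger $k$ would only weaken the final inequality.

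First I would invoke \cref{lem:eta_one} to write
\[\sum_{\phi \in \Hom(G,H)} \agr(f,\phi)^{k} ~=~ \delta_k(f) \cdot \frac{\gamma_k}{|G|^k} ~\leq~ \frac{\gamma_k}{|G|^k},\]
using the trivial bound $\delta_k(f) \leq 1$ on the test passing probability. Since $|G|^k = p^{rk}$, the upper bound from \cref{cor:boundedrank} simplifies at $k = t+1$ to
\[\frac{\gamma_k}{|G|^k} ~\leq~ \frac{p^{k-t}}{p^{k-t}-1} ~=~ \frac{p}{p-1},\]
yielding the compact moment inequality $\sum_{\phi} \agr(f,\phi)^{t+1} \leq \tfrac{p}{p-1}$.

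To finish, I would let $L = \{\phi \in \Hom(G,H) : \agr(f,\phi) \geq \ep\}$ denote the list. Each element contributes at least $\ep^{t+1}$ to the moment sum, so
\[|L| \cdot \ep^{t+1} ~\leq~ \sum_{\phi \in L} \agr(f,\phi)^{t+1} ~\leq~ \frac{p}{p-1},\]
which rearranges to the claimed bound $|L| \leq \tfrac{p}{p-1} \cdot \ep^{-(t+1)}$. For the ``in particular'' statement, when $H$ is cyclic the $p$-rank is $t = 1$, and $\tfrac{p}{p-1} \leq 2$ for every prime $p \geq 2$, giving $|L| \leq 2 \ep^{-2}$. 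I do not anticipate any real obstacle: all the heavy lifting was already carried out in \cref{cor:boundedrank}, and the only minor choice to be made is the value of $k$, which is essentially forced by the hypotheses.
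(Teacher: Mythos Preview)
Your proposal is correct and follows essentially the same argument as the paper: both set $k=t+1$, invoke \cref{lem:eta_one} and the upper bound of \cref{cor:boundedrank} to get $\sum_\phi \agr(f,\phi)^{t+1}\le \tfrac{p}{p-1}$, and then extract the list size via the obvious Markov-type inequality. Your write-up is in fact slightly more explicit than the paper's, since you spell out why $k=t+1$ is the right choice and justify the ``in particular'' statement for cyclic $H$.
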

\begin{proof}
Let $N = \abs{\{ \phi\in \hom(G,H)  ~:~\agr(f,\phi) ~\geq~ \ep  \}}$.
Then,
\begin{align*}
N \ep^{t+1} ~\leq~	\sum_\phi \alpha_\phi^{t+1} ~&=~ \frac{\delta_{t+1}\gamma_{t+1}}{|G|^{t+1}} \\ 
 ~&\leq~ \frac{p}{p-1} && [\text{Using~\cref{cor:boundedrank}}].\qedhere 
\end{align*}
\end{proof}


\subsection{Arbitrary Cyclic groups}
\label{subsec:cyclic_gen}

To handle general cyclic groups, we will use the decomposition of abelian groups into their $p$-components. 

\begin{lemma}[Reduction to p-groups]\label{lem:reduction_p}
	Let $\phi: G\to H$, and let $X = \oplus_p X_p$ for $X\in \{G,H\}$ be the decomposition of the groups into their p-components. Then, $\phi = \oplus \phi_p$ where $\phi_p: G_p\to H_p$. Therefore,
	\begin{align*}
		\gamma_k(G,H) ~&=~ \prod_p \gamma_k(G_p, H_p),\\
		1-\eta_k(G,H) ~&=~ \prod_p \parens{1-\eta_k(G_p, H_p)}.
	\end{align*} 
\end{lemma}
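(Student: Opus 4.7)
The plan is a three-step bookkeeping argument that exploits the primary decomposition in \cref{fact:decomp}. First, I would justify the factorization $\phi = \oplus_p \phi_p$: the subgroup $G_p$ is characterized as the set of elements whose order is a power of $p$, and since any homomorphism preserves order-divisibility, $\phi(G_p) \subseteq H_p$. Defining $\phi_p := \phi|_{G_p}$ yields the claimed decomposition, and matches the isomorphism $\Hom(G,H) \cong \oplus_p \Hom(G_p,H_p)$ already quoted in \cref{fact:decomp}.

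For the $\gamma_k$ identity, I would use that once $\phi$ factors this way, the kernel factors too: $\ker(\phi) = \oplus_p \ker(\phi_p)$, so $|\ker(\phi)|^k = \prod_p |\ker(\phi_p)|^k$. Summing over $\Hom(G,H)$ by ranging independently over each $\Hom(G_p,H_p)$ turns the sum into a product of sums, giving $\gamma_k(G,H) = \prod_p \gamma_k(G_p,H_p)$ with no genuine calculation.

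For the $\eta_k$ identity I would work on the domain side. Write $\vec{x} = (\vec{x}_p)_p \in \oplus_p G_p^k$ and note that, under the canonical identification $H^k \cong \oplus_p H_p^k$, the evaluation maps respect the decomposition, i.e., $\Gamma_{\vec{x}}(\oplus_p \phi_p) = \oplus_p \Gamma_{\vec{x}_p}(\phi_p)$, so $\sfH_{\vec{x}} = \oplus_p \sfH_{\vec{x}_p}$. Consequently, $\sfH_{\vec{x}} = H^k$ if and only if $\sfH_{\vec{x}_p} = H_p^k$ for every prime $p$. Because the uniform distribution on $G^k$ pushes forward to independent uniform distributions on each $G_p^k$, the events ``$\Gamma_{\vec{x}_p}$ is surjective'' are mutually independent, so the probability that $\vec{x} \notin \cG_k$ factors as $1 - \eta_k(G,H) = \prod_p \bigl(1 - \eta_k(G_p,H_p)\bigr)$.

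There is no real obstacle here: the proof is essentially a matter of checking that every object in sight (homomorphisms, kernels, evaluation maps, images, and the uniform measure) respects the primary decomposition. The only mildly subtle point is keeping the identifications $H^k \cong \oplus_p H_p^k$ and $G^k \cong \oplus_p G_p^k$ straight in the $\eta_k$ step; once that is set up, both identities are immediate.
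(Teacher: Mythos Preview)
Your proposal is correct and follows essentially the same approach as the paper: both use the kernel formula $\gamma_k = \sum_\phi |\ker\phi|^k$ together with the factorization $\ker\phi = \oplus_p \ker\phi_p$ to obtain the product for $\gamma_k$, and both decompose the evaluation map as $\Gamma_{\vec x} = \oplus_p \Gamma_{\vec x_p}$ and use independence of the surjectivity events across primes for the $\eta_k$ identity. Your write-up is slightly more explicit about why the primary decomposition is respected at each step, but the argument is the same.
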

\begin{proof}
Any homomorphism is a homomorphism between the respective $p$-groups, \ie $\Hom(G,H) = \oplus_p \Hom(G_p, H_p)$.	By~\cref{clm:gamma1}, 
\begin{align*}
\gamma_k(G,H) ~&=~ \sum_{\phi \in \Hom(G,H)} \abs{\ker{\phi}}^k \;,\\
~&=~  \sum_{\phi = (\phi_p)_p} \prod_p \abs{\ker{\phi_p}}^k \;,\\
~&=~  \prod_p\sum_{\phi_p \in \Hom(G_p,H_p)}  \abs{\ker{\phi_p}}^k \;,\\
~&=~ \prod_p \gamma_k(G_p,H_p).
\end{align*}
Since, $\Hom(G,H) = \oplus_p \Hom(G_p, H_p)$, for any $\vec{x} \in G^k$, we can decompose the map $\Gamma_{\vec{x}} : \Hom(G,H) \to H^k$ as direct sum, $\Gamma_{\vec{x}} = \oplus_p \Gamma_{\vec{x}_p}^{(p)}$, where $\Gamma_{\vec{x}_p}^{(p)} : \Hom(G_p,H_p) \to H_p^k$. 
\begin{align*}
1-\eta_k(G,H) ~&=~ \frac{\abs{\vec{x} \in G^k \mid \mathrm{Im}(\Gamma_{\vec{x}}) = H^k }}{|G|^k} \;,\\
~&=~ \prod_p \frac{\abs{\vec{x}_p \in G_p^k \mid \mathrm{Im}(\Gamma_{\vec{x}_p}^{(p)}) = H_p^k }}{|G|^k} \;,\\
~&=~  \prod_p \parens{1-\eta_k(G_p, H_p)}. \qedhere
\end{align*}
\end{proof}

\begin{definition}[Riemann Zeta Function]\label{def:riem_zeta}
Define the Riemann zeta function using Euler's product formula as $\zeta(s) = \prod_p(1-\frac{1}{p^{s}})^{-1}$ where the product runs over all primes. 	
\end{definition}

\begin{proposition}\label{propn: cyclic_gamma_gen}
	Let $G,H$ be any cyclic groups and let $k \geq 3$. Denote by $\zeta()$, the Riemann zeta function. Then, 
		\[ |G|^k ~\leq~ \gamma_k(G,H) ~\leq~ |G|^k \cdot \zeta(k-1)^2 .\]
\end{proposition}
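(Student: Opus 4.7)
The lower bound is immediate: the zero homomorphism $\phi \equiv 0$ lies in $\Hom(G,H)$ with $|\ker \phi| = |G|$, so its single contribution to the sum in \cref{clm:gamma1} already yields $\gamma_k(G,H) \geq |G|^k$.

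For the upper bound, my plan is to factorize over primes. By \cref{lem:reduction_p} I have $\gamma_k(G,H) = \prod_p \gamma_k(G_p, H_p)$, with the product running over all primes. Since both $G$ and $H$ are cyclic, each $G_p$ is either trivial or some $\Z_{p^{r_p}}$, and similarly each $H_p$ is either trivial or some $\Z_{p^{b_p}}$; in particular every $H_p$ has $p$-rank at most $t = 1$. Because $k \geq 3 \geq t+2$, I can apply \cref{cor:boundedrank} to each factor, which bounds $\gamma_k(G_p, H_p) \leq \tfrac{p^{k-1}}{p^{k-1}-1}\, |G_p|^k$ whenever $H_p$ is nontrivial. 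When $H_p$ is trivial the only homomorphism is zero and $\gamma_k(G_p, H_p) = |G_p|^k$, which is bounded by the same expression.

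I will then multiply the per-prime bounds. Using $\prod_p |G_p|^k = |G|^k$ together with Euler's product formula (\cref{def:riem_zeta}), this yields $\gamma_k(G,H) \leq |G|^k \cdot \prod_p (1 - p^{-(k-1)})^{-1} = |G|^k \cdot \zeta(k-1)$, which is strictly stronger than the claimed $|G|^k \cdot \zeta(k-1)^2$ since $\zeta(k-1) > 1$ for $k \geq 3$; I suspect the second factor of $\zeta(k-1)$ is left in as slack for compatibility with related estimates elsewhere in the paper. I do not foresee any real obstacle: the only care required is the mild bookkeeping over primes where $G_p$ or $H_p$ is trivial, and this is handled uniformly by the corollary.
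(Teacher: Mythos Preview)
Your proposal is correct and follows essentially the same route as the paper: factorize via \cref{lem:reduction_p}, apply \cref{cor:boundedrank} with $t=1$ at each prime, and close with the Euler product. Your execution is in fact a bit tighter---the paper inserts the intermediate weakenings $\tfrac{p^{k-1}}{p^{k-1}-1}\le 1+\tfrac{2}{p^{k-1}}\le (1-p^{-(k-1)})^{-2}$ to land on $\zeta(k-1)^2$, whereas you observe that $\tfrac{p^{k-1}}{p^{k-1}-1}=(1-p^{-(k-1)})^{-1}$ already multiplies to $\zeta(k-1)$; and your lower bound via the zero homomorphism is cleaner than chaining the per-prime lower bounds.
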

\begin{proof}
Recall that for any cyclic group, their $p$-components are cyclic, \ie we have that for $G = \oplus_p G_p, H = \oplus_p H_p $, each of $G_p, H_p$ are cyclic (potentially trivial). Moreover, $G_p$ is non-trivial if and only if $p \mid |G|$. Therefore, we can use~\cref{cor:boundedrank} (for $t=1$) in conjunction with~\cref{lem:reduction_p} to get,
	\begin{align*}
		\gamma_k(G,H) ~&=~  \prod_p \gamma_k(G_p,H_p) \;, \\
		 ~&\leq~  \prod_{p\mid |G|} |G_p|^k \parens[\Big]{1+\frac{2}{p^{k-1}}} \;,\\
		 ~&\leq~  \prod_{p\mid |G|} |G_p|^k \parens[\Big]{1-\frac{1}{p^{k-1}}}^{-2} \;,\\
		 ~&=~  |G|^k \prod_{p\mid |G|}  \parens[\Big]{1-\frac{1}{p^{k-1}}}^{-2} \;,\\
		 ~&\leq~  |G|^k \prod_{p}  \parens[\Big]{1-\frac{1}{p^{k-1}}}^{-2} ~=~ {|G|^k}\cdot {\zeta(k-1)^2}. 
	\end{align*}
	
The last line gives an upper bound by taking a product over all primes and using Euler's product formula~\cref{def:riem_zeta}. The lower bound for $\gamma_k$ directly follows from the lower bounds for $\gamma_k(G_p, H_p)$.
\end{proof}

\begin{remark}
The above expression can be analyzed more carefully and perhaps one can obtain bounds for $k=2$	which would yield a $3$-query test. We instead opt to keep the presentation clean at the cost of converting the test to a $k \geq 4$ query test. We now analyze the guarantee of the test just as in~\cref{thm:cyclic_bounded}.
\end{remark}

%
%

\begin{theorem}[Testing $\Hom(\Z_n, \Z_m)$]\label{thm:cyclic_gen}
		Let $G, H$ be any cyclic groups and $f: G\to H$ be any function. Let $k\geq 4$ be any integer. Then if $f$ passes 	\hyperref[test:generic]{$\test\_\ker_k(G,H)$}  with probability $\delta_k$, then there exists a homomorphism $\phi \in \Hom(G,H)$ such that $\agr(f,\phi) ~\geq~  \parens[\big]{\zeta(2)^2\cdot\delta_k}^{\frac{1}{k-3}} $ .
\end{theorem}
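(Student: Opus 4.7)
The plan is to mirror the structure of the proof of \cref{thm:cyclic_bounded}, substituting the general cyclic bound on $\gamma_k$ from \cref{propn: cyclic_gamma_gen} for the bounded-rank estimate \cref{cor:boundedrank}.

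First I would establish that $\eta_k(G,H) = 1$ for any cyclic $G, H$ and every $k \geq 2$. The key point is $|\Hom(\Z_n,\Z_m)| = \gcd(n,m) \leq |H|$, which is strictly less than $|H|^k$ whenever $k \geq 2$ and $|H| \geq 2$ (the case $|H| = 1$ is vacuous). Hence the evaluation map $\Gamma_{\vec x}$ cannot be surjective onto $H^k$ for any $\vec x \in G^k$, so $\cG_k = G^k$, the correction term in \cref{eq:3} vanishes, and we obtain the clean identity
\[ \sum_{\phi \in \Hom(G,H)} \agr(f,\phi)^k ~=~ \delta_k \cdot \gamma_k / |G|^k. \]
This step uses only the coarse inequality $|\Hom(G,H)| \leq |H|$ and therefore avoids having to decompose via the $p$-components as in \cref{lem:reduction_p}.

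Next I would apply the elementary moment inequality $(\max_\phi \agr(f,\phi))^{k-3} \cdot \sum_\phi \agr(f,\phi)^3 \geq \sum_\phi \agr(f,\phi)^k$, which after substituting the identity above gives
\[ (\max_\phi \agr(f,\phi))^{k-3} ~\geq~ \frac{\delta_k \cdot \gamma_k}{\delta_3 \cdot \gamma_3 \cdot |G|^{k-3}}. \]
I would then plug in the bounds from \cref{propn: cyclic_gamma_gen}: the lower bound $\gamma_k \geq |G|^k$ in the numerator, the upper bound $\gamma_3 \leq |G|^3 \cdot \zeta(2)^2$ in the denominator, together with the trivial $\delta_3 \leq 1$. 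All factors of $|G|$ then cancel, leaving $(\max_\phi \agr(f,\phi))^{k-3} \geq \delta_k / \zeta(2)^2$, from which the claimed agreement bound follows upon taking a $(k-3)$-th root.

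There is no significant obstacle because the technical heavy lifting has already been isolated into \cref{propn: cyclic_gamma_gen}, and the rest is a single telescoping of moment ratios. The only care needed is to attach the $\zeta$-factors to the correct sides of the inequality (lower-bound the numerator, upper-bound the denominator), and to note that $j = 3$ is the smallest index for which \cref{propn: cyclic_gamma_gen} supplies a usable upper bound on $\gamma_j$; this explains why the argument needs $k \geq 4$, so that the exponent $k - 3$ is strictly positive and the $(k-3)$-th root is well-defined.
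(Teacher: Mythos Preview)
Your proposal is correct and follows the paper's approach: establish $\eta_k=1$, use the identity $\sum_\phi\agr(f,\phi)^j=\delta_j\gamma_j/|G|^j$, and bound the ratio of the $k$-th to the $3$rd moment via \cref{propn: cyclic_gamma_gen}; your one-step moment inequality is equivalent to the paper's telescoping of the ratios $\frac{\delta_i\gamma_i}{|G|\,\delta_{i-1}\gamma_{i-1}}$ over $i\in[4,k]$, and your direct argument for $\eta_k=1$ via $|\Hom(\Z_n,\Z_m)|=\gcd(n,m)\le|H|$ is a cleaner alternative to the paper's route through \cref{lem:reduction_p}. One remark: your computation actually yields $\max_\phi\agr(f,\phi)\ge\bigl(\delta_k/\zeta(2)^2\bigr)^{1/(k-3)}$, with $\zeta(2)^{-2}$ rather than the $\zeta(2)^2$ appearing in the stated bound; the latter appears to be a typo in the paper (it can exceed $1$), and your version is what the argument genuinely delivers.
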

\begin{proof}
By \cref{lem:reduction_p}, $\eta(G,H) =1$ for any pair of cyclic groups, and thus, the expression from~\cref{lem:eta_one} holds. Using it, 
\begin{align*}
\max_{\phi}\, \agr(f,\phi) ~&\geq~  \frac{\sum_{\phi}{\agr(f,\phi)^i}}{\sum_{\phi}{\agr(f,\phi)^{i-1}}} ~\geq~  \frac{\delta_i \gamma_i}{|G| \delta_{i-1} \gamma_{i-1}} .	
\end{align*}

Multiplying this for $i \in [4, k]$, we get
\begin{align*}
\parens{\max_{\phi} \agr(f,\phi)}^{k-t-1} ~&\geq~ \parens[\Big]{\frac{1}{|G|}}^{k-t} \cdot \frac{\delta_k \gamma_k}{\delta_{t+1} \gamma_{t+1}}\\
~&\geq~ \parens[\Big]{\frac{1}{|G|}}^{k-t-1} \cdot \frac{\delta_k \cdot  \gamma_k}{\gamma_{t+1}}  && [\delta_{t+1} \leq 1]\\
~&\geq~  \delta_k  \cdot \frac{(1-p^{-k})(p-1)}{p}  && [\text{Using~\cref{cor:boundedrank}}]\\
~&\geq~  \delta_k  \cdot \frac{(p-1)^2}{ p^2 } \;. \qedhere
\end{align*} 
\end{proof}

\section{Vector space over finite fields}
\label{sec:vspace}
Let, $\F_{q}$ be a finite field of order $q$.  In this section, we will focus on functions between a $\F_{q}$-vector space and the finite field, $\F_q$.

\subsection{Vector space to Finite Field}
Let $G=\F^n_q$ and $H=\F_{q}$. Let $f: G \to H$ be an arbitrary function. Recall the expression for $\sum_{\phi} \agrHom^k$ from~\cref{sec:test}:
	\[
\sum_{\phi \in \Hom(G,H)} \agrHom^k ~=~  \delta_k\cdot  \eta_k\cdot \Ex{\vec{x} \sim \cG_k}{\abs*{\ker(\Gamma_{\vec x})}} ~+~  (1-\eta_k)\cdot\frac{\Hom(G,H)}{|H|^k},
\]
where $\delta_k$ is the test passing probability 
of the general test (that samples $\vec x \propto |\ker \Gamma_{\vec x} | $ and checks $\vtest$) given in~\cref{box:test}. A key difference in the vector space case compared to the cyclic case is that $\eta_k \approx 0$. Therefore most of the contribution 
in $\sum_{\phi \in \Hom(G,H)} \agrHom^k$ comes from the non-test part: $\Hom(G,H)\cdot H^{-k}$. This happens because any function has roughly $\frac{1}{q}$-agreement with many homomorphisms (at least $\frac{1}{2q}$-fraction of all homomorphisms). In particular, it is easy to see that if $f$ is a random function then it has $\frac{1}{q}$-agreement with almost all the homomorphisms. This suggests adjusting the definition of agreement, $\agrHom$, so that it measures the non-trivial advantage over the trivial agreement of $\frac{1}{q}$. To achieve this we define the following shifted variant of $\agrHom$.
\[ \textbf{Shifted agreement: }~~\widetilde{\agr}(f,\phi)=\frac{q\agrHom-1}{q-1}\]
Accordingly, we will use the expression $\sum_{\phi} \widetilde{\agr}(f,\phi)^k$ instead of $\sum_{\phi} \agrHom^k$ so that distribution concentrates on homomorphisms with non-trivial agreements. Fortunately, the former expression can be directly computed from the latter via binomial expansion. 

\paragraph{Refining the test.} We need to address one more issue in this vector space setting. 
 We cannot directly use the original generalized test that - samples $\vec x$ with probability  $\propto |\ker \Gamma_{\vec x} |$ and checks $f(\vec x)\in \Hx$. This is because even though this test uses length $k$-tuple, it can happen that sampled the tuple satisfy only linear relation involving $j$ (for some $j<k$) co-ordinates of the input and all the rest of the co-ordinates are independent.   If this happens, then the test essentially collapses to a $j$-th level test involving $j$-length tuple - as it essentially checks if $f$ satisfy that linear relation or not. In other words, the generalized test is a linear combination of such different tests associated with different levels. For a precise analysis, it is crucial to refine the initial test definition and isolate these different tests. To do this, we need to formalize this notion of \emph{tuples satisfiying a linear relation involving $j$-coordinates}. 
 
 
 \begin{definition}[A level-$j$ distribution]\label{def:rj}
 	We define ${\cal R}_j$ to be the set of tuples $\vec x \in G^k$ such that (1) there exist  $\emptyset\neq  S\subseteq [k]$  with $|S|=j$ such that $ \sum_{\ell\in S} a_\ell x_\ell=0$ for non-zero coefficients $\{a_\ell\}_{\ell\in S}$ and (2) vectors in $\vec x$ do not satisfy any other linear relation. 
 \end{definition}
 Observe that the sets ${\cal R}_j$ for distinct $j$ are disjoint. Now we collect all the claims involving ${\cal R}_j$ and linear independence of tuples, that will be needed for our analysis.  For any two quantities, ${\sf t}_1$ and ${\sf t}_2$ : we write ${\sf t}_1\approx_{\ep}  {\sf t}_2 $ if 
$\abs{{\sf t}_1-{\sf t}_2}=O(\ep).$ 
Our first claim is the following:

	


\begin{claim}\label{claim:tupleProb}
	Let, $k\geq 1$ be any integer such that $k=O_{n}(1)$. Then it holds that, 
	\begin{enumerate}
		\item $  \Pr{\vec x\sim G^k}{\rank(\vec x)=k} \approx_{q^{-2n}}   \parens [\Big]{1- \frac{q^k -1}{q^n(q-1)}} $.
		\item $\Pr{\vec x\in G^k}{ \vec x \in {\cal R}_{j} }   \approx_{q^{-2n}}  \binom{k}{j} \frac{(q-1)^{j-1}}{q^n}   $.
	\end{enumerate}
	
\end{claim}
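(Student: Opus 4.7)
For part (1), my plan is to use the standard count: $\vec x = (x_1,\ldots,x_k)$ is a uniform tuple in $(\F_q^n)^k$, and the number of linearly independent such tuples is $\prod_{i=0}^{k-1}(q^n - q^i)$. Hence
\[
\Pr_{\vec x\sim G^k}\bigl[\rank(\vec x) = k\bigr] \;=\; \prod_{i=0}^{k-1}\Bigl(1 - q^{i-n}\Bigr).
\]
Since $k = O_n(1)$, every $q^{i-n}$ in the product is $O(q^{-n})$, so expanding the product and collecting cross terms gives
\[
\prod_{i=0}^{k-1}\bigl(1 - q^{i-n}\bigr) \;=\; 1 - \sum_{i=0}^{k-1} q^{i-n} \;+\; O(q^{-2n}) \;=\; 1 - \frac{1}{q^n}\cdot\frac{q^k - 1}{q - 1} \;+\; O(q^{-2n}),
\]
which is the desired approximation.

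For part (2), the plan is to parametrize the ``linear relation data'' and then use first-moment plus a union bound on pairs of relations. Concretely, for a subset $S\subseteq [k]$ with $|S| = j$ and a coefficient tuple $\bar a = (a_\ell)_{\ell\in S}\in (\F_q^*)^{S}$ taken up to nonzero scalar (so there are $(q-1)^{j-1}$ such classes), let $E_{S,\bar a}$ be the event $\sum_{\ell\in S} a_\ell x_\ell = 0$. Since $a_{\ell_0}\neq 0$ for any chosen $\ell_0\in S$, the equation pins down $x_{\ell_0}$ as an element of $\F_q^n$ from the remaining coordinates, giving $\Pr[E_{S,\bar a}] = q^{-n}$. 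The total expected number of pairs $(S,\bar a)$ with $|S| = j$ such that $E_{S,\bar a}$ holds is therefore exactly
\[
\binom{k}{j}\,(q-1)^{j-1}\cdot q^{-n}.
\]
By Definition~\ref{def:rj}, the event $\vec x\in \mathcal{R}_j$ is precisely the event that exactly one such pair $(S,\bar a)$ (in the full range over all sizes) has $E_{S,\bar a}$ hold, and moreover $|S| = j$.

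It remains to argue that the expected count above differs from $\Pr[\vec x\in\mathcal{R}_j]$ by only $O(q^{-2n})$. This is where the main (though still routine) work lies: the difference is bounded by the probability that some two distinct (up to scalar) relations hold simultaneously, which I will control via a union bound over pairs. There are $O(1)$ many pairs of relation-data (using $k = O_n(1)$), and for any two linearly independent relations $\ell_1, \ell_2$ on the coordinates, the joint event $\{\ell_1(\vec x) = 0 = \ell_2(\vec x)\}$ has probability at most $q^{-2n}$ (two independent linear conditions on vectors in $\F_q^n$). Pairs of relations that are linearly dependent as forms coincide after quotienting by scalars, so they do not contribute. This yields the claimed $O(q^{-2n})$ error, completing the argument. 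The only subtle point to verify carefully will be that a single pair $(S,\bar a)$ is correctly identified up to scalar so as not to overcount.
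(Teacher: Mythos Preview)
Your argument is correct. Part~(1) is handled identically to the paper: both expand $\prod_{i=0}^{k-1}(1-q^{i-n})$ and keep the first-order term. For part~(2) there is a minor stylistic difference: the paper gives a direct exact count of $|\mathcal{R}_j|$ (choose $S$, choose coefficients up to scalar, choose $k-1$ linearly independent vectors determining the tuple) and then approximates, whereas you compute the expected number of size-$j$ relations as the main term and control the overcount by a union bound over pairs of independent relations. Both are standard and yield the same $O(q^{-2n})$ error; your moment-method phrasing is arguably cleaner since it sidesteps getting the exact combinatorial formula right. One small point worth tightening in your write-up: the ``difference'' you bound is an expected overcount, not a probability, so you should either note that the count is $O_k(1)$ on the event $\{\rank\le k-2\}$, or (as you effectively do) bound $\sum_{(S,\bar a)}\Pr[E_{S,\bar a}\cap\{\exists (S',\bar a')\neq(S,\bar a):E_{S',\bar a'}\}]$ by the double sum over ordered pairs.
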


\begin{proof}
	For the first claim, we have
	\begin{align*}
	 \Pr{\vec x\sim G^k}{\rank(\vec x)=k}&~=~\frac{	\text{\# of rank $k$ tuples }}{q^{kn}}\\&~=~\frac{1}{q^{kn}} \cdot\prod^{k-1}_{j=0}(q^n-q^j)\\&~=~ \frac{1}{q^{kn}} \cdot \parens[\Big] {q^{kn}-q^{(k-1)n}\cdot \sum^{k-1}_{\ell=0}q^{\ell}  +cq^{(k-2)n}} \text{ for some } c=O_k(1) \;.
	\end{align*}
	
	For the second claim, observe that: given a nonempty $S\subseteq [k]$ and coefficients
	$\{a_\ell\}_{\ell\in S}$, there are $\prod^{j-2}_{t=0}(q^n-q^t)$ number of tuples that satisfy the relation defined by $(S, \{a_\ell\}_{\ell\in S} )$ and no other relations. For $S$, we have $\binom{k}{j}$ choices and for each such $S$, the number of distinct choices for the non-zero coefficients  $\{a_\ell\}_{\ell\in S}$  is $(q-1)^n$. Thus, 
	\begin{equation}
		|{\cal R}_j| ~=~ \binom{k}{j}\cdot (q-1)^n\cdot \prod^{j-2}_{t=0}(q^n-q^t)  \;. \label{eqn:Rj}
	\end{equation}
	Using a similar approximation (ignoring lower order terms) for~\cref{eqn:Rj} as in the first claim, the second claim also follows. 
\end{proof}

\begin{claim}\label{claim:deltaj}
	Let, $1\leq j\leq k$ be two integers. Then, \[ \Ex{\vec x\sim G^k}{\indicator{f(\vec x) \in \Hx} | \vec x\in {\cal R}_j } ~=~ \Ex{\vec x\sim G^{j}}{\indicator{f(\vec x) \in \Hx}  | \vec x\in {\cal R}_j  } \;.
	\]
\end{claim}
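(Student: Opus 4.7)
The plan is to show that $\indicator{f(\vec x)\in\Hx}$, for $\vec x\in {\cal R}_j$ viewed as a $k$-tuple, depends only on the $j$ ``dependent'' coordinates, and then to verify that when one marginalizes over the remaining $k-j$ coordinates, the ratio defining the conditional expectation reduces exactly to the right-hand side. First, since $\Hom(\F_q^n,\F_q)$ consists of linear functionals, $\Hx$ is precisely the set of $(y_1,\dots,y_k)\in H^k$ satisfying $\sum_{\ell\in S}a_\ell y_\ell=0$, where $(S,\{a_\ell\}_{\ell\in S})$ is the unique (up to scaling) relation associated to $\vec x\in{\cal R}_j$. Hence $f(\vec x)\in\Hx$ is equivalent to $\sum_{\ell\in S}a_\ell f(x_\ell)=0$, a scaling-invariant condition depending only on $\vec x_S:=(x_\ell)_{\ell\in S}$ and not on the other coordinates.

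Next I would enumerate ${\cal R}_j$ via the fibration $\vec x\mapsto (S, \vec x_S, (x_\ell)_{\ell\notin S})$. For $\vec x_S$ to arise as the dependent part of some $k$-tuple in ${\cal R}_j$, it must itself be a $j$-tuple with rank $j-1$ and with all $j$ vectors participating in the relation --- equivalently, $\vec x_S$ must be a $j$-tuple lying in ${\cal R}_j$ in the sense of the claim. Given any such $\vec x_S$, the number of completions $(x_\ell)_{\ell\notin S}\in G^{k-j}$ for which the full tuple lies in ${\cal R}_j$ is exactly the number of ways of picking $k-j$ vectors that, together with the $(j-1)$-dimensional subspace $\mathrm{span}(\vec x_S)$, yield a rank-$(k-1)$ tuple. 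This equals $\prod_{t=j-1}^{k-2}(q^n-q^t)$, which depends only on $j,k,n,q$ and not on the specific choice of $\vec x_S$. Consequently this completion factor cancels between the numerator and denominator of the left-hand side ratio.

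Finally, since the condition $\sum_{\ell\in S}a_\ell f(x_\ell)=0$ is symmetric under relabeling of the coordinates indexed by $S$, the partial sum $\sum_{\vec x_S}\indicator{f(\vec x)\in\Hx}$ over $j$-tuples in ${\cal R}_j$ indexed by $S$ is the same for every $S\subseteq [k]$ of size $j$ --- the obvious relabeling bijection to $j$-tuples indexed by $[j]$ preserves the indicator. The resulting $\binom{k}{j}$ factor cancels between numerator and denominator, leaving precisely the ratio defining the right-hand side expectation. The argument is essentially a marginalization/counting computation, so I do not anticipate a genuine obstacle; the only care required is in verifying that the completion count is truly independent of $\vec x_S$, which follows immediately because $\dim\mathrm{span}(\vec x_S)=j-1$ for every valid core $\vec x_S$.
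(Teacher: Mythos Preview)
Your proposal is correct and follows essentially the same approach as the paper's proof: both arguments reduce the indicator to a condition on the $j$ ``active'' coordinates $\vec x_S$, then use symmetry over the choice of $S$ to collapse to the $j$-tuple expectation. Your version is in fact more explicit than the paper's about why the marginal on $\vec x_S$ is uniform over $j$-tuples in ${\cal R}_j$ (via the completion count $\prod_{t=j-1}^{k-2}(q^n-q^t)$), which the paper leaves implicit in the line ``the coordinates outside the sampled index subset $S$ are independent and have no impact on test passing.''
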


\begin{proof}
	For any tuple $\vec x \in {\cal R}_j$ and any non-empty subset $S\subseteq [k]$ we say $\vec x $ satisfy $S$ if  $ \sum_{\ell\in S} a_\ell x_\ell=0$ for some non-zero coeffcients $\{a_\ell\}_{\ell\in S}$. 
	\begin{align*}
		&~\Ex{\vec x\sim G^k}{\vtest | \vec x\in {\cal R}_j } 
		\\&~=~\Ex{S=\{i_1< \dots< i_j\}}{  \Ex{\vec x \text{ satisfy } S}{  \vtest | \vec x\in {\cal R}_j   }                }
		\\&~=~
		\Ex{S=\{i_1< \dots< i_j\}}{  \Ex{\vec x \text{ satisfy } S}{  \indicator{f(x_{i_1},\dots, x_{i_j})\in \sfH_{ x_{i_1},\dots, x_{i_j}})} }              }
		\\&~=~\Ex{(y_1,\dots,y_j) }{ \Ex{i_1<\dots<i_j}{ \indicator{f(x_{i_1},\dots, x_{i_j})\in \sfH_{ x_{i_1},\dots, x_{i_j}})} }~ \Big |~ \vec y \in {\cal R}_j } && \small{[\text{ By setting $x_{i_1}=y_1,\dots,x_{i_j}=y_j$}]}
		\\&~=~
		\Ex{(y_1,\dots,y_j) }{  \indicator{f(\vec y)\in \sfH_{ \vec y})}   |  \vec y \in {\cal R}_j }.
	\end{align*}
	Here, the second equality follows because if $\vec x\in {\cal R}_j$, the co-ordinates outside the sampled index subset $S$ are independent and have no impact on test passing. 
\end{proof}

\paragraph{Analyzing the key expression.} 
Let us now examine the expression for $\sum_{\phi} \agrHom^k$, from which the appropriate definition of the test becomes apparent. 

\begin{claim}\label{claim: Vexpression}
	 Let, $k\geq 1$ be any integer, and $f:\F_q^n\to \F_q$ be any function. Then, 
	 \[\sum_{\phi} \agrHom^k ~\approx_{q^{-n}}~ \Pr{\vec x\sim G^k}{\rank(\vec x)=k}  \cdot q^{n-k} + q^{-(k-1)} \cdot    \sum^{k}_{j=1} \binom{k}{j} (q-1)^{j-1} \delta_j(f)\;, \]
	 	where $\delta_j(f) ~:=~  \Ex{\vec x\sim G^j}{\vtest|\vec x\in {\cal R}_j}.$
\end{claim}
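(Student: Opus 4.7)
The plan is to start from \cref{eq:1}, namely
\[
\sum_\phi \agrHom^k ~=~ \Ex{\vec x \sim G^k}{\indicator{f(\vec x) \in \sfH_{\vec x}} \cdot \abs{\ker(\Gamma_{\vec x})}},
\]
and partition $G^k$ according to the $\F_q$-linear dependence structure of $\vec x$. Since $\Hom(\F_q^n,\F_q)$ is canonically identified with the dual space $\F_q^n$, the map $\Gamma_{\vec x}\colon \F_q^n \to \F_q^k$ is $\F_q$-linear, and its rank equals $\rank(\vec x)$. Consequently $\abs{\ker(\Gamma_{\vec x})} = q^{n-\rank(\vec x)}$, and $\sfH_{\vec x}$ is a subspace of $H^k = \F_q^k$ of dimension $\rank(\vec x)$. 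I will split $G^k$ into three classes: the full-rank tuples ($\rank(\vec x) = k$), the sets $\cal R_j$ for $j = 1, \ldots, k$ (which together exhaust the rank-$(k-1)$ tuples, since a tuple of rank $k-1$ has a $1$-dimensional relation space whose spanning relation has a well-defined support size $j$), and the low-rank tuples ($\rank(\vec x) \leq k-2$).

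On full-rank tuples $\Gamma_{\vec x}$ is surjective, so $\sfH_{\vec x} = H^k$ and the indicator is identically $1$. Together with $\abs{\ker(\Gamma_{\vec x})} = q^{n-k}$ and \cref{claim:tupleProb}(1), the full-rank piece contributes exactly $\Pr{\vec x \sim G^k}{\rank(\vec x) = k} \cdot q^{n-k}$, which is the first term on the right-hand side. On $\cal R_j$ the rank is $k-1$, so $\abs{\ker(\Gamma_{\vec x})} = q^{n-k+1}$ is constant on $\cal R_j$ and can be pulled outside the expectation. Applying \cref{claim:deltaj} to identify the conditional expectation of the indicator with $\delta_j(f)$, and \cref{claim:tupleProb}(2) to evaluate the probability, the contribution of $\cal R_j$ becomes
\[
q^{n-k+1} \cdot \Bigl(\binom{k}{j}\frac{(q-1)^{j-1}}{q^n} + O(q^{-2n})\Bigr) \cdot \delta_j(f) ~=~ q^{-(k-1)}\binom{k}{j}(q-1)^{j-1}\delta_j(f) + O(q^{-n}),
\]
exactly matching the $j$-th summand of the second term, up to the desired error.

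The main technical point to verify is that the low-rank tuples contribute only $O(q^{-n})$ in aggregate, so that they can be absorbed into the $\approx_{q^{-n}}$ error. A direct count gives at most $q^{r(n-r) + rk}$ tuples of rank exactly $r$ (choose an $r$-dimensional subspace of $\F_q^n$, then pick $k$ vectors inside it), hence $\Pr{\vec x \sim G^k}{\rank(\vec x) = r} = O(q^{-(k-r)(n-r)}) = O(q^{-2n})$ whenever $r \leq k-2$, $k = O_n(1)$, and $n$ is sufficiently large. Since $\abs{\ker(\Gamma_{\vec x})} = q^{n-r} \leq q^n$ always, the total low-rank contribution is at most $O(q^{-2n}) \cdot q^n = O(q^{-n})$. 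Summing the three cases and noting that there are only $O_k(1)$ error terms of size $O(q^{-n})$, the stated expression follows. The only subtle ingredient is the bookkeeping of error terms across the three regimes, but the controlling factor in each case is the kernel size $q^{n-r}$, and the probability estimates from \cref{claim:tupleProb} are sharp enough to keep every error term at $O(q^{-n})$ or better.
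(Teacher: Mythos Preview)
Your proof is correct and follows essentially the same approach as the paper: start from \cref{eq:1}, partition $G^k$ by rank into the full-rank tuples, the rank-$(k-1)$ tuples (further split into the ${\cal R}_j$), and the tuples of rank $\leq k-2$, and then invoke \cref{claim:tupleProb} and \cref{claim:deltaj} on the middle piece. Your handling of the low-rank error is in fact slightly more careful than the paper's, which asserts that the rank-$\leq k{-}2$ contribution is $O(q^{-2n})$ without explicitly accounting for the kernel factor $q^{n-r}$; your bound of $O(q^{-n})$ is the correct one and is exactly what the $\approx_{q^{-n}}$ conclusion requires.
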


\begin{proof}
	For any $\vec x\in G^k$, define 
	\[ \beta(\vec x) ~:=~ \indicator{f(\vec x) \in \Hx} \cdot \abs{\ker(\Gamma_{\vec x}) } ~=~ \indicator{f(\vec x) \in \Hx}q^{n-\rank(\vec x)}. 
	\]
Also define, 
	\[ \sfT_{k} ~:=~ \Pr{\vec x\in G^k}{ \rank(\vec x)=k }\cdot  \Ex{\vec x\in G^k}{ \beta(\vec x) \mid  \rank(\vec x)=k}.\] 
	Similarly we define $\sfT_{k-1}$ and $\sfT_{\leq k-2}$ with respect to the event rank being $k-1$ and at most $k-2$ respectively.  It follows that: 
	\begin{align*}
		\sum_{\phi } \agrHom^k~= \Ex{\vec x\sim G^k} {\beta(\vec x)}~=~ \sfT_{\leq k-2} +  \sfT_{k-1} +  \sfT_{k} \quad .
	\end{align*}
	Computing  $ \sfT_{k}$ is straightforward. This is because if $\rank(\vec x)=k$, then $\sfH_{\vec x}=H^k$; consequently, $\indicator{f(\vec x) \in \Hx}=1$ for all such full rank $\vec x$. 
	So, $ \sfT_{k}=   \Pr{\vec x\sim G^k}{\rank(\vec x)=k} \cdot q^{n-k}.$ Using~\cref{claim:tupleProb}, we have 
	\[  \sfT_{k}  ~\approx_{q^{-n}}~ \parens[\Big]{1- \frac{q^k -1}{q^n(q-1)}    } \cdot q^{n-k}  \quad .\]
	
	To compute $ \sfT_{k-1}$, we further partition the set of rank $k-1$ tuples by the linear relation they satisfy. Observe that any such tuple, $\vec x$, must satisfy exactly one relation. Thus each such tuple belongs to ${\cal R}_j$ for some $j\in \{1,\dots k \}$ and this is the partitioning we use.
	\begin{align*}
		\sfT_{k-1}&~=~\Pr{\vec x\in G^k}{ \rank(\vec x)=k-1 }\cdot  \Ex{\vec x\in G^k}{ \beta(\vec x) \mid \rank(\vec x)=k-1 }\; ,\\&~=~
		\sum^{k}_{j=1}   \Pr{\vec x\in G^k}{ \vec x \in {\cal R}_{j} }
		\cdot \Ex{\vec x\in G^k}{ \beta(\vec x) \mid  \vec x \in {\cal R}_{j} } \; ,
		\\&~=~   q^{n-(k-1)} \cdot    \sum^{k}_{j=1}   \Pr{\vec x\in G^k}{ \vec x \in {\cal R}_{j} }  \cdot  
		 \Ex{\vec x\in G^k}{ \vtest \mid  \vec x \in {\cal R}_{j} } \; ,
		\\&~=~  q^{n-(k-1)} \cdot    \sum^{k}_{j=1} \binom{k}{j} \frac{(q-1)^{j-1}}{q^n} \cdot  
		\Ex{\vec x\in G^k}{ \vtest \mid  \vec x \in {\cal R}_{j} }\; , && [\text{By \cref{claim:tupleProb}}]
			\\&~=~  q^{n-(k-1)} \cdot    \sum^{k}_{j=1} \binom{k}{j} \frac{(q-1)^{j-1}}{q^n} \cdot  
		\Ex{\vec x\in G^j}{ \vtest \mid  \vec x \in {\cal R}_{j} } \; .&& [\text{By \cref{claim:deltaj}}]
	\end{align*}
	As in \cref{claim:tupleProb}, using straightforward counting, one can show $T_{k-2}=O(q^{-2n})$. Therefore,

	\begin{equation}\label{eq:t1t2}
		\sum_{\phi } \agrHom^k ~\approx_{q^{-n}}~   \sfT_{k-1} +  \sfT_{k}\; .
	\end{equation}

	The claim follows. 
\end{proof}

\paragraph{Defining the refined test. } Inspired from~\cref{claim: Vexpression} we define a $k^{\mathrm{th}}$ test such that the test passing probability of a function $f$ is $\delta_k(f)$ as above, \ie  
\[\delta_k(f) :=	\Ex{\vec x\in G^k}{ \vtest |  \vec x \in {\cal R}_{k} }.\]





\begin{tcolorbox}[colframe=teal, colback=white, title=${\test\_\mathsf{VSpace}}_k(f)$, label=test:vspace]
	\begin{itemize}
		\item Sample $(x_1,\dots, x_k) \sim {\cal R}_k$.
		\item If $(f(x_1),\dots, f(x_k))\in \sfH_{\vec x}$: return $1$; otherwise: return $0$
	\end{itemize}
Or equivalently, 
	\begin{itemize}
	\item Sample $k-1$ independent vectors:  $x_1,\dots, x_{k-1}$. 
	\item Sample $a_1,\dots, a_{k-1} \sim \F_{q}\setminus \{0\}$ and set $x_k = \sum^{k-1}_{i}a_ix_i$
	
	\item If $f(x_k)=a_1f(x_1)+a_2f(x_2)+\dots + a_{k-1}f(x_{k-1})$ : return $1$; otherwise: return $0$
	
\end{itemize}

\end{tcolorbox}

\begin{remark}
Note that in the first step of the test, sampling $k-1$ random vectors instead of  $k-1$  linearly independent vectors changes the test passing probability by $O(q^{-n})$. This is because the total variation distance between two distributions is $O(q^{-n})$. Thus, it suffices to sample $k-1$ random vectors in the first step and carry out the remainder of the test as is. 
\end{remark}

\paragraph{Comparison with Known Tests}
When $k=3$ and $q=2$, \hyperref[test:vspace]{$\test\_\mathsf{VSpace}_k(f)$} recovers the usual BLR test~\cite{BLR90}, and for arbitrary prime power $q$, it recovers the 3--query Kiwi test~\cite{Kiwi03}. The test that we obtain for $q=2$ and arbitrary $k$, has been studied by~\cite{KRV23, BKMR24,KMNR25} in a more restricted query model called the \textit{online manipulation-resilient testing model}, introduced by Kalemaj, Raskhodnikova and Varma~\cite{KRV23}.


\paragraph{Analysis of the Test} We state a simple combinatorial fact that we will need. 

\begin{fact}[Binomial Identity]\label{fact:binom}
Let $0 \leq j < k$ be any integers. Then,
	\[
 \sum^k_{i=j}(-1)^{k-i}  \binom{k}{i} \binom{i}{j} ~=~ 0\;.
	\]
\end{fact}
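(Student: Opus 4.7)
The natural approach is to apply the \emph{trinomial revision} (or ``subset-of-a-subset'') identity
\[
\binom{k}{i}\binom{i}{j} ~=~ \binom{k}{j}\binom{k-j}{i-j}\,,
\]
which holds for all $0 \leq j \leq i \leq k$. Substituting this into the sum lets us pull $\binom{k}{j}$ out of the summation, reducing the claim to showing that
\[
\sum_{i=j}^{k}(-1)^{k-i}\binom{k-j}{i-j} ~=~ 0\,.
\]
This is the main substantive step; there is no obstacle once one recognizes the revision identity.

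Next, I would perform the change of variables $m = i-j$, which reindexes the sum to run over $m = 0, 1, \dots, k-j$. The sign becomes $(-1)^{k-i} = (-1)^{k-j-m} = (-1)^{k-j}(-1)^m$, so after factoring out $(-1)^{k-j}$, the sum equals $\sum_{m=0}^{k-j}(-1)^{m}\binom{k-j}{m}$, which is precisely the binomial expansion of $(1-1)^{k-j}$. Since the hypothesis $j < k$ guarantees $k-j \geq 1$, this evaluates to $0$, which concludes the proof.

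An alternative conceptual framing (which gives the same conclusion) is to observe that $p(i) := \binom{i}{j}$ is a polynomial in $i$ of degree exactly $j$, while the operator $f \mapsto \sum_{i=0}^k (-1)^{k-i}\binom{k}{i} f(i)$ is the $k$-th finite difference $\Delta^k f$ evaluated at $0$. Since the $k$-th finite difference of any polynomial of degree $< k$ vanishes identically, and $j < k$, the identity follows. I expect no difficulty in the argument; the only care needed is ensuring the index range in the sum (starting at $i=j$ since $\binom{i}{j} = 0$ for $i < j$) aligns with the binomial expansion after reindexing.
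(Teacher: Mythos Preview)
Your proof is correct and follows exactly the same approach as the paper: the paper's proof is simply the one-line remark that the identity is a direct corollary of $\binom{k}{i}\binom{i}{j} = \binom{k}{j}\binom{k-j}{i-j}$, which is precisely the trinomial revision you invoke and then unwind via $(1-1)^{k-j}=0$.
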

\begin{proof}
	A direct corollary of the fact that $\binom{k}{i} \binom{i}{j}  = \binom{k}{j} \binom{k-j}{i-j} $.
\end{proof}

\begin{corollary} 
	\label{cor:Vexptest}
	Let, $k\geq 1$ be any integer. 
  \[\sum_{\phi} \widetilde{\agr}(f,\phi)^k ~\approx_{q^{-n}}~ \frac{1}{q-1}(q\delta_k- 1). \] 
\end{corollary}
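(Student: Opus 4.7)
The plan is to combine the binomial expansion of $\widetilde{\agr}(f,\phi)^k$ with the expression from \cref{claim: Vexpression}. Since $\widetilde{\agr}(f,\phi) = (q\agr(f,\phi)-1)/(q-1)$, we have
\[
\sum_{\phi}\widetilde{\agr}(f,\phi)^k \;=\; \frac{1}{(q-1)^k}\sum_{i=0}^{k}\binom{k}{i}(-1)^{k-i}q^{i}\sum_{\phi}\agr(f,\phi)^{i}.
\]
For each $i\ge 1$ substitute the expression from \cref{claim: Vexpression}, and for $i=0$ use $\sum_{\phi}\agr(f,\phi)^{0} = |\Hom(G,H)| = q^{n}$. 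This splits $\sum_\phi\widetilde{\agr}^k$ into a \emph{full-rank piece}, coming from the term $\Pr{\vec x\sim G^i}{\rank(\vec x)=i}\cdot q^{n-i}$, and a \emph{test piece}, coming from $q^{-(i-1)}\sum_{j=1}^{i}\binom{i}{j}(q-1)^{j-1}\delta_{j}(f)$.

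The test piece is the heart of the computation. Swapping the order of summation gives
\[
\frac{q}{(q-1)^k}\sum_{j=1}^{k}(q-1)^{j-1}\delta_{j}(f)\left(\sum_{i=j}^{k}\binom{k}{i}(-1)^{k-i}\binom{i}{j}\right).
\]
By \cref{fact:binom}, the inner sum vanishes for every $j<k$, and equals $1$ at $j=k$. Thus the entire test piece collapses to $q\delta_k(f)/(q-1)$, which is exactly the leading term of the target expression. This is the step that explains why the definition $\widetilde{\agr}$ and the refined test $\delta_k$ are calibrated correctly.

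For the full-rank piece, plug in $\Pr{\vec x\sim G^i}{\rank(\vec x)=i} \approx_{q^{-2n}} 1 - (q^i-1)/(q^{n}(q-1))$ from \cref{claim:tupleProb}. The $1$ part contributes $\sum_{i=0}^{k}\binom{k}{i}(-1)^{k-i} = (1-1)^k = 0$, while the correction contributes (using $\sum_{i}\binom{k}{i}(-1)^{k-i}q^{i} = (q-1)^{k}$) a factor $-(q-1)^{k-1}/q^{n}$; multiplied by the prefactor $q^{n}/(q-1)^{k}$ this yields $-1/(q-1)$, plus an $O(q^{-n})$ error coming from the lower-order terms in \cref{claim:tupleProb}.

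Adding the two pieces gives $\sum_{\phi}\widetilde{\agr}(f,\phi)^{k} \approx_{q^{-n}} (q\delta_k - 1)/(q-1)$, as desired. The only obstacle is bookkeeping — signs, ranges of $i$ and $j$, and tracking which error terms are absorbed by the $\approx_{q^{-n}}$ symbol — but there is no conceptual step beyond the two cancellations described above.
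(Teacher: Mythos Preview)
Your proof is correct and follows essentially the same route as the paper: binomial expansion of $(q\,\agr-1)^k$, substitution of \cref{claim: Vexpression} and \cref{claim:tupleProb}, and then the key cancellation via \cref{fact:binom} that kills all $\delta_j$ with $j<k$. The only cosmetic difference is organizational: you separate the ``full-rank piece'' and the ``test piece'' before summing, whereas the paper keeps a single combined expression $q^n+\tq^{-1}-q^i\tq^{-1}+q\tq^{-1}\sum_j\binom{i}{j}\tq^j\delta_j$ and then identifies which partial sums over $i$ vanish; the arithmetic and the ideas are identical.
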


\begin{proof} Let, $\tilde{q}=q-1$. 
To compute $\sum_{\phi}  \widetilde{\agr}(f,\phi)^k$, we simply employ binomial expansion. 
\begin{align}
	(q-1)^k \cdot \sum_{\phi} \widetilde{\agr}(f,\phi)^k_{\phi}~&=~   \sum_{\phi } \parens{q\cdot \agrHom-1 }^k\\&~=~
		\sum^k_{i=0} \binom{k}{i} \cdot (-1)^{k-i}\cdot  \sum_{\phi} (q\cdot \agrHom)^i \notag 
		\\&~\approx_{q^{-n}}~ 	\sum^k_{i=0} \binom{k}{i} \cdot (-1)^{k-i}\cdot \parens[\bigg] {  q^n + \tq^{-1} - q^i \tq^{-1}  + q\tq^{-1} \sum^i_{j=0} \binom{i}{j} \tq^{j} \delta_j} .\label{eqn:intermediate}
		\end{align}
		The last line - \cref{eqn:intermediate} follows from~\cref{claim:tupleProb} and~\cref{claim: Vexpression}. It is easy to see that 
	\begin{align*}
		&\sum^k_{i=0} \binom{k}{i} (-1)^{k-i} (q^n+ \tq^{-1}) ~=~ 0, \qquad \text{and}\\
		&\sum^k_{i=0} \binom{k}{i}  (-1)^{k-i}q^i\tq^{-1} ~=~ -\tq^{k-1}.
	\end{align*}
	Replacing these to~\cref{eqn:intermediate} we get, 
\begin{align*}
	\text{\cref {eqn:intermediate}}~&=~  -\tq^{k-1} + q\tq^{-1}	   \sum^k_{i=0} 	\sum^i_{j=0}    (-1)^{k-i}  \binom{k}{i} \binom{i}{j} \tq^j\delta_j\; ,\\
	~&=~  -\tq^{k-1} + q\tq^{-1}	\sum^{k}_{j=0}  \tq^j\delta_j \sum^k_{i=j} 	    (-1)^{k-i}  \binom{k}{i} \binom{i}{j} \; ,
\\&~=~ -\tq^{k-1}+ q\tq^{k-1}\delta_k  && [\text{\cref{fact:binom}}]\; ,
\\&~=~\tq^{k-1}(q\delta_k- 1)\; .\qedhere
\end{align*}
\end{proof}

\begin{theorem}\label{theo:vspace}
 Let  $G=\F^n_{q}$ be a vector space and $H=\F_q$  for some finite field of order $q$. Let $k\geq 3$ be any odd integer. Then if $f$ passes \hyperref[test:vspace]{$\test\_\mathsf{VSpace}_k(f)$} with probability $\delta_k$, then, 
 \[ \frac{1}{q} + \parens[\big]{\frac{q-1}{q}} \parens[\Big]{\frac{q\delta_k-1}{q-1}}^{\frac{1}{k-2}} ~\leq~  \max_\phi \agr(f,\phi) \pm O(q^{-n})  ~\leq~  \frac{1}{q} + \parens[\big]{\frac{q-1}{q}} \parens[\Big]{\frac{q\delta_k-1}{q-1}}^{\frac{1}{k}} .\]
\end{theorem}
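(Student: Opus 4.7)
The plan is to use \cref{cor:Vexptest}, which already establishes the key identity
\[
\sum_{\phi\in\Hom(G,H)} \widetilde{\agr}(f,\phi)^{k} ~\approx_{q^{-n}}~ \frac{q\delta_k-1}{q-1},
\]
and bootstrap it into both a lower and an upper bound on $\max_\phi \widetilde{\agr}(f,\phi)$. Once these are obtained, one translates back to $\max_\phi \agr(f,\phi)$ via the affine relation $\agr=\frac{1}{q}+\frac{q-1}{q}\widetilde{\agr}$ (equivalently $\widetilde{\agr}=\frac{q\,\agr-1}{q-1}$), and the displayed two-sided inequality falls out. The main conceptual point is that everything reduces to extracting the $\ell^\infty$-behavior of $\widetilde{\agr}(f,\cdot)$ from a single aggregate $k$-th moment.

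For the lower bound with exponent $\frac{1}{k-2}$, I would mimic the H\"older-style argument used for cyclic groups in \cref{thm:cyclic_bounded}. Write $\widetilde{\agr}^{k} = \widetilde{\agr}^{k-2}\cdot \widetilde{\agr}^{2}$. Since $\widetilde{\agr}^{2}\geq 0$ and $\max_\phi \widetilde{\agr}(f,\phi)\geq 0$ (a simple averaging argument exhibits some homomorphism $\phi$ with $\agr(f,\phi)\geq 1/q$, up to a $q^{-n}$ correction depending on whether $f(0)=0$), raising to the odd power $k-2$ preserves order and one gets
\[
\sum_\phi \widetilde{\agr}(f,\phi)^{k} ~\leq~ \parens[\big]{\max_\phi \widetilde{\agr}(f,\phi)}^{k-2}\sum_\phi \widetilde{\agr}(f,\phi)^{2}.
\]
The second-moment factor is \cref{cor:Vexptest} at $k=2$ and equals $\frac{q\delta_2-1}{q-1}\leq 1$ because $\delta_2\leq 1$ trivially. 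Rearranging and taking the $(k-2)$-th root recovers the claimed lower bound on $\max_\phi \widetilde{\agr}$, and hence on $\max_\phi \agr(f,\phi)$, after the affine shift.

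For the upper bound with exponent $\frac{1}{k}$, the natural inequality to invoke is $\parens[\big]{\max_\phi \widetilde{\agr}}^{k}\leq \sum_\phi \widetilde{\agr}^{k}$. For even $k$ this is immediate by positivity of summands, but since the theorem concerns odd $k\geq 3$, summands with $\widetilde{\agr}(f,\phi)<0$ contribute negatively and must be controlled separately. My plan is to estimate their total mass via the $L^{2}$-bound,
\[
\sum_{\phi\,:\,\widetilde{\agr}<0}|\widetilde{\agr}(f,\phi)|^{k} ~\leq~ \parens[\big]{\max_\phi |\widetilde{\agr}(f,\phi)|}^{k-2}\sum_\phi \widetilde{\agr}(f,\phi)^{2} ~\leq~ \parens[\big]{\max_\phi |\widetilde{\agr}(f,\phi)|}^{k-2},
\]
and then absorb the correction into the $O(q^{-n})$ slack permitted by the statement, using the lower bound already derived (which controls $\max_\phi |\widetilde{\agr}|^{k-2}$ in terms of $\delta_k$) to close the loop. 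Taking the $k$-th root and undoing the affine shift produces the upper bound.

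The main obstacle I expect is precisely this odd-$k$ sign issue for the upper bound: $\sum_\phi \widetilde{\agr}(f,\phi)^{k}$ exhibits genuine cancellations among $\phi$'s of oppositely signed $\widetilde{\agr}$, so the clean $\max^k\leq \sum$ inequality need not hold without exploiting the $L^2$ mass bound. Making the $L^{2}$-absorption quantitative enough that the loss is genuinely $O(q^{-n})$ rather than an order-one constant is the delicate step; the parity restriction to odd $k\geq 3$ in the hypothesis is dictated by this (odd powers being required so that $\widetilde{\agr}(f,\phi)^{k-2}$ is monotone in $\widetilde{\agr}(f,\phi)$, which is exactly what makes the H\"older-type lower bound go through cleanly).
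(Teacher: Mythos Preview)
Your lower-bound argument is exactly the paper's: write $\widetilde{\agr}^{\,k}=\widetilde{\agr}^{\,k-2}\cdot\widetilde{\agr}^{\,2}$, use that $x\mapsto x^{k-2}$ is increasing on all of $\R$ (since $k$ is odd), pull out $(\max_\phi\widetilde{\agr})^{k-2}$, and bound $\sum_\phi\widetilde{\agr}(f,\phi)^2\leq 1$ via \cref{cor:Vexptest} at $k=2$. The aside about needing $\max_\phi\widetilde{\agr}\geq 0$ is unnecessary: because $k-2$ is odd, taking the $(k-2)$-th root is valid without any sign assumption.

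For the upper bound the paper simply says ``direct consequence of \cref{cor:Vexptest}'', i.e.\ it asserts $(\max_\phi\widetilde{\agr})^k\leq \sum_\phi\widetilde{\agr}^k$ and does not explicitly address the odd-$k$ sign issue you raise. Your attempted repair, however, has a real gap. The bound
\[
\sum_{\phi:\ \widetilde{\agr}<0}\bigl|\widetilde{\agr}(f,\phi)\bigr|^{k}\ \leq\ \bigl(\max_\phi|\widetilde{\agr}(f,\phi)|\bigr)^{k-2}\sum_\phi\widetilde{\agr}(f,\phi)^2\ \leq\ \bigl(\max_\phi|\widetilde{\agr}(f,\phi)|\bigr)^{k-2}
\]
only controls the negative mass by a quantity that is $\Theta(1)$ in general, not $O(q^{-n})$; there is no reason it should be absorbable into the slack. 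Moreover, your plan to ``use the lower bound already derived'' to close the loop goes the wrong way: the lower bound gives a \emph{lower} bound on $\max_\phi\widetilde{\agr}$, whereas here you would need an \emph{upper} bound on $\max_\phi|\widetilde{\agr}|$ to cap the negative contribution. So as written, your upper-bound plan does not go through, and you have in fact put your finger on a point the paper's one-line proof leaves unexplained.
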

\begin{proof}
The upper bound is a direct consequence of~\cref{cor:Vexptest}. For the lower bound, let $k\geq 3$ be an odd integer,
	\[ \max_{\phi} \widetilde{\agr}(f,\alpha)^{k-2} \cdot    \sum_{\phi} \widetilde{\agr}(f,\phi)^2  ~\geq~     \sum_{\phi} \widetilde{\agr}(f,\alpha)^k ~\geq~   \frac{1}{q-1}(q\delta_k- 1)\; . \]
	Since $k-2$ is odd, this implies that we can take the positive $(k-2)^\mathrm{th}-$root on either side of the inequality.	
Thus, there exists a homomorphism $\phi$ such that 	
	\[
	\parens[\Big]{{\frac{q\cdot \agrHom- 1}{q-1}}}^{k-2} ~\geq~ \frac{1}{q-1}(q\delta_k- 1)\; .\qedhere   
	\]
\end{proof}

\subsection{Finite Field to Vector Space}
\label{subsec:cylic_vspace}
Let,  $\F_{q}$ be a finite field of order $q$. 
Let, $G=\F_q$ and $H=\F^n_{q}$. Let $f: G \to H$ be an arbitrary function. The set of homomorphisms, $\Hom(G, H)$, have the property that for every non-trivial homomorphism $\phi$, $\ker(\phi) = \{0\}$. This property implies that no two homomorphisms agree on any non-zero input $x$. We note a simple consequence of this below.
\begin{observation}\label{obs:triv_kernel}
	Let $k \geq 1$ and $\vec{x} \in \F_q^k\setminus \{\vec{0}\}$ be a non-zero vector. Then, $\ker(\Gamma_{\vec{x}}) = \{\triv\}$.
\end{observation}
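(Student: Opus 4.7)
The plan is to unpack the sentence immediately preceding the statement: every non-trivial homomorphism $\phi\colon \F_q\to \F_q^n$ has trivial kernel. In the vector-space setting of this section, $\Hom(\F_q,\F_q^n)$ consists of $\F_q$-linear maps, so any such $\phi$ is determined by $v := \phi(1)\in \F_q^n$ via $\phi(a) = a\cdot v$. If some non-zero $a\in \F_q$ lies in $\ker(\phi)$, then $a v = 0$; since $\F_q$ is a field and $a$ is a unit, multiplying by $a^{-1}$ gives $v=0$, forcing $\phi\equiv\triv$. This establishes the structural property claimed in the preceding paragraph.

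Given this, the observation is essentially a one-line corollary. Concretely, I would argue as follows. Suppose $\phi\in\ker(\Gamma_{\vec x})$, which by definition means that $\phi(x_i)=0$ for every $i\in[k]$. Because $\vec x\neq\vec 0$ by hypothesis, there is some index $j$ with $x_j\neq 0$. But then $x_j\in\ker(\phi)$ is a non-zero element of $\F_q$, so $\ker(\phi)\neq\{0\}$. By the property recalled above, this forces $\phi$ to be the trivial (zero) homomorphism, so $\ker(\Gamma_{\vec x})=\{\triv\}$.

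There is no substantive obstacle here; the only care needed is to be explicit that we are using $\F_q$-linearity rather than working with $\F_q$ as a bare additive group. The argument relies crucially on the invertibility of the non-zero $x_j$ in $\F_q$, which is what distinguishes this case from the cyclic setting of \cref{sec:cyclic} (where non-trivial homomorphisms can certainly have non-trivial kernels).
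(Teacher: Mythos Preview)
Your proof is correct and follows essentially the same route as the paper: pick a non-zero coordinate $x_j$ of $\vec{x}$ and use that any non-trivial $\phi\in\Hom(\F_q,\F_q^n)$ has trivial kernel to conclude $\phi(x_j)\neq 0$, so $\phi\notin\ker(\Gamma_{\vec x})$. Your explicit remark that one is relying on $\F_q$-linearity (and hence invertibility of $x_j$) is a useful clarification that the paper leaves implicit.
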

\begin{proof}
Let $x_i\neq 0$ be any non-zero element in the tuple, $\vec{x}$. Then, for any non-trivial homomorphism $\phi \in \Hom(\F_q, \F_q^n)$, $\phi(x_i)\neq 0$ and thus $\phi \not \in \ker(\Gamma_{\vec{x}})$.
\end{proof}

Recall the general version of the test which samples $\vec{x}\in G^k \propto \abs{\ker{\Gamma_{\vec{x}}}}$. This is problematic in our case as for $\vec{x} = 0$, we have $\abs{\ker{\Gamma_{\vec{0}}}} = q^n$, but $\ker(\Gamma_{\vec{x}}) = 1$ for every other vector $\vec{x}$. We circumvent this issue by simply ignoring the $0$ element in $G$ and working over $\widetilde{G} = \F_q^*$, the set of non-zero elements of $\F_q$. Accordingly, we
will work with the fractional agreement of $f$ over $\tilde{G}$,  \[ \tagrHom := \Ex{x\sim \widetilde{G}}{\indicator{f(x) = \phi(x)}}.\]
Using this modified agreement, $ \tagrHom $, we have the following claim. 
\begin{lemma}[{{A variant of~\cref{lem:alpha_k}}}]\label{lem:fqk} Let $f: G\to H$ to be any function. Then, for any $k\geq 1$, 
	\[
	\sum_{\phi\in \Hom(\F_q, \F_q^n)} \tagrHom^k ~=~  \Ex{\vec{x} \sim \widetilde{G}^k }{ \indicator{f(\vec x) \in \sfH_{\vec x}} } .
	\]	
\end{lemma}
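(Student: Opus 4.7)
The plan is to follow the same template as the proof of~\cref{lem:alpha_k} but exploit the fact, recorded in~\cref{obs:triv_kernel}, that the evaluation map $\Gamma_{\vec x}$ has trivial kernel on every tuple in $\widetilde G^k$. This will collapse both the $|\ker(\Gamma_{\vec x})|$ weight and the surjective correction term that appeared in~\cref{lem:alpha_k}, so no analogue of $\eta_k$ will be needed.

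First I would expand $\tagrHom^k$ as a $k$-fold product of independent expectations over $\widetilde G$, and use independence to rewrite this product as a single expectation over $\widetilde G^k$:
\[
\tagrHom^k ~=~ \prod_{i=1}^k \Ex{x_i \sim \widetilde G}{\indicator{f(x_i) = \phi(x_i)}} ~=~ \Ex{\vec x \sim \widetilde G^k}{\indicator{f(\vec x) = \phi(\vec x)}}.
\]
Next I would sum over $\phi \in \Hom(\F_q,\F_q^n)$ and swap the (finite) sum with the expectation. The inner quantity $\sum_\phi \indicator{f(\vec x) = \phi(\vec x)}$ counts the number of homomorphisms $\phi$ whose image at $\vec x$ equals $f(\vec x)$; by the standard coset argument, this count is exactly $|\ker(\Gamma_{\vec x})| \cdot \indicator{f(\vec x) \in \sfH_{\vec x}}$.

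Finally I would invoke~\cref{obs:triv_kernel}: every tuple $\vec x \in \widetilde G^k$ has all coordinates nonzero (in particular at least one), so $|\ker(\Gamma_{\vec x})| = 1$ identically on the sampling domain. Plugging this in immediately yields the claimed identity
\[
\sum_{\phi} \tagrHom^k ~=~ \Ex{\vec x \sim \widetilde G^k}{\indicator{f(\vec x) \in \sfH_{\vec x}}}.
\]

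I do not anticipate a substantive obstacle: the whole point of passing from $G = \F_q$ to $\widetilde G = \F_q^*$ in the definition of $\tagrHom$ is precisely to kill the anomalous contribution of the trivial element, where $\Gamma_0$ has kernel of size $q^n$ and would otherwise dominate. The only thing to be careful about is conceptual rather than technical, namely justifying that restricting the sampling domain is consistent with the modified agreement; this is immediate once one notes that both the left-hand and right-hand sides are defined with respect to $\widetilde G$.
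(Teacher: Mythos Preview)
Your proposal is correct and follows precisely the approach the paper indicates: the paper's own proof simply reads ``Identical to the proof of~\cref{lem:alpha_k}, after using~\cref{obs:triv_kernel},'' which is exactly the template you have spelled out.
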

\begin{proof}
	Identical to the proof of~\cref{lem:alpha_k}, after using~\cref{obs:triv_kernel}. 
\end{proof}
The expression in~\cref{lem:fqk} suggests the following simple test. 
 
\vspace{1em}
	\begin{tcolorbox}[colframe=teal, colback=white, title=${\sf{Test\_NonZero}}_k (f)$, label=test:fqVspace]
	\begin{itemize}
		\item Sample $\vec x\sim \widetilde{G}^k$ uniformly.
		\item If $f(\vec x)\in \sfH_{\vec x}$: return $1$; otherwise: return $0$. 
		\item Equivalently, the test passes only if $x_i^{-1}f(x_i) = x_j^{-1}f(x_j)$ for every $x_i, x_j$ in the sampled tuple $\vec{x}$.
	\end{itemize}
\end{tcolorbox}

\begin{theorem}\label{theo:fq_vspace}
Let  $G=\F_{q}$ some finite field of order $q$  and $H=\F_q^n$ be a vector space.  Let, $k\geq 2$ be any integer. Then if $f: G\to H$ passes
\hyperref[test:fqVspace]{${\sf{Test\_NonZero}}_k (f)$}
  with probability $\delta_k$, then,
 \[\parens[\Big]{1-\frac{ 1}{q}}\cdot  \delta_k^{\frac{1}{k-1}} ~\leq~ \max_\phi \agr(f,\phi) ~\leq~ \frac{1}{q} + \parens[\Big]{\frac{q-1}{q}}\cdot \delta_k^{\frac{1}{k}}   \;\;.\]
\end{theorem}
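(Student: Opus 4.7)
The plan is to invoke Lemma~\ref{lem:fqk} directly, which yields the identity $\sum_\phi \tagrHom^k = \delta_k$ where the sum runs over $\phi \in \Hom(\F_q, \F_q^n)$. With this identity in hand, both halves of the claimed sandwich reduce to short calculations together with a conversion between the truncated agreement $\tagrHom$ (over $\widetilde{G} = \F_q^*$) and the true agreement $\agrHom$ (over $G = \F_q$).

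First I would record the conversion: since every $\phi \in \Hom(G,H)$ sends $0$ to $0$, we have $\agrHom = \tfrac{1}{q}\indicator{f(0)=0} + \tfrac{q-1}{q}\tagrHom$. This yields the per-$\phi$ sandwich
\[
\tfrac{q-1}{q}\tagrHom ~\leq~ \agrHom ~\leq~ \tfrac{1}{q} + \tfrac{q-1}{q}\tagrHom,
\]
so taking $\max_\phi$ on both sides reduces the theorem to bounding $\max_\phi \tagrHom$ in terms of $\delta_k$ alone.

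For the upper bound, since each term $\tagrHom^k$ is non-negative, any single one is at most their sum, giving $\max_\phi \tagrHom \leq \bigl(\sum_\phi \tagrHom^k\bigr)^{1/k} = \delta_k^{1/k}$, which plugged into the conversion gives the stated upper bound. For the lower bound, I would use the standard moment-ratio trick: pointwise $\tagrHom \leq \max_\phi \tagrHom$, so
\[
\delta_k ~=~ \sum_\phi \tagrHom^k ~\leq~ \bigl(\max_\phi \tagrHom\bigr)^{k-1}\sum_\phi \tagrHom ~=~ \bigl(\max_\phi \tagrHom\bigr)^{k-1}\cdot \delta_1,
\]
where the last equality is Lemma~\ref{lem:fqk} with $k=1$. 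Since $\delta_1$ is a probability, $\delta_1 \leq 1$, and hence $\max_\phi \tagrHom \geq \delta_k^{1/(k-1)}$; combining with the conversion yields the claimed lower bound.

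There is no real obstacle here, since the framework of Section~\ref{sec:test} already does the heavy lifting and Observation~\ref{obs:triv_kernel} collapses $\sum_\phi \tagrHom^k$ to a clean probability. The only mild care is bookkeeping the $\tfrac{1}{q}$ versus $\tfrac{q-1}{q}$ factors that arise from the conversion between $\tagrHom$ and $\agrHom$, and noting that the per-$\phi$ inequalities survive $\max_\phi$ without needing the two maximizers to coincide.
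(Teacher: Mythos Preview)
Your proposal is correct and follows essentially the same approach as the paper: both invoke Lemma~\ref{lem:fqk} to identify $\sum_\phi \tagrHom^k$ with $\delta_k$, use the trivial bound $\max_\phi \tagrHom^k \leq \sum_\phi \tagrHom^k$ for the upper bound, and the moment-ratio inequality $\sum_\phi \tagrHom^k \leq (\max_\phi \tagrHom)^{k-1}\sum_\phi \tagrHom$ together with $\sum_\phi \tagrHom \leq 1$ for the lower bound, then convert between $\tagrHom$ and $\agrHom$ via the $\tfrac{q-1}{q}$ rescaling. Your handling of the $\delta_1 \leq 1$ step is in fact slightly cleaner than the paper's, which separately treats the degenerate case $\sum_\phi \tagrHom = 0$.
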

\begin{proof}
From the definition of the shifted agreement, we have, for any $\phi$\
 \[\agrHom ~\leq~  \frac{1}{q} + \frac{q-1}{q}\cdot \tagrHom. \] 
Since, $ \max_{\phi} \tagrHom^k ~\leq~ \sum_\phi \tagrHom^k ~=~ \delta_k$, we get the upper bound. We have, \[ \max_{\phi} \{\widetilde{\agr}(f,\phi)^{k-1}\}\cdot \sum_{\phi} \widetilde{\agr}(f,\phi) ~\geq~  \sum_{\phi} \widetilde{\agr}^k(f,\phi) ~=~ \delta_k \; . \]
From~\cref{lem:fqk} for $k=1$, we get $\sum_{\phi} \widetilde{\agr}_{\phi} \leq 1$. Clearly, this quantity is greater than $0$, as otherwise the above inequality forces $\delta_k = 0$ for which the theorem trivially holds. Thus, we have the inequality,
\[  \tagrHom ~\geq~  \delta_k^{\frac{1}{k-1}}. \]
Finally, by rescaling we get or result:
\[\agrHom ~\geq~  \frac{|\tilde{G}|}{|G|}\cdot  \tagrHom ~=~ \parens[\Big]{1-\frac{ 1}{q}}\cdot \delta_k^{\frac{1}{k-1}}. \qedhere \]
\end{proof}

\section{Automorphism testing for Non-Abelian groups}\label{sec:auto}
\subsection{Automorphism testing over Dihedral groups}\label{subsec:dihedral}
We begin by deriving an identity, analogous to~\cref{lem:alpha_k}, for $\sum_{\phi\in \Aut(G)}  \agrHom^k$ that holds for any finite group $G$. Then we use that identity to construct a test for the dihedral group, $\D_{2p}$ of order $2p$.
\paragraph{Analyzing the key expression. }
Recall that, in the case of homomorphisms between abelian groups, $(G,H)$, to obtain such an identity, we used the fact that,  evaluation map: $\Gamma_{\vec x}: \Hom(G,H)\rightarrow H^k$ is a homomorphism (therefore, an $N$-to-one map). 

The evaluation map $\Gamma_{\vec x}$ is well defined for any set of functions and consequently, can be defined for $\Aut(G)$ in an expected manner: 
	\[ \Gamma_{\vec x}: \Aut(G)\rightarrow G^k~:~\Gamma_{\vec x}(\phi)=\parens{\phi(x_1),\dots,\phi(x_k)}\;. \]
The set $\Aut(G)$ is typically a non-abelian group under composition, and $G$ is an arbitrary (not necessarily abelian) finite group. As a result,  $\Gamma_{\vec x}$ is typically not a homomorphism anymore. 
However, the next lemma shows that it is still 
	a $N$-to-one map for $N=\stab(\vec x)$ where
	$\stab(\vec x)$ 
	 is the pointwise stabilizer subgroup of the set $\{x_1,\dots, x_n \}\subseteq G$, i.e., 
	 \begin{equation}
	 	\stab(\vec x):=\{\phi \in \Aut(G)~:~\phi(x_i)=x_i~\text{ for }i=1,\dots,n \} \label{eqn:stab} \;.
	 \end{equation}
We define $\sfG_{\vec x} := \Ima\parens{\Gamma_{\vec x}}$.

\begin{lemma}\label{lem:aut_exp}
	Let, $G$ be any finite group and $\Aut(G)$ be the group of automorphims of $G$. Let $f: G\to G$ be a function.  Let, $k\geq 1$ be any integer. Then,
	\[
	\sum_{\phi\in \Aut(G)}  \agrHom^k ~=~ \Ex{\vec{x} \sim G^k}{ \indicator{f(\vec x) \in {\sf G}_{\vec x}} \abs{\stab(\vec x)}} \;.
	\]		
\end{lemma}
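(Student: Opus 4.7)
The plan is to follow the same two-step template used in the proof of \cref{lem:alpha_k}: first rewrite the $k$-th power of an individual agreement as an expectation over a $k$-tuple of independent samples, then swap the sum over $\Aut(G)$ with that expectation so that the inner quantity becomes a fiber count of $\Gamma_{\vec x}$. Concretely, I would start from
\[
\agr(f,\phi)^k ~=~ \parens[\Big]{\Ex{x\sim G}{\indicator{f(x)=\phi(x)}}}^k ~=~ \Ex{\vec x\sim G^k}{\indicator{f(\vec x)=\phi(\vec x)}},
\]
sum over $\phi\in\Aut(G)$ and interchange sum with expectation to reduce the whole statement to showing
\[
\sum_{\phi\in\Aut(G)} \indicator{f(\vec x)=\phi(\vec x)} ~=~ \indicator{f(\vec x)\in \sfG_{\vec x}}\cdot \abs{\stab(\vec x)}
\]
pointwise in $\vec x$.

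The main (and only non-routine) step is verifying this pointwise identity; this is where the proof differs from the abelian case, since $\Gamma_{\vec x}$ is no longer a group homomorphism and one cannot simply invoke \enquote{kernel size}. Instead I would argue directly via a coset argument: if $f(\vec x)\notin \sfG_{\vec x}$ the sum is trivially $0$, while if $f(\vec x)\in \sfG_{\vec x}$ one picks any witness $\phi_0\in\Aut(G)$ with $\phi_0(\vec x)=f(\vec x)$ and observes that for an arbitrary $\phi\in\Aut(G)$,
\[
\phi(\vec x)=f(\vec x) \;\Longleftrightarrow\; \phi_0^{-1}\phi(x_i)=x_i \text{ for all } i \;\Longleftrightarrow\; \phi_0^{-1}\phi\in\stab(\vec x).
\]
Thus the set of solutions is precisely the left coset $\phi_0\cdot\stab(\vec x)$, whose cardinality is $\abs{\stab(\vec x)}$.

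The main obstacle I anticipate is purely conceptual rather than computational: one must be careful that $\stab(\vec x)$ as defined in \eqref{eqn:stab} is a subgroup of $\Aut(G)$ (which is immediate from composition and inversion preserving the fixed-point condition), and that the fiber identification uses left multiplication by $\phi_0$ inside $\Aut(G)$ rather than pointwise multiplication inside $G$. Once this is in place, plugging the pointwise identity back into the swapped expectation yields exactly
\[
\sum_{\phi\in\Aut(G)} \agr(f,\phi)^k ~=~ \Ex{\vec x\sim G^k}{\indicator{f(\vec x)\in \sfG_{\vec x}}\abs{\stab(\vec x)}},
\]
completing the proof without any further calculation. This identity is the non-abelian analogue of the formula in \cref{lem:alpha_k}, with $\abs{\stab(\vec x)}$ playing the role of $\abs{\ker(\Gamma_{\vec x})}$, and naturally suggests sampling $\vec x\propto \abs{\stab(\vec x)}$ for the dihedral test to be designed next.
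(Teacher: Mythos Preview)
Your proposal is correct and follows essentially the same argument as the paper: expand the $k$-th power as an expectation over $G^k$, swap sum and expectation, and then show the pointwise identity by observing that the fiber $\{\phi:\phi(\vec x)=f(\vec x)\}$ is a left coset of $\stab(\vec x)$ in $\Aut(G)$. The paper's proof is identical in structure and detail, including the same coset calculation $\phi_0^{-1}\phi\in\stab(\vec x)$.
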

\begin{proof}
Following the initial steps as in~\cref{lem:alpha_k}, we have: 
	\begin{align*}
		\sum_{\phi\in \Aut(G)}	 \agrHom^k ~&=~  \Ex{\vec{x} \sim G^k}{ \sum_{\phi\in \Aut(G)}	  \indicator{f(\vec x) = \phi(\vec x)} }.
	\end{align*}
 Note that any tuple $\vec x\in G^k$ satisfies:
	\[
	\sum_{\phi\in \Aut(G)}	  \indicator{f(\vec x) = \phi(\vec x)} = 
	\begin{cases}
		\abs{\{ \phi~:~\phi(\vec x)=f(\vec x) \}}  & \text{if } f(\vec x) \in {\sf G}_{\vec x}, \\
	 0 & \text{if } f(\vec x) \notin {\sf G}_{\vec x} .
	\end{cases}
	\]
	Thus, it suffices to focus on $\vec x\in {\sf G}_{\vec x} $ case. Fix such $\vec x$ and define $\Phi_{\vec x}$ to be the set of automorphisms that evalutes to $f(\vec x)$ for input $\vec x$, i.e., 
	\[ \Phi_{\vec x} = \{ \phi~:~\phi(\vec x)=f(\vec x)\}.\]  If two distinct automorphisms $\psi,~\theta\in \Phi_{\vec x}$, then we have:
	\[\psi(\vec x)=\theta(\vec x) \iff \theta^{-1}\psi(\vec x)=\vec x  \iff  \theta^{-1}\psi \in \stab(\vec x), \]
	where $\stab(\vec x)=\{\phi \in \Aut(G)~:~\phi(x_i)=x_i~\text{ for }i=1,\dots,n \}$ is the pointwise stabilizer subgroup of the set $\{x_1,\dots, x_n \}\subseteq G$. 
	As, $f(\vec x)\in {\sf G}_{\vec x} $, there is at least one 
	automorphism, $\psi$ in $\Phi_{\vec x}$, and we can write: 
	\[ \Phi_{\vec x} ~=~ \{ \psi\sigma~:~\sigma\in \stab(\vec x) \}\;\;.\]
	For any two distinct automorphisms, $\sigma$, $\sigma'$ it holds that 
	$\psi\sigma\neq \psi\sigma'$ are distinct. This implies $\Phi_{\vec x}=\abs{\stab(\vec x)}$. Thus, for any $\vec x$ such that $f(\vec x)\in{\sf G}_{\vec x} $, it holds that 
	\[	\sum_{\phi\in \Aut(G)}	  \indicator{f(\vec x) = \phi(\vec x)} = |\Phi_{\vec x}|=|\stab(\vec x)|. \qedhere\] 
\end{proof}
\begin{remark}
	In~\cref{lem:aut_exp}, we did not partition the final expression based on the condition $\cG_k=G^k$, as was done in~\cref{lem:alpha_k}. Although a similar partitioning could be applied over as well, the simpler form is sufficient for our application to the dihedral group. 
	
\end{remark}

\paragraph{Defining the test for Dihedral group.}
Now we focus on the dihedral case. Let, $G=\D_{2p}$ to be dihedral group of order $2p$ for some prime $p$. Motivated by the expression in~\cref{lem:aut_exp}, we define the following test which is analogous to~\hyperref[test:generic]{$\test\_\ker_k$}  :

\vspace{1em}
	\begin{tcolorbox}[colframe=teal, colback=white, title={$\test\_ {\sf Dihedral}_k(f)$}, label=test:dihedral]
	\begin{itemize}
		\item Sample $\vec x\propto |\stab(\vec x)| $.
		\item If $f(\vec x)\in \mathrm{Im}(\Gamma_{\vec x}) =  \sfG_{\vec x}$: return $1$; otherwise: return $0$.
	\end{itemize}
\end{tcolorbox}
\vspace{1em}

\begin{claim}\label{claim:dihedral}
	If $f: G\rightarrow G$ passes $\test\_ \Aut_k (f)$ with probability $\delta_k(f)$, then it holds
	\[\sum_{\phi} \agrHom^k ~=~ \delta_k(f) \cdot \frac{\sum_{\vec x \in G^k} |\stab(\vec x)|  }{|G|^k} ~=~ \frac{\rho_k\delta_k(f)}{|G|^k}  \;\;.\] where $\rho_k:= \sum_{\vec x \in G^k} |\stab(\vec x)|$.
\end{claim}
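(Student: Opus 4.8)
The statement is essentially a reformulation of \cref{lem:aut_exp} in terms of the test's acceptance probability, so the plan is to unwind both sides and match them. First I would invoke \cref{lem:aut_exp} with $H = G$, which gives
\[
\sum_{\phi\in\Aut(G)} \agrHom^k ~=~ \Ex{\vec x\sim G^k}{\indicator{f(\vec x)\in \sfG_{\vec x}}\,\abs{\stab(\vec x)}} ~=~ \frac{1}{|G|^k}\sum_{\vec x\in G^k} \indicator{f(\vec x)\in \sfG_{\vec x}}\,\abs{\stab(\vec x)}\,.
\]
The only remaining task is to identify $\frac{1}{\rho_k}\sum_{\vec x\in G^k} \indicator{f(\vec x)\in \sfG_{\vec x}}\,\abs{\stab(\vec x)}$ with $\delta_k(f)$.

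Next I would spell out the sampling distribution used by $\test\_\mathsf{Dihedral}_k$: it draws $\vec x$ with probability $\abs{\stab(\vec x)}/\rho_k$, where $\rho_k = \sum_{\vec x\in G^k}\abs{\stab(\vec x)}$, and this is a genuine probability distribution because $\rho_k \geq |G|^k > 0$ (the identity automorphism lies in $\stab(\vec x)$ for every $\vec x$, so each weight is at least $1$). The test accepts precisely on the event $f(\vec x)\in \mathrm{Im}(\Gamma_{\vec x}) = \sfG_{\vec x}$, so by definition
\[
\delta_k(f) ~=~ \sum_{\vec x\in G^k} \frac{\abs{\stab(\vec x)}}{\rho_k}\,\indicator{f(\vec x)\in \sfG_{\vec x}} ~=~ \frac{1}{\rho_k}\sum_{\vec x\in G^k}\abs{\stab(\vec x)}\,\indicator{f(\vec x)\in \sfG_{\vec x}}\,.
\]

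Finally I would substitute this into the displayed consequence of \cref{lem:aut_exp}: since $\sum_{\vec x}\abs{\stab(\vec x)}\,\indicator{f(\vec x)\in \sfG_{\vec x}} = \rho_k\,\delta_k(f)$, we get
\[
\sum_{\phi\in\Aut(G)} \agrHom^k ~=~ \frac{\rho_k\,\delta_k(f)}{|G|^k} ~=~ \delta_k(f)\cdot\frac{\sum_{\vec x\in G^k}\abs{\stab(\vec x)}}{|G|^k}\,,
\]
which is the claim. There is no real obstacle here — the content is entirely in \cref{lem:aut_exp} (which does the orbit–stabilizer style counting of automorphisms matching $f$ on a tuple), and this claim is just the translation of that identity into the language of the weighted test; the only point worth a sentence is verifying that the weights $\abs{\stab(\vec x)}$ are positive so that $\cD_{\mathrm{stab}}$ is well-defined.
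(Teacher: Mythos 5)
Your proposal is correct and matches the paper's argument, which is simply stated as ``Follows directly from the test definition and \cref{lem:aut_exp}.'' You have merely spelled out the substitution the paper leaves implicit (together with the minor, harmless observation that $\rho_k>0$), so there is no substantive difference in approach.
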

\begin{proof}
	Follows directly from the test definition and~\cref{lem:aut_exp}.
\end{proof}

\paragraph{Computing the quantity $\rho_k$.}For any automorphism $\phi\in \Aut(G)$, let $\fix(\phi)$  be the set of fixed points of $\phi$, i.e.,  $\fix(\phi)=\{x\mid\phi(x) = x \}$. We have the following claim, which provides an alternative expression for $\rho_k$ in terms of the fixed points of automorphisms. This claim is similar to~\cref{clm:gamma1}. 
\begin{claim}
	\label{clm:fixedpoints} 
For any finite group $G$, and integer $k\geq 1$, we have	
\[
\rho_k := \sum_{\vec x \in G^k}{\abs*{\stab(\Gamma_{\vec x})}} ~~=~ \sum_{\phi \in \Aut(G)}{\abs{\fix(\phi)}^k}  . \]
\end{claim}

\begin{proof}
Recall, $\stab(\vec x)$ 
is the pointwise stabilizer subgroup of the set $\{x_1,\dots, x_n \}\subseteq G$. So from the definition, we get: 
\begin{align*}
	\sum_{ {\vec x} \in G^k }  |\stab(\vec x)|
	&~=~\sum_{ {\vec x} \in G^k } \sum_{\phi\in \Aut(G)} \indicator{\phi(x_i)=x_i~\forall i } &&[\text{By~\cref{eqn:stab}}]
	\\&~=~ \sum_{\phi} \sum_{\vec x \in G^k } \indicator{\phi(x_i)=x_i~\forall i }&&\text{[By Fubini]}
	\\&~=~ \sum_\phi \abs{\fix(\phi)}^k . \qedhere
\end{align*}

\end{proof}
Now we compute bounds on the quantity $\rho_k$ for the dihedral group, $\D_{2p}$. To do such, we will need the following few basic facts about dihedral groups.

\begin{fact}[Dihedral group and its automorphisms]	\label{fact:aut_D}
	Let $n\geq 3$ be any integer. The dihedral group of order $2n$, $\D_{2n}$, and its automorphism group are defined as follows: 
	\begin{align}
		D_{2n} ~&=~ \langle r,s \mid s^2=e,~ r^n=e,~srs=s^{-1}\; \rangle,  \nonumber \\
		\Aut(\D_{2n}) ~&=~ \braces[\big]{\phi_{\ell,m}~:~0\leq m\leq n-1,~1\leq \ell\leq n-1 \text{ and } \gcd(n,k)=1  },\nonumber
			\\[1.5pt] &\quad \text{ where,   }\phi_{\ell,m}(r)~=~r^\ell,~\phi_{\ell,m}(s)~=~sr^m .\label{eqn:aut_action}
	\end{align}
\end{fact}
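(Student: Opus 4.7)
My plan is to verify the two halves of the asserted equality for $\Aut(\D_{2n})$ separately: every $\phi_{\ell,m}$ in the listed family is an automorphism, and conversely every automorphism of $\D_{2n}$ has this form. The forward direction I would handle by a direct check against the presentation. The assignment $r \mapsto r^\ell$, $s \mapsto sr^m$ preserves each of the three defining relations $s^2 = e$, $r^n = e$, $srs = s^{-1}$ once one has the identity $s r^m s = r^{-m}$, which itself is immediate from $srs = s^{-1}$ by raising both sides to the $m$-th power. This gives $(sr^m)^2 = r^{-m} r^m = e$, $(r^\ell)^n = e$, and $(sr^m)(r^\ell)(sr^m) = r^{-\ell}$, so the assignment extends to a homomorphism. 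Bijectivity then follows because $\gcd(\ell,n) = 1$ makes $r^\ell$ a generator of $\langle r \rangle$, and together with $sr^m$ it generates all of $\D_{2n}$.

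For the converse, the key step is to show that $\langle r \rangle$ is a characteristic subgroup of $\D_{2n}$ for $n \geq 3$. I would argue this via an order count: every element outside $\langle r \rangle$ has the form $sr^j$ and therefore has order exactly $2$, so when $n \geq 3$ the elements of order precisely $n$ are exactly the $r^\ell$ with $\gcd(\ell,n) = 1$. Any automorphism must permute these elements, so the subgroup they generate -- which equals $\langle r \rangle$ -- is preserved. This forces $\phi(r) = r^\ell$ for some $\ell$ coprime to $n$. Since $\phi$ is surjective and already covers $\langle r \rangle$ via $\phi(\langle r \rangle) = \langle r \rangle$, the element $\phi(s)$ must lie outside $\langle r \rangle$, so $\phi(s) = sr^m$ for some $0 \leq m \leq n-1$. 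This establishes $\phi = \phi_{\ell,m}$.

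The main point of care is the case when $n$ is even, where $r^{n/2}$ is an extra element of order $2$ living \emph{inside} $\langle r \rangle$. I would sidestep this by basing my characterization of $\langle r \rangle$ on elements of order $n$ rather than order $2$, which keeps the argument uniform in the parity of $n$. The genuine exception to the overall claim is $n = 2$ itself, since $\D_4 \cong \Z_2 \times \Z_2$ admits the full symmetric group on its three nontrivial elements as its automorphism group; this is cleanly excluded by the hypothesis $n \geq 3$. Beyond the characteristic-subgroup observation, no new techniques are needed: the result is a routine manipulation of the presentation.
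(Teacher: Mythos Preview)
The paper states this as a \textbf{Fact} without proof; it is cited as a standard group-theory result and used as a black box in the subsequent computation of $\rho_k$ for $D_{2p}$. So there is no proof in the paper to compare your proposal against.

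Your argument is correct and is the standard one. A couple of minor remarks: the paper's statement contains typos (the relation should read $srs = r^{-1}$ rather than $srs = s^{-1}$, and the coprimality condition should be $\gcd(n,\ell)=1$ rather than $\gcd(n,k)=1$), and you have implicitly worked with the intended versions. Your handling of the characteristic-subgroup step via elements of order exactly $n$ is clean and, as you note, uniform in the parity of $n$; an equally common alternative is to observe that $\langle r \rangle$ is the unique cyclic subgroup of index $2$ when $n \geq 3$, but your version is just as good.
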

%
%
Using this presentation, the group $\D_{2n}$ can be written as $\D_{2n}=\text{ Rotations}\;\cup \;\text{Reflections}$, where: 
\[	\text{Rotations}=\{e,r,\dots, r^{n-1} \},~	\text{Reflections}=\{s,sr,\dots, sr^{n-1} \}.
\]

Now we have all the required tools to bound the quantity $\rho_k$. 
\begin{claim}
\label{clm:rhok}
	Let, $p>3$ be any prime. Let $G=\D_{2p}$ be the dihedral group of order $2p$. Let, $k\geq 2$ be any integer. Then,
	\[p^k\parens[\big]{(p-1)+2^{k}}~\leq~  \rho_k~\leq~ p^k\parens[\big]{(p-1)+2^{k+1}}\;\;. \]
\end{claim}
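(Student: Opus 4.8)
## Proof Strategy

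The plan is to use \cref{clm:fixedpoints}, which tells us $\rho_k = \sum_{\phi \in \Aut(\D_{2p})} |\fix(\phi)|^k$, and then to enumerate the fixed-point sets of all automorphisms $\phi_{\ell,m}$ using the explicit description in \cref{fact:aut_D}. So the first step is to compute $|\fix(\phi_{\ell,m})|$ for each pair $(\ell,m)$ with $1 \le \ell \le p-1$ (note $\gcd(p,\ell)=1$ is automatic since $p$ is prime) and $0 \le m \le p-1$. Since $\D_{2p}$ splits as rotations $\cup$ reflections, and $\phi_{\ell,m}$ sends rotations to rotations ($r^j \mapsto r^{\ell j}$) and reflections to reflections ($sr^j \mapsto sr^{m+\ell j}$), one can analyze the two parts separately: $r^j$ is fixed iff $\ell j \equiv j \pmod p$, i.e. $(\ell-1)j \equiv 0$; and $sr^j$ is fixed iff $m + \ell j \equiv j \pmod p$, i.e. $(\ell - 1) j \equiv -m$.

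The second step is the case split on $\ell$. \textbf{Case $\ell = 1$:} then $r^j$ is fixed for all $j$ (all $p$ rotations fixed), while $sr^j$ is fixed iff $m \equiv 0$. So $\phi_{1,0}$ is the identity with $|\fix| = 2p$, and $\phi_{1,m}$ for $m \ne 0$ has $|\fix| = p$. \textbf{Case $\ell \ne 1$:} then $\ell - 1 \not\equiv 0 \pmod p$, so $(\ell-1)$ is invertible; the only fixed rotation is $r^0 = e$, and there is exactly one $j$ with $(\ell-1)j \equiv -m$, giving exactly one fixed reflection. Hence $|\fix(\phi_{\ell,m})| = 2$ for every such pair. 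Counting: there is $1$ automorphism ($\phi_{1,0}$) with $2p$ fixed points; $p-1$ automorphisms ($\phi_{1,m}$, $m=1,\dots,p-1$) with $p$ fixed points; and $(p-2)\cdot p$ automorphisms ($\ell \in \{2,\dots,p-1\}$, any $m$) with $2$ fixed points.

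The third step is to assemble the sum:
\begin{equation*}
\rho_k = (2p)^k + (p-1)\, p^k + p(p-2)\, 2^k = p^k\bigl( 2^k + (p-1) \bigr) + p(p-2)2^k .
\end{equation*}
From here the lower bound $\rho_k \ge p^k\bigl((p-1) + 2^k\bigr)$ is immediate since the extra term $p(p-2)2^k \ge 0$. For the upper bound, one checks $p(p-2)2^k \le p^k \cdot 2^k$ — equivalently $p(p-2) \le p^k$, which holds for $k \ge 2$ since $p(p-2) < p^2 \le p^k$ — so $\rho_k \le p^k(2^k + (p-1)) + p^k 2^k = p^k\bigl((p-1) + 2^{k+1}\bigr)$, as claimed.

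I do not anticipate a serious obstacle; the only thing to be careful about is the fixed-point count on the reflection coset when $\ell = 1$ (it is all-or-nothing in $m$), and making sure the hypothesis $p > 3$ is actually used — it guarantees $p - 1 \ge 3 > 1$ so the automorphism group is as described in \cref{fact:aut_D} (which requires $n \ge 3$) and that the counts $p-1$ and $p-2$ are nonnegative and the various cases are genuinely distinct; if a tighter statement were wanted one could track the $p(p-2)2^k$ term exactly, but for the stated two-sided bound the crude estimate suffices.
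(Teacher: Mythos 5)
Your proof is correct and takes essentially the same approach as the paper: both invoke \cref{clm:fixedpoints} to rewrite $\rho_k$ as a sum of $|\fix(\phi)|^k$, enumerate fixed points of $\phi_{\ell,m}$ via the two linear congruences on the rotation and reflection cosets, split on whether $\ell=1$, and assemble $\rho_k = (2p)^k + (p-1)p^k + p(p-2)2^k$, bounding $p(p-2)2^k$ between $0$ and $p^k 2^k$. The only cosmetic difference is that you merge the $\ell\neq 1$ subcases $m=0$ and $m\neq 0$ into one bucket, while the paper tabulates them separately before combining.
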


\begin{proof} From~\cref{fact:aut_D}, we know that any element of $\Aut(\D_{2p})$ is of the form $\phi_{\ell,m}$ 
for some $m\leq n-1$ and $1\leq \ell\leq n-1$. Consider any such $\phi_{\ell,m}$. If it fixes a rotation element, $r^i$, then it must satisfy: $\phi_{\ell,m}(r^i)=r^i$. 
Similarly, if it fixes a reflection element, $sr^j$, then it must hold that: $\phi_{\ell,m}(sr^j)=sr^j$. These imply the following linear congruence relations.  
\begin{align}
	 \phi_{\ell,m}(r^i)= &r^{i} \iff r^{i\ell}= r^i~~ ~~\text{[By \cref{eqn:aut_action}]} \iff i(k-1) \equiv 0 \pmod p  \label{eqn:cong1} .\\
		\phi_{\ell,m}(sr^j)= &sr^j \iff sr^{m}r^{j\ell} = sr^j  ~~\text{[By \cref{eqn:aut_action}]} \iff m \equiv j(1-\ell)\pmod p  .\label{eqn:cong2}
\end{align}
Thus for any $\phi_{\ell,m}$, the number of fixed point is the following quantity:
\[ \fix(\phi_{\ell,m})= \text{\# of solutions to~\cref{eqn:cong1}} +  \text{\# of solutions to~\cref{eqn:cong2}} \;.\]
	
\begin{itemize}
	\item Case 1: $\ell\neq 1,~m=0$.  Since $\ell-1\neq 0\pmod p$, it is invertible. Therefore,
	\[
		   \quad i(\ell-1) ~\equiv~ 0 \implies i \equiv 0, \quad \text{and similarly, }\;
		    -j(\ell-1) ~\equiv~ m ~\equiv~ 0 \implies j
		   \equiv 0.
		  \] 
	 This gives $\fix(\phi_{\ell,m})=2$.  
	
	\item Case 2: $\ell\neq 1,~m\neq 0$. 
	As in the first case,~\cref{eqn:cong1} has one solution. ~\cref{eqn:cong2} also has one solution that is $j\equiv m(1-\ell)^{-1} \pmod p$. So here also
	   we have: $\fix(\phi_{\ell,m})=2$.  
	
		\item Case 3:  $\ell=1,~m\neq 0. $ If $\ell=1$, then~\cref{eqn:cong1} is always satisified regardless of value of $i$. Therefore, there are $p$ solutions as every rotation gets fixed. On the other hand, if $\ell=1$ then the RHS of~\cref{eqn:cong2} is zero, whereas the left-hand side is $m\neq 0$. So, there is no solution to~\cref{eqn:cong2}. It follows that $\fix(\phi_{\ell,m})=p$.  
		
		\item Case 4:  $\ell=1,~m=0.$
		 By a similar argument as in case 3, we get that: $\fix(\phi_{\ell,m})=2p$.  

\end{itemize}
Combining the counts from the cases above, we get,
\begin{align}
	\rho_k&~=~~ \sum_\phi \abs{\fix(\phi)}^k \nonumber
	\\&~=~\sum_{\substack{2\leq \ell\leq p-1\\ 1\leq m\leq p-1}}  |\fix(\phi_{\ell,m})|^k + 
	\sum_{\substack{2\leq \ell\leq p-1\\ m=0}}  |\fix(\phi_{\ell,m})|^k + 
	\sum_{\substack{\ell=1\\ 1\leq m\leq p-1}}  |\fix(\phi_{\ell,m})|^k + |\fix(\phi_{1,0})|^k
	\nonumber
	~\\&~=~ (p-1)(p-2)2^k+(p-2)2^k + (p-1)p^k+ (2p)^k  
	\nonumber
	~\\&~=~(p-1)p^k+2^k\parens[\Big]{p(p-2)+p^k}\label{eqn:rhok}
	~\\&~\leq~(p-1)p^k+2^k\cdot(2p^k)
	\nonumber
	~\\&~= p^k\parens[\big]{(p-1)+2^{k+1}}
	\nonumber
\end{align}
For the inequality, we have used the assumption that $k\geq 2$. For the lower bound we simply take the terms involving $p^k$ in the expression given by~\cref{eqn:rhok}, giving us:
\[\rho_k ~=~ (p-1)(p-2)2^k+(p-2)2^k + (p-1)p^k+ (2p)^k~\geq~ p^k\parens[\big]{(p-1)+2^k}.\qedhere\]
\end{proof}

We can now use the above calculation for $\rho_k$ to deduce a testing result. 


\begin{theorem}[Testing $\Aut(D_{2p})$ ]\label{thm:dihedral_aut}
	Let  $G=\D_{2p}$ be the dihedral group of order $2p$ for some prime $p>3$. Let $k \geq 3$ be an integer, and $f: G\to G$ be any function. Then if $f$ passes \hyperref[test:dihedral]{$\test\_ {\sf Dihedral}_k$ }
	with probability $\delta_k(f)$, then there exists a automorphism $\phi \in \Aut(G)$ such that $\agr(f,\phi) ~\geq~  \frac{1}{2} \cdot \delta_k(f)^{\frac{1}{k-2}}  $ .
\end{theorem}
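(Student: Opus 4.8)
The plan is to follow the same moment/telescoping strategy used for cyclic groups in \cref{thm:cyclic_bounded}, now built on \cref{claim:dihedral} and the bounds on $\rho_k$ from \cref{clm:rhok}. Concretely, \cref{claim:dihedral} (applied at every level $i$, not merely at level $k$) gives $\sum_{\phi\in\Aut(G)}\agr(f,\phi)^i = \rho_i\,\delta_i(f)/|G|^i$, where $\delta_i(f)$ is the passing probability of the level-$i$ test. If $\delta_k(f)=0$ there is nothing to prove, so assume $\delta_k(f)>0$; then $\sum_\phi\agr(f,\phi)^k>0$, so some automorphism has positive agreement with $f$, and hence $\sum_\phi\agr(f,\phi)^i>0$ for every $i\ge 0$. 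This makes the elementary bound $\max_\phi\agr(f,\phi)\ge \sum_\phi\agr(f,\phi)^i\big/\sum_\phi\agr(f,\phi)^{i-1}$ valid.

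Next I would substitute the identity of \cref{claim:dihedral} into this ratio to get $\max_\phi\agr(f,\phi)\ge \rho_i\,\delta_i(f)\big/(|G|\,\rho_{i-1}\,\delta_{i-1}(f))$ for each $i\in\{3,\dots,k\}$, and multiply these $k-2$ inequalities. The right-hand side telescopes, and after discarding the factor $\delta_2(f)\le 1$ from the denominator we are left with
\[
\Big(\max_\phi \agr(f,\phi)\Big)^{k-2} ~\ge~ \frac{\rho_k\,\delta_k(f)}{|G|^{k-2}\,\rho_2}.
\]

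The final step is to plug in $|G|=2p$ together with the two-sided estimate of \cref{clm:rhok}: the lower bound gives $\rho_k\ge p^k\bigl(p-1+2^k\bigr)$, which for $k\ge 3$ is at least $p^k(p+7)$ since $2^k\ge 8$; the upper bound gives $\rho_2\le p^2\bigl(p-1+2^3\bigr)=p^2(p+7)$. Dividing, the two $(p+7)$ factors cancel, $|G|^{k-2}=2^{k-2}p^{k-2}$ cancels $p^{k-2}$ against $p^k/p^2$, and we obtain $\bigl(\max_\phi\agr(f,\phi)\bigr)^{k-2}\ge \delta_k(f)/2^{k-2}$, i.e. $\max_\phi\agr(f,\phi)\ge \tfrac12\,\delta_k(f)^{1/(k-2)}$, which yields the stated automorphism $\phi$.

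I do not expect a serious obstacle; the one point that needs care is that the crude lower bound $\rho_k\ge p^k(p-1+2^k)$ is only strong enough because $k\ge 3$ forces $2^k\ge 8$, which is exactly what absorbs the $7p^2$ term coming from the $\rho_2$ upper bound — using only $2^k\ge 4$ (valid at $k=2$) would not close the argument. This is also the reason the theorem carries the extra constant $\tfrac12$ and, unlike the cyclic case, no matching upper bound on $\agr(f,\phi)$ is asserted.
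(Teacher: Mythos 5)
Your proposal is correct and follows essentially the same telescoping argument as the paper: apply $\max_\phi\agr(f,\phi)\ge\sum_\phi\agr(f,\phi)^i/\sum_\phi\agr(f,\phi)^{i-1}$ with \cref{claim:dihedral} at each level, multiply over $i\in\{3,\dots,k\}$, drop $\delta_2\le1$, and then plug in the two-sided bounds of \cref{clm:rhok} — the only cosmetic difference is that you cancel $(p+7)$ directly using $2^k\ge 8$ whereas the paper keeps the ratio $((p-1)+2^k)/((p-1)+2^3)\ge 1$, which is the same observation. Your explicit handling of the degenerate $\delta_k(f)=0$ case and the well-definedness of the ratio is a welcome (if minor) clarification the paper leaves implicit.
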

\begin{proof}Using~\cref{eq:max_approx} and~\cref{claim:dihedral}, we have: 	\[	\max_{\phi}\, \agr(f,\phi) ~\geq~  \frac{\sum_{\phi}{\agr(f,\phi)^i}}{\sum_{\phi}{\agr(f,\phi)^{i-1}}} ~\geq~  \frac{\delta_i \rho_i}{|G| \delta_{i-1} \rho_{i-1}} .	
	\]
Multiplying this for $i \in [3, k]$, we get
\begin{align*}
	\parens[\Big]{\max_{\phi} \agr(f,\phi)}^{k-2} ~&\geq~ \parens[\Big]{\frac{1}{|G|}}^{k-2} \cdot \frac{\delta_k \rho_k}{\delta_{2} \rho_{2}}\\
	~&\geq~ \parens[\Big]{\frac{1}{|G|}}^{k-2} \cdot \frac{\delta_k \rho_k}{\rho_{2}}  && [\text{Since, }  \delta_{2} \leq 1.]\\
	~&\geq~  \delta_k  \cdot \frac{1}{2^{k-2}} \frac{(p-1)+2^k}{(p-1)+2^3}  && [\text{Using~\cref{clm:rhok}}]\\
	~&\geq~  \delta_k  \cdot \frac{1}{ 2^{k-2} } &&[k\geq 3] . \qedhere
\end{align*} 
\end{proof}
\begin{remark}
As a contrast to our other results, \cref{thm:dihedral_aut} does not give a group-independent upper bound on the maximum agreement, but instead gives a very weak $O((p\delta)^{\frac{1}{k}}$- bound. This is because soundness guarantee only requires a bound on $\frac{\rho_i}{|G|\rho_{i-1}}$, but to get an upper bound, one needs a bound on ${\rho_i}$ which is not true here.  	
\end{remark}

\subsection{Inner Automorphism Testing}
While it is hard to know the structure of $\Aut(G)$ for a general $G$, there is a canonical subgroup of automorphisms that can be easily described. Let $G$ be any group and let $\Inn(G) \subseteq \Aut(G)$ be the subset of \textit{inner automorphisms}, \ie  $\braces{\phi_g \mid  G\to G}$. In this section, we will show that our framework yields tests for $\Inn(G)$ for many families of groups.  

\paragraph{Defining the test.} The setup of Automorphism testing, as in the previous section, works almost identically to test inner automorphisms. The only thing that changes is the computation of (an analog of) $\rho_k$. Naturally , we need to analyze the map,

\[ \Gamma_{\vec x}: \Inn(G)\rightarrow G^k~:~\Gamma_{\vec x}(\phi)=\parens{\phi(x_1),\dots,\phi(x_k)} \;.  \]
Again this is a $N$-to-one map, where $N = |\Inn\stab(\vec x)|$, and $\Inn\stab(\vec{x})$ is defined as:
\[
\Inn\stab((x_1,\dots, x_k)) ~:=~ \braces[\big]{\phi_g \in \Inn(G) \mid \phi_g(x_i) = x_i \;\forall i \in [k]} ~=~ \bigcap_{i \in [k]}  C_G(x_i) \;.
\]
This can be seen by observing that if $\phi_g(x_i) = \phi_h(x_i)$ for all $i$, if and only if $\phi_{h^{-1}g} \in \Inn\stab(\vec{x})$. Since, $\Inn\stab$ is a group,~\cref{lem:aut_exp} generalizes directly to:
 	\begin{equation}\label{eq:agreement_inner}
 		\sum_{\phi\in \Inn(G)}  \agrHom^k ~=~ \Ex{\vec{x} \sim G^k}{ \indicator{f(\vec x) \in \Ima(\Gamma_{\vec x})}\cdot  \abs{\Inn\stab(\vec x)}} \;\;. 
	 	\end{equation}	
This yields the following test which is analogous to~\hyperref[test:dihedral]{$\test\_ {\sf Dihedral}_k$}  :

	\begin{tcolorbox}[colframe=teal, colback=white, title={$\test\_ {\sf Inner}_k(f)$}, label=test:inner]
	\begin{itemize}
		\item Sample $\vec x\propto |\Inn\stab(\vec x)| $.
		\item If $f(\vec x)\in  \mathrm{Im}(\Gamma_{\vec x})$\,: return $1$; otherwise: return $0$.
	\end{itemize}
\end{tcolorbox}
\begin{remark}
The test requires one to check if there exists a $g\in G$ such that $f(x_i) = gx_i g^{-1}$ for every $i$. This is known as the \textit{simultaneous conjugacy problem}, and for groups such as the symmetric group and matrix groups, it can be solved efficiently.  We do not delve into these details as we are only concerned with the query complexity of the test.
\end{remark}

\begin{claim}\label{claim:inner_aut}
	If $f: G\rightarrow G$ passes $\test\_ \sf Inner_k (f)$ with probability $\delta_k(f)$, then it holds
	\[\sum_{\phi \in \Inn(G)} \agrHom^k ~=~ \delta_k(f) \cdot \frac{\sum_{\vec x \in G^k} |\Inn\stab(\vec x)|  }{|G|^k} ~=~ \frac{\trho_k\delta_k(f)}{|G|^k}  \;\;.\] where $\trho_k:= \sum_{\vec x \in G^k} |\Inn\stab(\vec x)|$.
\end{claim}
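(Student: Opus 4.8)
The plan is to obtain the identity by combining \cref{eq:agreement_inner} with the definition of the test's acceptance probability, in exactly the same way \cref{claim:dihedral} was derived from \cref{lem:aut_exp}.

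First I would unwind the sampling rule of \hyperref[test:inner]{$\test\_{\sf Inner}_k$}. The test draws $\vec x \in G^k$ with probability $|\Inn\stab(\vec x)|/\trho_k$, where $\trho_k := \sum_{\vec y \in G^k}|\Inn\stab(\vec y)|$, and accepts precisely when $f(\vec x) \in \Ima(\Gamma_{\vec x})$. Hence its acceptance probability is
\[
\delta_k(f) ~=~ \frac{1}{\trho_k}\sum_{\vec x \in G^k}|\Inn\stab(\vec x)|\cdot \indicator{f(\vec x)\in \Ima(\Gamma_{\vec x})},
\]
equivalently $\sum_{\vec x \in G^k}|\Inn\stab(\vec x)|\,\indicator{f(\vec x)\in \Ima(\Gamma_{\vec x})} = \trho_k\,\delta_k(f)$. (The sampling distribution is well-defined since $\Inn\stab(\vec x)$ always contains the identity automorphism, so every summand is at least $1$ and $\trho_k \geq |G|^k > 0$.)

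Second, I would substitute this into \cref{eq:agreement_inner}. Expanding the expectation on its right-hand side as an average over $G^k$ gives
\[
\sum_{\phi\in \Inn(G)}\agrHom^k ~=~ \frac{1}{|G|^k}\sum_{\vec x \in G^k}\indicator{f(\vec x)\in \Ima(\Gamma_{\vec x})}\cdot |\Inn\stab(\vec x)| ~=~ \frac{\trho_k\,\delta_k(f)}{|G|^k},
\]
and the middle expression $\delta_k(f)\cdot \frac{\sum_{\vec x}|\Inn\stab(\vec x)|}{|G|^k}$ in the statement is merely $\frac{\trho_k\delta_k(f)}{|G|^k}$ rewritten. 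There is essentially no obstacle here: all the substantive content --- that $\Gamma_{\vec x}$ restricted to $\Inn(G)$ is $|\Inn\stab(\vec x)|$-to-one onto its image, with $\Inn\stab(\vec x) = \bigcap_{i\in[k]} C_G(x_i)$ --- is already packaged into \cref{eq:agreement_inner} immediately before the statement, and the claim reduces to a one-line normalization computation. The only point worth spelling out carefully is that $\trho_k$ is the correct normalizing constant for the test's sampling distribution and is strictly positive, so that division is legitimate.
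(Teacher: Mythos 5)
Your proof is correct and follows the same route the paper takes: the paper's proof of \cref{claim:inner_aut} is literally the one-liner ``follows directly from the test definition and~\cref{eq:agreement_inner},'' and you have simply unwound that one line (normalization of the sampling distribution by $\trho_k$, then substituting into the expectation). The added observation that $\trho_k \geq |G|^k > 0$ is a harmless bit of rigor the paper leaves implicit.
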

\begin{proof}
	Follows directly from the test definition and~\cref{eq:agreement_inner}.
\end{proof}

\begin{lemma}\label{lem:geninn_test}
	Let $G$ be a group, and $\tau \geq 2$ be an integer, such that $\frac{\trho_i}{|G|\trho_{i-1}} \geq c$ for every $i \geq \tau$. Let $k \geq \tau$ be an integer, and $f: G\to G$ be any function. Then if $f$ passes \hyperref[test:inner]{$\test\_ {\sf Inner}_k$ }
	with probability $\delta_k(f)$, then there exists a automorphism $\phi \in \Inn(G)$ such that $\agr(f,\phi) ~\geq~ c \cdot \delta_k(f)^{\frac{1}{k-\tau+1}}  $ .
\end{lemma}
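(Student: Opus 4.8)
The plan is to run, essentially verbatim, the moment‑telescoping argument used for the dihedral group in \cref{thm:dihedral_aut}, now driven by the abstract hypothesis $\trho_i/(|G|\,\trho_{i-1})\ge c$ instead of by the explicit bounds on $\rho_k$ from \cref{clm:rhok}. First I would recall the exact identity from \cref{claim:inner_aut}, namely $\sum_{\phi\in\Inn(G)}\agr(f,\phi)^i = \trho_i\,\delta_i(f)/|G|^i$ for every $i\ge 1$, together with the term‑by‑term bound $\sum_\phi \agr(f,\phi)^i \le \bigl(\max_\phi\agr(f,\phi)\bigr)\sum_\phi\agr(f,\phi)^{i-1}$, which is exactly \eqref{eq:max_approx} applied with exponent $i-1$. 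Combining the two gives, for every $i$,
\[
\max_\phi \agr(f,\phi) \;\ge\; \frac{\sum_\phi\agr(f,\phi)^i}{\sum_\phi\agr(f,\phi)^{i-1}} \;=\; \frac{\delta_i(f)}{\delta_{i-1}(f)}\cdot\frac{\trho_i}{|G|\,\trho_{i-1}} \;\ge\; c\cdot\frac{\delta_i(f)}{\delta_{i-1}(f)},
\]
where the last inequality uses the hypothesis, which is valid for all $i\ge\tau$.

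Next I would multiply this chain over $i=\tau,\tau+1,\dots,k$ — that is $k-\tau+1$ factors. The product of the ratios $\delta_i(f)/\delta_{i-1}(f)$ telescopes to $\delta_k(f)/\delta_{\tau-1}(f)$, so
\[
\Bigl(\max_\phi\agr(f,\phi)\Bigr)^{k-\tau+1} \;\ge\; c^{\,k-\tau+1}\cdot\frac{\delta_k(f)}{\delta_{\tau-1}(f)} \;\ge\; c^{\,k-\tau+1}\,\delta_k(f),
\]
using $\delta_{\tau-1}(f)\le 1$ (here $\tau\ge 2$ guarantees $\tau-1\ge 1$, so $\delta_{\tau-1}(f)$ refers to a genuine instance of the test). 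Taking $(k-\tau+1)$-th roots yields $\max_\phi\agr(f,\phi)\ge c\cdot\delta_k(f)^{1/(k-\tau+1)}$, and since the maximum is attained by some $\phi\in\Inn(G)$, the lemma follows.

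The only point needing a sentence of care — rather than any real work — is the division by the intermediate passing probabilities $\delta_{i-1}(f)$, which must be justified or side‑stepped when one of them vanishes. This is harmless: if $\delta_j(f)=0$ for some $j\le k$, then, because the sampling distribution assigns positive weight $|\Inn\stab(\vec x)|\ge 1$ to every tuple, the check $f(\vec x)\in\Ima(\Gamma_{\vec x})$ must fail for \emph{every} length-$j$ tuple; restricting any length-$k$ tuple to its first $j$ coordinates shows that no single $g\in G$ can simultaneously conjugate all $k$ coordinates appropriately either, so $f(\vec x)\notin\Ima(\Gamma_{\vec x})$ for every length-$k$ tuple and hence $\delta_k(f)=0$, in which case the claimed bound holds trivially since agreements are nonnegative. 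Thus we may assume $\delta_k(f)>0$, which forces $\delta_i(f)>0$ for all $i\le k$ and makes every division above legitimate. I do not anticipate any further obstacle: the argument is a direct telescoping of the moments supplied by \cref{claim:inner_aut}.
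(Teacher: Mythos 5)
Your proposal is correct and follows exactly the approach the paper intends: the paper's own proof of this lemma literally says ``Identical to the proof of~\cref{thm:dihedral_aut},'' and what you wrote is precisely the dihedral argument with $3$ replaced by $\tau$ and the explicit $\rho_k$-bounds of \cref{clm:rhok} replaced by the abstract hypothesis $\trho_i/(|G|\,\trho_{i-1})\ge c$. Your extra paragraph verifying that division by $\delta_{i-1}(f)$ is harmless (showing $\delta_j=0$ propagates upward so $\delta_k=0$, or equivalently that $\sum_\phi\agr(f,\phi)^{\tau-1}=0$ forces all agreements to vanish) is a legitimate point the paper leaves implicit, and handles it correctly.
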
 
\begin{proof}
	Identical to the proof of~\cref{thm:dihedral_aut}.
\end{proof}

\paragraph{Computing the quantity $\trho_k$.}
The quantity $\trho_k$ is directly related to the sizes of the \textit{centralizer} and the \textit{conjugacy classes} of the group. To compute this, we first define the relevant entities. We will then compute $\trho_k$ using direct computation (for the symmetric group and quasirandom groups), and by using known results about these quantities from existing results on finite simple groups. 

\begin{fact}[Centralizer and Center]\label{fact:centralizer}
	For any  $g \in G$, $C_G(g) = \fix(\phi_g) = \{ x \mid gx = xg \}$. Moreover, if $\mathcal{C}_g$ is the conjugacy class of $g$, then, $\phi_g: G/C_G(g) \to \mathcal{C}_g $ is a bijection and thus,  
	\[\abs{C_G(g)} ~=~ \abs{\fix(\phi_g) } ~=~ \frac{|G|}{|\mathcal{C}_g|}.\]
Let $Z(G) = \{g \mid gx = xg \;\forall x \in G\}$, be the center. Thus, $\phi_g$ is an identity homomorphism if and only if $g \in Z(G)$. Therefore, $\Inn(G) \cong G/Z(G)$.  
\end{fact}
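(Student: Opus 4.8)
The plan is to verify each assertion directly from the definitions, since these are all standard facts about the conjugation action of $G$ on itself. First I would unpack $\fix(\phi_g)$: since $\phi_g(x) = gxg^{-1}$, we have $x \in \fix(\phi_g)$ if and only if $gxg^{-1} = x$, equivalently $gx = xg$, which is exactly the defining condition for $x \in C_G(g)$. This immediately gives the displayed chain $C_G(g) = \fix(\phi_g) = \{x \mid gx = xg\}$.

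Next I would invoke the orbit--stabilizer theorem for the conjugation action of $G$ on itself: the orbit of $g$ is its conjugacy class $\mathcal{C}_g$, and the stabilizer of $g$ is $C_G(g)$. Concretely, the map $h \mapsto hgh^{-1}$ is constant on left cosets of $C_G(g)$ (if $h' = hc$ with $c \in C_G(g)$ then $h'g(h')^{-1} = hgh^{-1}$) and injective as a map on $G/C_G(g)$ (if $hgh^{-1} = h'g(h')^{-1}$ then $(h')^{-1}h \in C_G(g)$); it is visibly onto $\mathcal{C}_g$. Hence the induced map $G/C_G(g) \to \mathcal{C}_g$ is a bijection, so $|C_G(g)| = |G|/|\mathcal{C}_g|$, and combining with the first paragraph also $|\fix(\phi_g)| = |G|/|\mathcal{C}_g|$.

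For the last two statements: $\phi_g$ is the identity automorphism if and only if $gxg^{-1} = x$ for all $x \in G$, i.e.\ $gx = xg$ for all $x$, i.e.\ $g \in Z(G)$. Finally, $g \mapsto \phi_g$ is a group homomorphism $G \to \Aut(G)$ (since $\phi_{gh} = \phi_g \circ \phi_h$) whose image is $\Inn(G)$ by definition and whose kernel is $Z(G)$ by the previous sentence, so the first isomorphism theorem gives $\Inn(G) \cong G/Z(G)$. I do not expect any genuine obstacle here; the only point needing a line of care is the well-definedness of the coset map in the orbit--stabilizer step.
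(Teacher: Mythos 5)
Your proof is correct and complete; the paper states this as a fact without proof, so there is no paper argument to compare against, but your orbit--stabilizer derivation and first-isomorphism-theorem finish are exactly the standard route. One small remark worth noting: the paper's notation ``$\phi_g: G/C_G(g) \to \mathcal{C}_g$'' is a mild abuse, since $\phi_g$ as defined elsewhere in the paper is the conjugation map $x \mapsto gxg^{-1}$ on $G$, whereas the bijection intended here is the orbit map $hC_G(g) \mapsto hgh^{-1}$ induced by the conjugation \emph{action} of $G$ on the point $g$; you quietly and correctly substitute the intended map, which is the right reading.
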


We now derive the main expressions we will use to compute $\trho_k$. This simple but crucial lemma connects our analysis of the test with group-theoretic \enquote{zeta functions}.

\begin{corollary}\label{cor:trho_expression}
	For any group $G$, let $\mathcal{C}(G)$ denote its conjugacy classes. Then, 
	\[
	\trho_k(G) ~:=~ \sum_{\vec x \in G^k} |\Inn\stab(\vec x)| ~=~ \sum_{\phi_g \in \Inn(G)} \abs{\fix(\phi_g)}^k ~=~ \frac{|G|^k}{|Z(G)|} \cdot   \sum_{\mathcal{C} \in \mathcal{C}(G)} \abs{\mathcal{C}}^{1-k}\; .
	\]
\end{corollary}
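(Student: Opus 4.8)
The plan is to prove the three-term identity by establishing the two equalities from left to right, each via a short double-counting argument; nothing beyond the facts already recorded in \cref{fact:centralizer} is needed, and the statement is a purely formal identity valid for every finite group $G$.

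\emph{First equality.} This is the inner-automorphism analogue of \cref{clm:fixedpoints}. Recall that $\Inn\stab(\vec x)$ is the subgroup of $\Inn(G)$ consisting of the inner automorphisms fixing every coordinate of $\vec x$, so that $|\Inn\stab(\vec x)| = \sum_{\phi \in \Inn(G)} \indicator{\phi(x_i)=x_i~\forall i}$. Summing over $\vec x \in G^k$ and exchanging the order of summation (both index sets are finite) gives
\[
\sum_{\vec x \in G^k} |\Inn\stab(\vec x)| = \sum_{\phi \in \Inn(G)} \sum_{\vec x \in G^k} \indicator{\phi(x_i)=x_i~\forall i} = \sum_{\phi \in \Inn(G)} |\fix(\phi)|^k,
\]
where the last step uses that the inner sum factors coordinatewise as $\big(|\{x \in G : \phi(x)=x\}|\big)^k = |\fix(\phi)|^k$. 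This is exactly the computation of \cref{clm:fixedpoints}, restricted to inner automorphisms.

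\emph{Second equality.} Here I would feed in \cref{fact:centralizer}. Using $\Inn(G)\cong G/Z(G)$, the sum over $\phi\in\Inn(G)$ is a sum over the $|G|/|Z(G)|$ cosets $gZ(G)$, and by \cref{fact:centralizer} the summand $|\fix(\phi_g)| = |C_G(g)|$ depends only on that coset (indeed $C_G(gz) = C_G(g)$ for $z\in Z(G)$, since $Z(G)\le C_G(g)$ and $z$ is central). Hence the sum lifts to a sum over all of $G$ at the cost of a factor $|Z(G)|$:
\[
\sum_{\phi_g \in \Inn(G)} |\fix(\phi_g)|^k = \frac{1}{|Z(G)|}\sum_{g \in G} |C_G(g)|^k.
\]
Finally, partition $G$ into conjugacy classes; by \cref{fact:centralizer}, $|C_G(g)| = |G|/|\mathcal{C}|$ is constant as $g$ ranges over a fixed class $\mathcal{C}$, so
\[
\sum_{g \in G} |C_G(g)|^k = \sum_{\mathcal{C} \in \mathcal{C}(G)} |\mathcal{C}| \cdot \Big(\frac{|G|}{|\mathcal{C}|}\Big)^{k} = |G|^k \sum_{\mathcal{C} \in \mathcal{C}(G)} |\mathcal{C}|^{1-k}.
\]
Combining the last two displays yields $\trho_k(G) = \frac{|G|^k}{|Z(G)|}\sum_{\mathcal{C}} |\mathcal{C}|^{1-k}$.

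\emph{Main obstacle.} There is no genuinely hard step. The only point that demands care is the bookkeeping with the factor $|Z(G)|$ when moving between $\sum_{g\in G}$, $\sum_{\phi_g\in\Inn(G)}$, and the identification $\Inn(G)\cong G/Z(G)$; in particular, one must check that $|\fix(\phi_g)| = |C_G(g)|$ is a well-defined function of $\phi_g$ (equivalently, of the coset $gZ(G)$) before lifting the sum to all of $G$. Note that conjugacy classes need not be unions of cosets of $Z(G)$, which is precisely why the argument routes through $\sum_{g\in G}$ rather than attempting to group cosets of $Z(G)$ directly by conjugacy class.
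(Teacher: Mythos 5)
Your proof is correct and follows essentially the same route as the paper: the first equality is the double-counting argument of \cref{clm:fixedpoints} restricted to $\Inn(G)$, and the second equality uses \cref{fact:centralizer} together with the $|Z(G)|$-to-one map $g\mapsto\phi_g$ to lift the sum from $\Inn(G)$ to $G$, then regroups by conjugacy class. Your exposition is a bit more explicit about why $|\fix(\phi_g)|$ is well-defined on cosets of $Z(G)$, but the substance is identical.
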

\begin{proof} The proof of the first equality is identical to that of~\cref{clm:fixedpoints}, and so we omit it.  Now, $\phi_g = \phi_{ga}$ for any $a \in Z(G)$, and thus, replacing the summation by $g \in G$ modifies it by a factor of $|Z(G)|$. Using this,
	\begin{align*}
		|Z(G)| \cdot \rho_k(G) ~&=~ \sum_{g\in G} \abs{\fix(\phi_g)}^k\\
		~&=~ |G|^k \cdot \sum_g \abs{\mathcal{C}_g}^{-k}  && [\text{\cref{fact:centralizer}}]\\
		~&=~ |G|^k \cdot \sum_{\mathcal{C} \in \mathcal{C}(G)} \abs{\mathcal{C}}^{1-k} \; . \qedhere 
	\end{align*}
\end{proof}

The quantity $\eta^G(k-1) := \sum_{\mathcal{C} \in \mathcal{C}(G)} \abs{\mathcal{C}}^{1-k}$  has been studied by~\cite{LS05}, and we note a corollary of their general result:

\begin{theorem}\textup{\cite[Cor. 5.1]{LS05}}\label{theo:liebeck}
	Let $t > \frac{1}{4}$. For every finite simple group, $G$, except $\mathrm{PSL}_2(q)$, $\mathrm{PSL}_3(q),$ $\mathrm{PSU}_3(q)$, we have, $\eta^{G} (t) \leq 1 + o_{|G|}(1)$.
\end{theorem}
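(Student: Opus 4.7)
\textbf{Proof plan for Theorem \ref{theo:liebeck}.} Since this is a statement quantified over all finite simple groups, the natural starting point is to invoke the Classification of Finite Simple Groups and handle each family separately. The sporadic groups and the Tits group can be dismissed at once since there are only finitely many of them. The remaining task splits into two families: the alternating groups $A_n$ for $n \geq 5$, and the finite simple groups of Lie type.

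For the alternating case, conjugacy classes in $A_n$ correspond (with at most one splitting per class) to partitions of $n$, and the class size of a permutation with cycle type $\lambda$ is $n!/z_\lambda$ where $z_\lambda$ is the product of parts times the part multiplicities. The plan is to split the sum $\sum_{\mathcal{C}} |\mathcal{C}|^{-t}$ into the trivial class (contributing exactly $1$), the classes of small support (transpositions fixed by $A_n$ give $3$-cycles, double transpositions, etc.), and the bulk. The small-support classes have size on the order of $n^c$ for small $c$, but for $t > 1/4$ their contribution decays as $n^{-ct}$ and remains $o_n(1)$ once $c \geq 3$; the bulk classes are of size at least $n^{cn}$ and contribute negligibly. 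One combines these estimates with the Hardy--Ramanujan bound on the number of partitions to conclude.

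The main case, and the one that drives the exceptional list, is the groups of Lie type. The key input is an estimate of the form: for a finite simple group $G$ of Lie type of rank $r$ over $\mathbb{F}_q$, the smallest nontrivial conjugacy class (a class of long-root or regular unipotent type, depending on the group) has size at least $c \cdot |G|^{1/h}$ where $h$ is essentially the Coxeter number, while the total number of conjugacy classes is $O(q^r) = O(|G|^{r/\dim G})$. Using the bound from \cite{LS05} on the number of classes of each centralizer type and grouping classes by the dimension of the centralizer, one reduces the sum to a geometric-series-like quantity whose dominant term (outside the identity) is $|G|^{-t/h + r/\dim G}$ up to polynomial factors. The condition $t > 1/4$ is exactly what is needed to ensure $t/h > r/\dim G$ for all Lie-type groups \emph{except} those where $h$ and $\dim G$ are both too small relative to $r$, which turns out to be precisely $\mathrm{PSL}_2(q)$, $\mathrm{PSL}_3(q)$, and $\mathrm{PSU}_3(q)$.

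The main obstacle is carrying out the Lie-type estimate uniformly: one needs a sharp handle on the joint distribution of centralizer dimension and the number of classes with that centralizer type, which requires Deligne--Lusztig theory or at least the Steinberg--Springer parametrization of semisimple and unipotent classes. The alternating and sporadic cases are, by contrast, essentially combinatorial. An honest proof would cite or reprove the uniform Lie-type centralizer bounds (as in \cite{LS05} itself), after which the sum estimate is bookkeeping.
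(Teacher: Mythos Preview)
The paper does not prove this theorem at all: it is quoted verbatim as \cite[Cor.~5.1]{LS05} and used as a black box. Immediately after stating it, the authors write that the Liebeck--Shalev proof ``uses deep results from Deligne--Lusztig theory,'' and they proceed to give \emph{alternative}, more elementary arguments (\cref{lem:rho_sym} for $\Sym_n$/$A_n$ and \cref{claim:quasi} for quasirandom groups) that cover some of the same ground with weaker constants. So there is no in-paper proof to compare your proposal against.

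That said, your sketch is a reasonable high-level outline of how the Liebeck--Shalev argument actually goes, and in particular your identification of the main obstacle---uniform control, across all Lie types, of the number of classes at each centralizer dimension, requiring Deligne--Lusztig or Springer--Steinberg input---matches what the paper itself flags as the hard part. Your alternating-group paragraph is essentially the same computation the paper carries out in \cref{lem:rho_sym} (there for $\sum |C_G(g)|^k$ rather than $\sum |\mathcal{C}|^{-t}$, but these are the same up to reindexing). One minor correction: the threshold $t>1/4$ in Liebeck--Shalev is not derived from a clean inequality of the form $t/h > r/\dim G$ across all types; it emerges from a case analysis on the minimal-dimension centralizers in each family, and the three exceptional groups are exactly those where the smallest nontrivial class is too small (of order roughly $|G|^{1/3}$ for $\mathrm{PSL}_2$) for $t=1/4$ to suffice.
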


Their result also works for a subclass of \textit{almost simple groups}. The above result gives us $k$-query tests for these families of groups, for any $k \geq 3$. Instead of solely using this general result, which uses deep results from Deligne-Lustzig theory, we will give elementary proofs for quasirandom groups and symmetric/alternating groups that will cover the bounded rank case (at the cost of a larger, but constant, query complexity). 


\subsubsection{Symmetric (and Alternating) Group}
The conjugacy classes of the symmetric group correspond to cycle types where the cycle type of a permutation $g \in \Sym_n$, is given by the number of cycles of length $i$, which we denote as $a_i(g)$. The following is a known fact,
\[
|\mathcal{C}_g| = \frac{n!}{\prod_i i^{a_i(g)}\cdot a_i(g)!}, \quad \text{and thus,}\quad \abs{\fix(\phi_g)} = |C_{S_n}(g)| = \prod_i i^{a_i(g)}\cdot a_i(g)! \;\; .
\]

We now carry out the computation of $\trho_k(\Sym_n)$ for any $k \geq 2$. The bound for $k=2$ is stated in the sequence A110143 in OEIS (or \cite{stackex}) without any proof.

\begin{lemma}\label{lem:rho_sym}
	For $G = \Sym_n$, and $k\geq 2$,
	\[
	  1 ~\leq~ \frac{\trho_k(G)}{|G|^k}  ~\leq~ 1 + O\parens[\Big]{n^{-2(k-1)}} \;.
	\]
\end{lemma}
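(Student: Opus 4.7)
The plan is to apply Corollary~\ref{cor:trho_expression}: since $Z(\Sym_n)=\{e\}$ for $n \geq 3$, we obtain
\[
\frac{\trho_k(\Sym_n)}{|\Sym_n|^k} \;=\; \sum_{\mathcal{C}\in\mathcal{C}(\Sym_n)} |\mathcal{C}|^{1-k}.
\]
The lower bound of $1$ is immediate by retaining only the identity class, which contributes $1^{1-k}=1$ while all other terms are positive. The remainder of the argument is devoted to showing $\sum_{\mathcal{C} \neq \{e\}} |\mathcal{C}|^{1-k} = O(n^{-2(k-1)})$.

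I will index the conjugacy classes by partitions $\lambda \vdash n$ and group them by the \emph{support size} $s = s(\lambda)$, i.e., the number of non-fixed points. Writing $\lambda = (1^{n-s}, \mu)$ for $\mu \vdash s$ with all parts $\geq 2$, the class size factors as $|\mathcal{C}_\lambda| = \binom{n}{s}\cdot|\mathcal{C}_\mu^{\Sym_s}| \geq \binom{n}{s}$. The only class with $s = 2$ is the transposition class ($\mu=(2)$), contributing $\binom{n}{2}^{1-k} = \Theta(n^{-2(k-1)})$, which already matches the advertised error. Thus the task becomes controlling the tail coming from $s \geq 3$.

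To handle this tail, I will bound $\sum_{s \geq 3}\sum_{\lambda : s(\lambda)=s} |\mathcal{C}_\lambda|^{1-k}$ by $\sum_{s\geq 3} p(s)\,\binom{n}{s}^{-(k-1)}$ using the lower bound $|\mathcal{C}_\lambda|\geq\binom{n}{s}$ together with the estimate $p_{\geq 2}(s) \leq p(s)$ on the number of valid $\mu$'s. For each constant $s$, the term is $O(n^{-s(k-1)})$, with the $s=3$ term dominating at $O(n^{-3(k-1)})$. For $s$ growing with $n$, I will combine the elementary inequality $\binom{n}{s} \geq (n/s)^s$ (together with the symmetry $\binom{n}{s}=\binom{n}{n-s}$ for $s > n/2$) with the Hardy--Ramanujan subexponential bound $p(s) = e^{O(\sqrt{s})}$; the resulting product decays super-polynomially in $n$, since $(s/n)^{s(k-1)}$ goes to zero faster than any polynomial once $s$ is unbounded.

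The main obstacle is ensuring the tail estimate holds cleanly across the entire range $3 \leq s \leq n$. In the intermediate regime the two factors $p(s)$ and $\binom{n}{s}^{-1}$ are both non-negligible, but the super-polynomial decay of $\binom{n}{s}^{-(k-1)}$ dominates the subexponential growth of $p(s)$ as soon as $s$ exceeds any constant. This gap in growth rates collapses the tail to its smallest contribution, leaving the $s=3$ term (of order $n^{-3(k-1)}$) as its effective size, which is absorbed into the transposition error $\Theta(n^{-2(k-1)})$, yielding the claimed upper bound $1 + O(n^{-2(k-1)})$.
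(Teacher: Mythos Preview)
Your overall strategy---reducing to $\sum_{\mathcal C}|\mathcal C|^{1-k}$ via \cref{cor:trho_expression} and grouping by support size $s$---is the same as the paper's (the paper uses $t=n-s$, the number of fixed points). The lower bound and the identification of the transposition class as the dominant non-trivial term are correct. However, the tail argument has a genuine gap.

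The problem is the bound $|\mathcal C_\lambda|\ge\binom{n}{s}$, which throws away the factor $|\mathcal C_\mu^{\Sym_s}|$ entirely. This is harmless when $s$ is small, but it is far too weak when $s$ is close to $n$. Concretely, for $s=n$ one has $\binom{n}{n}=1$, and your bound on the contribution of all derangement classes is just
\[
\sum_{\mu\vdash n,\ \text{parts }\ge 2}|\mathcal C_\mu|^{1-k}\ \le\ p_{\ge 2}(n)\cdot 1^{1-k}\ \le\ p(n)\ =\ e^{\Theta(\sqrt n)},
\]
which is enormously larger than the target $O(n^{-2(k-1)})$. Invoking the symmetry $\binom{n}{s}=\binom{n}{n-s}$ does not help: that identity is precisely what makes $\binom{n}{s}$ small near $s=n$. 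The same failure occurs throughout the range $n-s=o(\sqrt n/\log n)$, where $p(s)=e^{\Theta(\sqrt n)}$ but $\binom{n}{n-s}^{k-1}$ is only subexponentially large in $n$, so the product does not even tend to zero, let alone polynomially.

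What is missing is a lower bound on $|\mathcal C_\mu^{\Sym_s}|$ (equivalently, an upper bound on the inner centralizer $z_\mu=\prod_{i\ge 2}i^{a_i}a_i!$). The paper supplies exactly this: since every part of $\mu$ is $\ge 2$ and $\sum_{i\ge 2} i a_i=s$, one has $\sum_{i\ge 2}a_i\le s/2$ and $i a_i\le s$, hence $z_\mu\le s^{s/2}$. This extra factor $z_\mu\le s^{s/2}$ (rather than $z_\mu\le s!$, implicit in your $\binom{n}{s}$ bound) is what makes the large-$s$ tail collapse. If you insert this into your framework---using $|\mathcal C_\lambda|\ge \binom{n}{s}\cdot s!/s^{s/2}$---your argument can be completed along the lines you sketch.
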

\begin{proof}
The formula for the sizes of the conjugacy classes shows that $Z(G) = \{1\}$ as there is only one conjugacy class of size $1$. Now using~\cref{cor:trho_expression}, and looking at the trivial conjugacy class, one gets the lower bound. We will now compute the upper bound: 
	\begin{align*}
	\trho_k ~&=~ \sum_{g} \parens[\Big]{\prod_i i^{a_i(g)}\cdot a_i(g)!}^k\\
	~&=~ \sum_{t=0}^{n} \sum_{g: a_1(g) = t}   \parens[\Big]{t!\cdot \,\prod_{i \geq 2} i^{a_i(g)}\cdot a_i(g)!\,}^k\\
	~&=~ \sum_{t=0}^{n} (t!)^k \sum_{g: a_1(g) = t}   \parens[\Big]{ \,\prod_{i \geq 2} i^{a_i(g)}\cdot a_i(g)!\,}^k\\
	~&\leq~ (n!)^k + \sum_{t=0}^{n-2} (t!)^k \sum_{g: a_1(g) = t}   \parens[\Big]{ \,\prod_{i \geq 2} i^{a_i(g)}\cdot a_i(g)^{a_i(g)}\,}^k &&[x!\leq x^x]\\
	~&=~ (n!)^k + \sum_{t=0}^{n-2} (t!)^k \sum_{g: a_1(g) = t}   {  e^{k\cdot \sum_{i \geq 2}a_i(g)\log(i\cdot a_i(g))}\,} \; .
\end{align*} 

Now, for any $g\in S_{n}$, we have $\sum_{i=1}^n, i \cdot a_i(g) = n$, and thus if $a_1(g) = t$, we have,
\begin{align*}
	\log(i\cdot \log a_i) ~&\leq~ \log (n-t), \quad\forall i \geq 2, \\
	\sum_{i\geq 2} a_i ~&\leq~ \frac{1}{2}\cdot  \sum_{i=2}^n ~=~ \frac{n-t}{2} \;. 
\end{align*}
Plugging this back into our computation above, we get,
\begin{align*}
	\frac{\trho_k}{(n!)^k} ~&\leq~ 1 + \sum_{t=0}^{n-2} \parens[\Big]{\frac{t!}{n!}}^k \sum_{g: a_1(g) = t}   {  e^{k\cdot \log(n-t)\sum_{i \geq 2}a_i(g)}\,}\\ 
~&\leq~ 1 + \sum_{t=0}^{n-2} \parens[\Big]{\frac{t!}{n!}}^k  \cdot e^{k\cdot \log(n-t)\frac{n-t}{2}}\sum_{g: a_1(g) = t}   { 1 \,}\\
~&\leq~ 1 + \sum_{t=0}^{n-2} \parens[\Big]{\frac{t!}{n!}}^k  \cdot e^{k\cdot \log(n-t)\frac{n-t}{2}}\cdot {n \choose t} (n-t)!\\
~&=~ 1 + \sum_{t=0}^{n-2} \parens[\Big]{\frac{t!}{n!}}^{k-1}  \cdot e^{k\cdot \log(n-t)\frac{n-t}{2}}\\
~&\leq~ 1 + \sum_{r=2}^{n} { {n \choose r}}^{1-k}  \cdot \parens{ r^{\frac{r}{2}} \cdot (r!)^{-1} }^k  &&[r=n-t]\\
~&\leq~ 1 + \sum_{r=2}^{n} { {n \choose r}}^{1-k}    &&[\text{For}\; r\geq 2, \; r^{\frac{r}{2}}  \leq r!]\\
~&\leq~ 1 + O(n^{-2(k-1)}) \;\;. \qedhere
\end{align*} 
\end{proof}

Since conjugation preserves odd/even parity of the permutation, a conjugacy class of $\Sym_n$ is either entirely even or entirely even. The even classes of $\Sym_n$ either remain a conjugacy class for $A_n$, or they split into two conjugacy classes \footnote{The condition is that a class splits if and only if its cycle type consists of distinct odd integers.}. In either case, the above asymptotic analysis goes through in the same manner as above, giving us the above result for $A_n$ as well.

\subsubsection{Quasirandom Groups}

Quasirandom groups, introduced by Gowers\cite{Gow08}, is a family of \enquote{highly non-abelian} groups that is often studied in the pseudorandomness literature. It is a quantitative notion wherein we say that a group is  $D$-quasirandom if the smallest (non-trivial) irreducible representation has dimension $D$. Abelian groups are $1$-quasirandom, whereas on the other extreme, matrix groups, such as $\mathrm{PSL}_2(q)$ are $|G|^{\frac{1}{3}}$-quasirandom.

We avoid defining the relevant representation theory definitions as we will only need the following consequence of $D$-quasirandomness: every proper subgroup has size at most $|G|/D$. We sketch the derivation of this consequence below. The reader unfamiliar with representation theory can take this consequence as the definition. 

\begin{fact}\label{fact:quasirandom_index}
If $G$ is $D$-quasirandom, and $H \subseteq G$ is a non-trivial	subgroup of $H$, then, $|H| \leq \frac{|G|}{D}$.
\end{fact}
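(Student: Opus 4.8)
The plan is to analyze the permutation representation of $G$ on the left coset space $G/H$. Set $n := [G:H] = |G|/|H|$ and let $\C[G/H]$ denote the corresponding complex permutation representation; it has dimension $n$, and since $H$ is a proper subgroup, $n \geq 2$.

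First I would observe that the trivial representation $\mathbf{1}$ occurs in $\C[G/H]$ with multiplicity exactly $1$. Indeed, this multiplicity equals the number of $G$-orbits on $G/H$, which is $1$ because $G$ acts transitively on its cosets; equivalently, by Frobenius reciprocity, $\langle \mathrm{Ind}_H^G \mathbf{1}, \mathbf{1}\rangle_G = \langle \mathbf{1}, \mathrm{Res}_H \mathbf{1}\rangle_H = 1$. Consequently we may write $\C[G/H] \cong \mathbf{1} \oplus W$, where $W$ is a $G$-representation of dimension $n - 1 \geq 1$ that contains no copy of the trivial representation.

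Next I would decompose $W$ into irreducible constituents. Since $\dim W \geq 1$, there is at least one such constituent $\rho$, and by the previous step $\rho$ is non-trivial. By the definition of $D$-quasirandomness, $\dim \rho \geq D$. Hence $n = 1 + \dim W \geq 1 + \dim \rho \geq 1 + D$, and therefore $|H| = |G|/n \leq |G|/(D+1) \leq |G|/D$, which is the claimed bound (in fact slightly stronger).

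The argument is short once the representation-theoretic bookkeeping is set up, so there is no genuine obstacle here; the only point that deserves an explicit word of justification is that a transitive permutation representation contains the trivial representation exactly once, which is the orbit-counting / Frobenius-reciprocity remark above.
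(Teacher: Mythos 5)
Your proof is correct and takes essentially the same route as the paper: both analyze the permutation representation on $G/H$ and extract a non-trivial irreducible constituent, whose dimension must be at least $D$ by quasirandomness. You are a bit more careful in explicitly splitting off the unique trivial summand (via Frobenius reciprocity / orbit counting) before invoking quasirandomness, which actually yields the slightly stronger bound $|H| \leq |G|/(D+1)$; the paper's phrasing glosses over the trivial constituent but is after the same argument.
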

\begin{proof}
	The quasiregular representation $L^2(G/H)$ is a vector space spanned by cosets of $H$. The action of $G$ is given by group multiplication that permutes the cosets. The dimension of this representation is $\abs{G/H} = |G|/|H|$. This representation can be trivial if and only if $H = G$. Therefore, this representation contains an irreducible representation of dimension at most $ |G|/|H|$. But by quasirandomness, no irreducible representation has dimension $< D$. Hence, $|G|/|H| \geq D$. 
\end{proof}

Using the above bound on sizes of subgroups, we can easily obtain a bound on $\trho_k(G)$.

\begin{claim}\label{claim:quasi}
Let $G$ be a $|G|^c$ quasirandom group. Then, 
\[1 ~\leq~ \frac{\trho_k}{|G|^k} ~\leq~ 1 + \frac{|G|^{1-kc}}{Z(G)} \;. \]
 \end{claim}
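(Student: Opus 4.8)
The plan is to reduce everything to the closed form for $\trho_k$ already established in \cref{cor:trho_expression}, namely
\[
\frac{\trho_k(G)}{|G|^k} ~=~ \frac{1}{|Z(G)|}\sum_{\mathcal{C}\in\mathcal{C}(G)} |\mathcal{C}|^{1-k}\;,
\]
and then estimate the sum over conjugacy classes by separating the singleton classes from the rest. First I would observe that the conjugacy classes of size $1$ are exactly the central elements, of which there are $|Z(G)|$, and each contributes $1^{1-k}=1$ to the sum; hence these alone contribute $\tfrac{1}{|Z(G)|}\cdot|Z(G)| = 1$, which gives the lower bound $\trho_k/|G|^k \geq 1$ immediately (all remaining terms are nonnegative).

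For the upper bound, the key input is \cref{fact:quasirandom_index}: if $\mathcal{C}$ is the class of a non-central element $g$, then $C_G(g)$ is a proper subgroup of $G$ (it contains $g$ and $Z(G)$, so it is nontrivial, and $g\notin Z(G)$ makes it proper), whence $|C_G(g)| \leq |G|/|G|^c = |G|^{1-c}$. By \cref{fact:centralizer} this yields $|\mathcal{C}| = |G|/|C_G(g)| \geq |G|^c$ for every non-singleton class. Then I would write $|\mathcal{C}|^{1-k} = |\mathcal{C}|\cdot|\mathcal{C}|^{-k} \leq |\mathcal{C}|\cdot |G|^{-ck}$ and sum, using $\sum_{\mathcal{C}\in\mathcal{C}(G)}|\mathcal{C}| = |G|$:
\[
\sum_{\mathcal{C}:\,|\mathcal{C}|>1} |\mathcal{C}|^{1-k} ~\leq~ |G|^{-ck}\sum_{\mathcal{C}:\,|\mathcal{C}|>1}|\mathcal{C}| ~\leq~ |G|^{-ck}\cdot|G| ~=~ |G|^{1-ck}\;.
\]
Dividing by $|Z(G)|$ and adding back the contribution $1$ of the central classes gives $\trho_k/|G|^k \leq 1 + |G|^{1-kc}/|Z(G)|$, as claimed.

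There is essentially no serious obstacle here — the whole argument is a short calculation once \cref{cor:trho_expression} and \cref{fact:quasirandom_index} are in hand. The only points requiring a moment of care are checking that the centralizer of a non-central element is genuinely a proper (hence index-$\geq|G|^c$) subgroup so that the quasirandomness bound applies, and keeping track of the factor $1/|Z(G)|$ consistently so that the bound comes out in the stated form rather than, say, with a spurious factor. Everything else is bookkeeping.
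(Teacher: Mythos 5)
Your proof is correct and follows essentially the same route as the paper's: both split the sum from \cref{cor:trho_expression} into the central/singleton part (giving the $1$) and the non-central part, then invoke \cref{fact:quasirandom_index} to bound every non-trivial centralizer (equivalently, lower-bound every non-singleton class size) by $|G|^{1-c}$ (resp.\ $|G|^c$). The paper phrases the estimate as $\sum_{g\notin Z(G)}|\fix(\phi_g)|^k\le |G|\cdot|G|^{(1-c)k}$, while you reach the same bound via $\sum_{|\mathcal C|>1}|\mathcal C|^{1-k}\le |G|^{-ck}\sum|\mathcal C|$; these are the same calculation in two guises.
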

\begin{proof}
The only observation is that $\fix(\phi_g)$ is a subgroup of $G$. Moreover, it is a proper group if and only if $g \not\in Z(G)$. This is because if $g \not \in Z(G)$, then there exists an $x$ such that $gx\neq xg$,  and thus, $x \not\in \fix(\phi_g)$.  The bound then follows: 
	\begin{align*}
		|Z(G)|\cdot \trho_k ~&=~ \sum_g \abs{\fix(\phi_g)}^k &&[\text{Definition}]\\
		~&=~ |G|^k\cdot |Z(G)| +   \sum_{g\not \in  Z(G)} \abs{\fix(\phi_g)}^k && [\text{\cref{fact:centralizer}}]\\
				\trho_k ~&\leq~ |G|^k + \frac{|G|}{|Z(G)|}\cdot \parens[\bigg]{\frac{|G|}{|G|^c} }^k \; .&& [\text{\cref{fact:quasirandom_index}}]	\qedhere
\end{align*}
\end{proof}
One family of groups that has such a large quasirandomness factor is the finite simple groups of bounded rank. This is a theorem due to  Landazuri and Seitz~\cite{LS74}, and from their result, we also extract an explicit constant for the three groups not covered by~\cref{theo:liebeck}. 

\begin{theorem}\textup{\cite[Theorem 1]{LS74}}\label{theo:landazuri}
	Every finite simple group of Lie type of rank $r$, is $G^{c(r)}$-quasirandom, where $c(r)$ is a constant only depending on $r$. In particular, for $G$ being any one of, $\mathrm{PSL}_2(q), \mathrm{PSL}_3(q),\mathrm{PSU}_3(q)$, the group $G$ is $\Theta(|G|^{\frac{1}{4}})$-quasirandom.  
\end{theorem}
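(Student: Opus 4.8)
The plan is to recast the statement as a lower bound on the minimal degree of a nontrivial irreducible complex representation. By the definition used here, $G$ is $D$-quasirandom iff $d_{\min}(G) := \min\{\dim\rho : \rho \text{ nontrivial irreducible over } \C\} \ge D$, so it suffices to establish: (a) $d_{\min}(G) \ge c_1(r)\,|G|^{c(r)}$ for every finite simple group $G$ of Lie type of rank $r$, with $c_1(r),c(r) > 0$ functions of $r$ alone; and (b) $d_{\min}(G) = \Theta(|G|^{1/4})$ for $G \in \{\mathrm{PSL}_2(q),\mathrm{PSL}_3(q),\mathrm{PSU}_3(q)\}$ --- the lower bound being all that \cref{claim:quasi} (and hence \cref{fact:quasirandom_index}) consumes, and being tight for the two rank-$2$ families.

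For (a) the key input is \cite{LS74}, which supplies, family by family, an explicit lower bound $d_{\min}(G) \ge c_1(r)\,q^{e(r)}$ with $e(r) \ge 1$; I would cite it rather than reprove it. The mechanism it makes quantitative is the ``restrict to a unipotent subgroup'' argument: take a nontrivial unipotent $U \le G$ normalized by a torus $T$ and decompose $V|_U = \bigoplus_\chi V_\chi$ over $\widehat U$; if only the trivial $\chi$ occurs then $U \le \ker\rho$, but conjugates of $U$ generate the simple group $G$, so $\rho$ would be trivial --- a contradiction; hence some nontrivial $\chi$ occurs and $\dim V \ge |T\cdot\chi|$, a $T$-orbit on nontrivial characters, which for rank $r$ over $\F_q$ is $q^{\Omega(r)}$. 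Since $G$ is, up to a central factor of order $\gcd(n,q\mp1) = O_r(1)$, the group of $\F_q$-points of a connected reductive group of dimension $N(r)$, one has $|G| \le q^{N(r)}$, so $d_{\min}(G) \ge c_1(r)\,|G|^{e(r)/N(r)}$, i.e.\ one may take $c(r) = e(r)/N(r)$.

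For (b) I would quote the exact minimal degrees, which are classical and also contained in \cite{LS74}: $d_{\min}(\mathrm{PSL}_2(q)) = \Theta(q)$ (it lies between $(q-1)/2$ and $q-1$), $d_{\min}(\mathrm{PSL}_3(q)) = q^2+q$, and $d_{\min}(\mathrm{PSU}_3(q)) = q^2-q$; the matching upper bounds in the last two cases come from explicit small representations --- the degree-$(q^2+q)$ constituent of the $\mathrm{PSL}_3(q)$-permutation module on $\mathbb{P}^2(\F_q)$, and the degree-$(q^2-q)$ Weil/cuspidal representation of $\mathrm{PSU}_3(q)$, both irreducible and nontrivial on $G$. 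Since $|\mathrm{PSL}_2(q)| = \Theta(q^3)$ and $|\mathrm{PSL}_3(q)|,|\mathrm{PSU}_3(q)| = \Theta(q^8)$, this gives $d_{\min} = \Theta(|G|^{1/3})$ for the first and $\Theta(|G|^{1/4})$ for the other two; taking the common exponent $1/4$ (and $1/4 \le 1/3$) proves the displayed ``$\Theta(|G|^{1/4})$''.

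The main obstacle is part (a): the bound $d_{\min}(G) \ge c_1(r)\,q^{e(r)}$, though morally the one-line argument above, becomes a long case analysis once made uniform and sharp over all families (the classical series $\mathrm{PSL}_n,\mathrm{PSp}_{2n},\mathrm{PSU}_n$, the orthogonal series, and the exceptional types), since the right $U$, the orbit sizes on $\widehat U$, and the exclusion of still-smaller constituents (via Clifford theory relative to $U$) all depend on the type --- which is exactly why I would invoke \cite{LS74} as a black box and carry out the argument by hand only for the three listed groups, whose tori and root subgroups are small and explicit. A secondary, bookkeeping obstacle is keeping the $\gcd(n,q\mp1)$ factors straight when passing between $\mathrm{SL}/\mathrm{SU}$ and $\mathrm{PSL}/\mathrm{PSU}$ and checking that the exhibited small representations are trivial on the relevant center (hence genuine representations of $G$); for $q$ large this is immediate. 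Finally, I note an alternative that avoids representation theory entirely: since \cref{fact:quasirandom_index} only uses that proper subgroups have index $\ge D$, one could instead bound the minimal proper-subgroup index of these rank-$\le 2$ groups using the classical classifications of their maximal subgroups (Dickson for $\mathrm{PSL}_2(q)$, Mitchell/Hartley for $\mathrm{PSL}_3(q)$ and $\mathrm{PSU}_3(q)$); I prefer the character-degree route because it proves \cref{theo:landazuri} as literally stated.
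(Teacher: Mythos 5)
The paper gives no proof of this statement at all: Theorem~\ref{theo:landazuri} is stated as a direct citation of \cite[Theorem~1]{LS74} and is consumed as a black box in Claim~\ref{claim:quasi} and Theorem~\ref{theo:main_aut2}. There is therefore no ``paper's proof'' to compare against, and what you have written is a reconstruction of the cited literature rather than an alternative to anything in this paper. That reconstruction is broadly sound: part~(a) correctly identifies the Landazuri--Seitz restriction-to-a-unipotent-subgroup mechanism and correctly observes that $|G|\le q^{N(r)}$ converts a $q$-power lower bound on $d_{\min}$ into a $|G|$-power lower bound; part~(b) correctly quotes the minimal degrees $\Theta(q)$, $q^2+q$, $q^2-q$ and the orders $\Theta(q^3)$, $\Theta(q^8)$, $\Theta(q^8)$ for the three listed families.

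One point deserves flagging, though it is arguably a wrinkle in the paper's phrasing rather than in your argument. For $\mathrm{PSL}_2(q)$ your own numbers give $d_{\min}=\Theta(|G|^{1/3})$, not $\Theta(|G|^{1/4})$. Writing ``taking the common exponent $1/4$ \dots\ proves the displayed $\Theta(|G|^{1/4})$'' is not quite right: lowering the exponent preserves the lower bound $\Omega(|G|^{1/4})$ but destroys the upper bound, so the $\Theta$ as literally stated fails for $\mathrm{PSL}_2(q)$. The honest statement is that all three families are $\Omega(|G|^{1/4})$-quasirandom, with the exponent tight for $\mathrm{PSL}_3(q)$ and $\mathrm{PSU}_3(q)$ and improvable to $1/3$ for $\mathrm{PSL}_2(q)$. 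Since Fact~\ref{fact:quasirandom_index} and Claim~\ref{claim:quasi} only use the lower bound, nothing downstream is affected, but your write-up should say $\Omega$, not assert that it proves the displayed $\Theta$. Your closing remark about the maximal-subgroup alternative is apt and, as you note, would in fact suffice for everything the paper does with this theorem, since only Fact~\ref{fact:quasirandom_index} is ever invoked.
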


Combined with~\cref{claim:quasi}, we obtain a more transparent proof of~\cref{theo:liebeck}, albeit with a weaker constant.

\subsubsection{Extraspecial Groups}

\textit{Extraspecial $p$-groups} generalize the \textit{Heisenberg group}, the group of $3\times 3$ unitriangular matrices (upper-triangular with $1$s on the diagonal) over $\F_p$. Such groups play an important role in quantum complexity. For instance, this family of groups has been studied in the context of the \textit{hidden subgroup problem}~\cite{ISS07} (see also the references within). They also appear in the context of quantum gates construction~\cite{rowell10}.    

\begin{definition}[Extraspecial group]
The \textit{Frattini subgroup}, $\Phi(G)$, of a group $G$ is the intersection of all maximal subgroups of $G$. A $p$-group is \textit{extraspecial} if $Z(G) = \Phi(G) = [G,G]$ and $\abs{Z(G)} = p$.	
\end{definition}

\begin{theorem}{\textup{\cite[Proposition 7.1]{Pan04}}}
	Let $G$ be a group of order $p^r$ such that $\abs{Z(G)} = \abs{[G,G]} = p$.  Then, $G$ has $p$ conjugacy classes of size one, and $p^{r-1} -1$ conjugacy classes of size $p$ each. And these are all the conjugacy classes. In particular, this holds for extraspecial $p$-groups.
\end{theorem}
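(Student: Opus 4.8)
The plan is to read off both conclusions directly from the two constraints $\abs{Z(G)} = p$ and $\abs{[G,G]} = p$. First I would pin down the classes of size one: by definition $\abs{\mathcal{C}_g} = 1$ exactly when $g$ commutes with every element of $G$, i.e. $g \in Z(G)$, so there are precisely $p$ such classes.

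Next I would show that every other conjugacy class has size exactly $p$. The key observation is that $xgx^{-1} = (xgx^{-1}g^{-1})\,g$ for all $x,g\in G$, and $xgx^{-1}g^{-1}\in[G,G]$; hence the entire class $\mathcal{C}_g$ lies inside the single coset $[G,G]\,g$, so $\abs{\mathcal{C}_g}\le\abs{[G,G]} = p$. On the other hand $\abs{\mathcal{C}_g} = [G:C_G(g)]$ (as in \cref{fact:centralizer}) divides $\abs{G} = p^r$, so the only possibilities are $\abs{\mathcal{C}_g}\in\{1,p\}$. Any non-central $g$ has $\abs{\mathcal{C}_g} > 1$, which then forces $\abs{\mathcal{C}_g} = p$.

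A counting step finishes the argument: the $p^r - p$ non-central elements are partitioned into classes of size $p$, giving $(p^r-p)/p = p^{r-1}-1$ of them, and together with the $p$ central classes this accounts for all of $G$ (consistency check: $p\cdot 1 + (p^{r-1}-1)\cdot p = p^r$). For the final sentence, an extraspecial $p$-group satisfies $Z(G) = [G,G]$ and $\abs{Z(G)} = p$ by definition, so the hypothesis applies verbatim.

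I do not anticipate a genuine obstacle. The only point that needs care is the commutator identity confining each conjugacy class to a coset of $[G,G]$: this is precisely the mechanism that converts the bound on $\abs{[G,G]}$ into a uniform bound on class sizes, and everything after it is bookkeeping.
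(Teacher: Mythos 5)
Your argument is correct, and it is a complete, elementary proof of a statement the paper merely cites (to~\cite[Prop.~7.1]{Pan04}) without reproducing. The two ingredients you isolate are exactly the right ones: the commutator identity $xgx^{-1}=(xgx^{-1}g^{-1})g$ confines each conjugacy class to a single coset of $[G,G]$, forcing $\abs{\mathcal{C}_g}\le p$; while $\abs{\mathcal{C}_g}=[G:C_G(g)]$ divides $p^r$, so the only options are $1$ and $p$, with $1$ occurring exactly on $Z(G)$. The count $(p^r-p)/p=p^{r-1}-1$ and the sanity check $p\cdot 1+(p^{r-1}-1)\cdot p=p^r$ then close the argument, and the specialization to extraspecial groups is immediate from the definition ($Z(G)=[G,G]=\Phi(G)$ of order $p$). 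Since the paper offers no proof of its own, there is no approach to compare against; your write-up would serve as a self-contained replacement for the citation.
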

\begin{corollary}\label{cor:extraspecial}
	For an extraspecial group $G$ of order $p^r$, $\frac{\trho_k(G)}{|G|^k} ~=~ 1 +(p^{r-1}-1)p^{-k}$.
\end{corollary}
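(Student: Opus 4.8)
The plan is to obtain this as an immediate substitution into the master formula of \cref{cor:trho_expression}, which expresses $\trho_k(G)$ purely in terms of the center and the sizes of conjugacy classes. First I would record the three pieces of group-theoretic data that the quoted theorem of \cite{Pan04} (together with the definition of an extraspecial group) supplies: since $G$ is extraspecial of order $p^r$ we have $|Z(G)| = p$; there are exactly $p$ conjugacy classes of size $1$ (these are precisely the singletons $\{z\}$ for $z \in Z(G)$); and there are exactly $p^{r-1} - 1$ conjugacy classes of size $p$, and no others.

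Next I would compute the group zeta-type sum appearing in \cref{cor:trho_expression}, namely $\eta^G(k-1) = \sum_{\mathcal{C} \in \mathcal{C}(G)} |\mathcal{C}|^{1-k}$. The $p$ singleton classes each contribute $1^{1-k} = 1$, for a total of $p$; the $p^{r-1}-1$ classes of size $p$ each contribute $p^{1-k}$, for a total of $(p^{r-1}-1)\,p^{1-k}$. Hence
\[
\sum_{\mathcal{C} \in \mathcal{C}(G)} |\mathcal{C}|^{1-k} ~=~ p + (p^{r-1}-1)\,p^{1-k}.
\]

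Finally I would plug this, together with $|Z(G)| = p$, into $\trho_k(G) = \frac{|G|^k}{|Z(G)|}\sum_{\mathcal{C}} |\mathcal{C}|^{1-k}$ to get
\[
\trho_k(G) ~=~ \frac{|G|^k}{p}\Big(p + (p^{r-1}-1)\,p^{1-k}\Big) ~=~ |G|^k\Big(1 + (p^{r-1}-1)\,p^{-k}\Big),
\]
and dividing by $|G|^k$ yields the claimed identity $\trho_k(G)/|G|^k = 1 + (p^{r-1}-1)p^{-k}$. There is essentially no obstacle here: the only thing to be a little careful about is bookkeeping the normalization by $|Z(G)|$ correctly (the sum in \cref{cor:trho_expression} runs over conjugacy classes, i.e.\ over $\Inn(G) \cong G/Z(G)$, so the extra factor $|G|^k/|Z(G)|$ rather than $|G|^k$ is exactly what cancels the $p$ from the singleton classes). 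One could also double-check the count $p + p(p^{r-1}-1) = p^r = |G|$ of group elements as a sanity check that the conjugacy-class data is being used consistently.
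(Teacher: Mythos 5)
Your proof is correct and is exactly the computation the paper has in mind: plug the conjugacy-class data ($p$ singleton classes, $p^{r-1}-1$ classes of size $p$, $|Z(G)| = p$) into \cref{cor:trho_expression} and simplify. The paper's one-line proof is just a compressed version of your argument.
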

\begin{proof}
	The result follows by plugging the sizes of conjugacy classes in \cref{cor:trho_expression}.
\end{proof}
We are now ready to prove our testing result.  

\begin{theorem}[Restatement of Theorem 1.4]\label{theo:main_aut2}
	The following results hold using \hyperref[test:inner]{$\test\_ {\sf Inner}_k$ } \textup{:} 
	\begin{itemize}
		\item For any $n \neq 6$, $\Aut(\Sym_n)\,$ is $\parens[\Big]{k,\,\delta,\,  \delta^{\frac{1}{k-2}} - o_n(1)}$-testable for every $k \geq 3$.
		\item For every non-abelian finite simple group $G$, $\Inn(G)$ is $\parens[\Big]{k,\,\delta,\,  \delta^{\frac{1}{k-5}} - o_{|G|}(1)}$-testable for every $k \geq 6$.
		\item For any exstraspecial group $G$ of order $p^r$, $\Inn(G)$ is $\parens[\Big]{k,\,\delta,\,  \delta^{\frac{1}{k-r}} - o_{p}(1)}$-testable for every $k \geq r+1$. In particular, for the family of Heisenberg groups over $\F_p$, $H_p$, $\Inn(H_p)$ is   $\parens[\Big]{k,\,\delta,\,  \delta^{\frac{1}{k-3}} - o_{p}(1)}$-testable for any $k \geq 4$.
		\item Alternatively, if $p = O(1)$ is fixed and $r \to \infty$, then we have that $\Inn(G)$ is $\parens[\Big]{k,\,\delta,\, \frac{1}{p} \delta^{\frac{1}{k-1}}}$-testable for every $k \geq 2$, and $r\geq 3$.
	\end{itemize}
	Additionally, we  get an upper bound of $\max_\phi \agrHom \leq 2\delta^{\frac{1}{k}}$, for any group $G$ and $k$ as above.
	\end{theorem}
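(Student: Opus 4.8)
The plan is to obtain all four bullets uniformly from three facts already established: \cref{claim:inner_aut}, which shows $\sum_{\phi\in\Inn(G)}\agrHom^k = \trho_k\,\delta_k(f)/|G|^k$ for the test $\test\_{\sf Inner}_k$; \cref{cor:trho_expression}, which evaluates $\trho_k = \frac{|G|^k}{|Z(G)|}\sum_{\mathcal C}|\mathcal C|^{1-k}$ in terms of the conjugacy class sizes of $G$; and \cref{lem:geninn_test}, which turns a bound $\trho_i/(|G|\trho_{i-1})\ge c$, valid for all $i\ge\tau$, into the guarantee that passing $\test\_{\sf Inner}_k$ with probability $\delta_k$ forces some $\phi\in\Inn(G)$ with $\agrHom\ge c\,\delta_k^{1/(k-\tau+1)}$. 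Completeness is immediate from the test definition. Thus for each family the whole task reduces to (a) finding the smallest threshold $\tau$ and the constant $c$ for which the ratio bound holds, and (b) verifying $\trho_k/|G|^k\le 2$, which combined with $\max_\phi\agrHom^k\le\sum_\phi\agrHom^k = \trho_k\delta_k/|G|^k$ yields the universal estimate $\max_\phi\agrHom\le 2\delta_k^{1/k}$.

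For $\Sym_n$ (and $A_n$), \cref{lem:rho_sym} gives $\trho_i/|G|^i = 1+O(n^{-2(i-1)})$ for all $i\ge 2$, hence $\trho_i/(|G|\trho_{i-1})\ge 1-O(n^{-2})$ for all $i\ge 3$; so $\tau = 3$, $c = 1-o_n(1)$, and \cref{lem:geninn_test} gives $\agrHom\ge(1-o_n(1))\delta_k^{1/(k-2)}\ge\delta_k^{1/(k-2)}-o_n(1)$ for $k\ge 3$, which is the claimed bound since $\Aut(\Sym_n)=\Inn(\Sym_n)$ for $n\ne 6$. For a non-abelian finite simple group $G$ (so $|Z(G)|=1$): if $G\notin\{\mathrm{PSL}_2(q),\mathrm{PSL}_3(q),\mathrm{PSU}_3(q)\}$, \cref{theo:liebeck} gives $\trho_i/|G|^i=\eta^G(i-1)\le 1+o_{|G|}(1)$ for $i\ge 3$ while $\eta^G(i-1)\ge 1$ always, so again $\tau=3$, $c=1-o_{|G|}(1)$, and $\agrHom\ge\delta_k^{1/(k-2)}-o_{|G|}(1)$, which in particular gives the stated $\ge\delta_k^{1/(k-5)}-o_{|G|}(1)$ for $k\ge 6$; for the three exceptional groups, \cref{theo:landazuri} says they are $\Theta(|G|^{1/4})$-quasirandom, so \cref{claim:quasi} gives $\trho_k/|G|^k\le 1+|G|^{1-k/4}$, whence $\trho_i/(|G|\trho_{i-1})\ge 1-o_{|G|}(1)$ precisely when $1-(i-1)/4<0$, i.e.\ $i\ge 6$, forcing $\tau=6$ and $\agrHom\ge\delta_k^{1/(k-5)}-o_{|G|}(1)$ for $k\ge 6$.

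For an extraspecial $G$ of order $p^r$, \cref{cor:extraspecial} gives the exact value $\trho_k/|G|^k = 1+(p^{r-1}-1)p^{-k}$, so $\trho_i/(|G|\trho_{i-1}) = \frac{1+(p^{r-1}-1)p^{-i}}{1+(p^{r-1}-1)p^{-i+1}}$. Letting $p\to\infty$, the denominator is $1+o_p(1)$ as soon as $p^{r-i}=o_p(1)$, i.e.\ $i\ge r+1$, so $\tau=r+1$, $c=1-o_p(1)$, giving $\agrHom\ge\delta_k^{1/(k-r)}-o_p(1)$ for $k\ge r+1$ (Heisenberg being $r=3$); when instead $p=O(1)$ is fixed and $r\to\infty$, cross-multiplying shows the same ratio is $\ge\frac1p$ for every $i\ge 2$ (the inequality reduces to $p\ge 1$), so $\tau=2$, $c=\frac1p$, giving $\agrHom\ge\frac1p\delta_k^{1/(k-1)}$ for $k\ge 2$. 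For part (b): $\trho_k/|G|^k\le 2$ holds for $n\ge 2$ (for $\Sym_n$), for $k\ge 2$ (finite simple groups off the exceptional list), for $k\ge 4$ (the three exceptional groups), and for $k\ge r-1$ (extraspecial groups, so throughout the stated range $k\ge r+1$), which gives the uniform $\max_\phi\agrHom\le 2\delta_k^{1/k}$ in the first three bullets; in the $p$-fixed extraspecial regime $\trho_k/|G|^k$ is genuinely large, so no such bound is asserted there.

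The elementary algebra above is routine; the real point, and the reason the thresholds $\tau$ differ across the families, is that $\trho_1/|G|$ is (up to the factor $|Z(G)|$) the number of conjugacy classes of $G$ and so grows for a sufficiently non-abelian group while $\trho_i/|G|^i$ tends to $1$ for $i\ge 2$, so $\trho_2/(|G|\trho_1)\to 0$ and the telescoping in \cref{lem:geninn_test} cannot be started at $\tau=2$: one must start at the first level $\tau$ at which $\trho_{\tau-1}/|G|^{\tau-1}$ is already $1+o(1)$. Pinning down that $\tau$ for each family is exactly what the conjugacy-class estimates \cref{lem:rho_sym}, \cref{theo:liebeck}, \cref{theo:landazuri}$+$\cref{claim:quasi}, and \cref{cor:extraspecial} supply, and the one delicate point is that the three small-rank classical groups are only $|G|^{1/4}$-quasirandom, which is what pushes their threshold up to $\tau=6$ and weakens the finite-simple-group exponent to $1/(k-5)$ with $k\ge 6$.
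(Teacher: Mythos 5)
Your proposal is correct and follows exactly the same route as the paper: feed the ratio bounds $\trho_i/(|G|\trho_{i-1})\ge c$ into \cref{lem:geninn_test}, with the relevant estimates coming from \cref{lem:rho_sym} (symmetric group), \cref{theo:liebeck} and \cref{theo:landazuri}+\cref{claim:quasi} (finite simple groups, with the three small-rank exceptions forcing $\tau=6$), and \cref{cor:extraspecial} (extraspecial groups). You carry out in full the threshold and constant calculations that the paper's three-line proof leaves implicit, and your observation that the upper bound $\max_\phi\agrHom\le 2\delta^{1/k}$ fails in the $p$-fixed extraspecial regime is a correct reading of the qualification in the theorem statement.
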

\begin{proof}To use \cref{lem:geninn_test}, all we need to do is bound $\frac{\trho_i}{G| \trho_{i-1}}$ for all $i \geq \tau$, for some $\tau$.
For the symmetric group, \cref{lem:rho_sym} gives a bound on $\trho_k$ for all $k \geq 2$, and thus, we have a bound on the ratio for all $k \geq 3$. For finite simple groups, \cref{claim:quasi} coupled with \cref{theo:liebeck,theo:landazuri}, gives a bound of $\frac{\trho_k}{|G|\trho_{k-1}} \geq 1-o_{|G|}(1)$ for every $k \geq 6$. For the extraspecial groups, we use~\cref{cor:extraspecial}.
\end{proof}

\section{Lifting Homomorphism Tests}\label{sec:beyond}

\subsection{A General Lifting Lemma}\label{subsec:lifting}

In this section, we will see how to lift the analysis of our test for $\Hom(G,H)$ to that of $\Hom(\tG,\tH)$. A key point is that this lifting is not an algorithmic reduction but a method to reuse our analysis by utilizing the fact that it only depends on group-theoretic constants.

%

\begin{definition}[Lifted Homomorphisms]
Let $\tG, \tH, G, H$ be finite groups such that there is a surjective homomorphism $\pi_G: \tG \to G$, and an injective inclusion $\iota_H: H\to \tH$. Then we define the subset of \textit{lifted homorphisms} as those obtained by lifting $\Hom(G,H)$,
\[
\LiftHom(\tG, \tH) = \braces[\big]{\iota_H\circ \phi\circ \pi_G \mid \phi \in \Hom(G,H) } \subseteq \Hom(\tG,\tH).
\]
\[
\begin{tikzcd}
 \tilde{G} \arrow[d, two heads, "\pi_G"] \arrow[r, dashed, " \tilde{\varphi}"] & \tilde{H} \\
 G \arrow[r,"\varphi" below] & H\arrow[u, hook, "\iota_H"]
\end{tikzcd}
\]	
 Note that if $\Hom(G,H)$ is an abelian group, then so is $\LiftHom(\tG,\tH)$.
\end{definition}

Let us now see a simple natural example in which the set of lifted homomorphisms contains all possible homomorphisms. We will use this and similar examples later to derive our more general results. 

\begin{example}\label{example:lift_cyclic}
Let $G = H = \tH = \Z_{p^r}$. And pick $\tG = \Z_{p^s}$ for $s >r$. Then, there is a natural projection $\tG\to G$ whose kernel (apart from $0$) is precisely the set of elements of order greater than $p^r$. Now, for any homomorphism from $\phi: \tG\to H$, $\phi(x) = x\phi(1)$ but $\phi(1)\in \Z_{p^r}$ has order $\leq p^r$. Thus, $\ker (\pi) \subseteq \ker(\phi)$. This shows that every homomorphism factors through $G$ and thus, $\Hom(\tG, H) = \LiftHom(\tG, H)$.    
\end{example}

\paragraph{The lifted test} Our general machinery works identically as  $\widetilde{\Gamma}_{\vec{x}}: \LiftHom(\tG,\tH) \to \tH^k$ is still a map between groups and has the $N$-to-one property. The group $\tH_{\vec{x}} = \mathrm{Im}\parens{\widetilde{\Gamma}_{\vec{x}}}$ is defined as before. 


\begin{lemma}\label{lem:lift_gamma}
	Let $G, H, \tG, \tH$ be groups as above, and let $k \geq 1$ be an integer. 
	\begin{align*}
		\abs{\ker(\widetilde{\Gamma}_{\vec x})} ~&=~ \abs[\big]{\ker(\Gamma_{\pi_G({\vec x})})},\\
		\gamma_k(\tG, \tH) ~&=~ \abs{\ker{\pi_G}}^k\cdot  \gamma_{k}(G,H),\\
		\eta_k(\tG, \tH) ~&=~ \eta_k(G, H) \qquad \text{ if }\; \iota_H \;\;\text{is an isomorphism},\\
					~&=~    1 \qquad \text{otherwise}. 
	\end{align*}
	Moreover, for any subset $S\subseteq G^k$, $\Pr{\vec y \in G^k}{\vec y \in S} = \Pr{\vec x \in \tG^k}{ \pi(\vec x) \in S}$.
\end{lemma}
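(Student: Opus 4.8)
The plan is to reduce every assertion to two elementary observations: (i) the map $\Lambda\colon \Hom(G,H)\to\LiftHom(\tG,\tH)$ given by $\Lambda(\phi)=\iota_H\circ\phi\circ\pi_G$ is a bijection, and (ii) the componentwise projection $\pi:=\pi_G^{\times k}\colon \tG^k\to G^k$ is $\abs{\ker\pi_G}^k$-to-one. For (i), surjectivity of $\Lambda$ onto $\LiftHom(\tG,\tH)$ is the definition of that set, and injectivity uses surjectivity of $\pi_G$ together with injectivity of $\iota_H$: if $\Lambda(\phi)=\Lambda(\psi)$ then $\iota_H(\phi(\pi_G\tilde g))=\iota_H(\psi(\pi_G\tilde g))$ for all $\tilde g\in\tG$, hence $\phi=\psi$. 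Moreover $\Lambda$ intertwines the two evaluation maps: for every $\vec x\in\tG^k$,
\[
\widetilde{\Gamma}_{\vec x}\circ\Lambda ~=~ \iota_H^{\times k}\circ\Gamma_{\pi(\vec x)}
\]
as maps $\Hom(G,H)\to\tH^k$, which is immediate upon unwinding definitions. For (ii), since $\pi_G$ is a surjective homomorphism each fiber $\pi_G^{-1}(g)$ is a coset of $\ker\pi_G$, so $\abs{\pi^{-1}(\vec y)}=\abs{\ker\pi_G}^k$ for every $\vec y\in G^k$; hence $\abs{\pi^{-1}(S)}=\abs{\ker\pi_G}^k\abs{S}$ for any $S\subseteq G^k$, which is exactly the last displayed identity of the lemma (both probabilities equal $\abs{S}/\abs{G}^k$).

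For the kernel identity I would use the intertwining relation: since $\iota_H^{\times k}$ is injective, $\ker(\widetilde{\Gamma}_{\vec x}\circ\Lambda)=\ker(\Gamma_{\pi(\vec x)})$, and since $\Lambda$ is a bijection with image $\LiftHom(\tG,\tH)$ we get $\ker(\widetilde{\Gamma}_{\vec x})=\Lambda\bigl(\ker(\Gamma_{\pi(\vec x)})\bigr)$, so the two kernels have the same cardinality. (When $\Hom(G,H)$ is abelian, $\Lambda$ is in fact a group isomorphism for the pointwise product, so ``$\ker$'' is literally a subgroup; otherwise read $\ker$ as the set of elements sent to the identity and the same counting applies.) For $\gamma_k(\tG,\tH)$ I would combine this with (ii), grouping the sum over $\vec x\in\tG^k$ according to the value $\vec y=\pi(\vec x)$:
\[
\gamma_k(\tG,\tH) ~=~ \sum_{\vec x\in\tG^k}\abs{\ker(\widetilde{\Gamma}_{\vec x})} ~=~ \sum_{\vec y\in G^k}\abs{\pi^{-1}(\vec y)}\cdot\abs{\ker(\Gamma_{\vec y})} ~=~ \abs{\ker\pi_G}^k\cdot\gamma_k(G,H).
\]
Alternatively one can observe directly that $\ker(\Lambda\phi)=\pi_G^{-1}(\ker\phi)$ has size $\abs{\ker\pi_G}\cdot\abs{\ker\phi}$ and invoke $\gamma_k=\sum_\phi\abs{\ker\phi}^k$ from \cref{clm:gamma1}, whose proof transfers verbatim to $\widetilde{\Gamma}$ on $\LiftHom$.

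For the $\eta_k$ identity I would split into the two cases of the statement. If $\iota_H$ is an isomorphism, then $\iota_H^{\times k}$ is a bijection $H^k\to\tH^k$, so by the intertwining relation $\Ima(\widetilde{\Gamma}_{\vec x})=\iota_H^{\times k}\bigl(\Ima(\Gamma_{\pi(\vec x)})\bigr)$; therefore $\sfH_{\vec x}=\tH^k$ iff $\Ima(\Gamma_{\pi(\vec x)})=H^k$, i.e.\ $\cG_k(\tG,\tH)=\pi^{-1}\bigl(\cG_k(G,H)\bigr)$, and (ii) yields $\eta_k(\tG,\tH)=\abs{\cG_k(G,H)}/\abs{G}^k=\eta_k(G,H)$. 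If $\iota_H$ is not an isomorphism, then $\iota_H(H)$ is a proper subgroup of $\tH$, so $\Ima(\widetilde{\Gamma}_{\vec x})\subseteq\iota_H(H)^k\subsetneq\tH^k$ for \emph{every} $\vec x\in\tG^k$ (already for $k=1$); hence $\cG_k(\tG,\tH)=\tG^k$ and $\eta_k(\tG,\tH)=1$.

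I do not expect a genuine obstacle: the lemma is a bookkeeping exercise once the intertwining square and the fiber count are in place. The only points requiring care are the role of abelianness of $\Hom(G,H)$ (needed so that $\LiftHom(\tG,\tH)$ is a group and the $N$-to-one statement reads as for $\Gamma$), and making sure the fiber-counting in (ii) is applied to the componentwise map $\pi_G^{\times k}$ rather than to $\pi_G$ itself.
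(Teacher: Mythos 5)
Your proposal is correct and follows essentially the same route as the paper: both establish the fiber identity $\abs{\ker(\iota_H\circ\phi\circ\pi_G)}=\abs{\ker\pi_G}\cdot\abs{\ker\phi}$ via injectivity of $\iota_H$ and surjectivity of $\pi_G$, derive the $\gamma_k$ identity by summing (your ``alternative'' route via $\gamma_k=\sum_\phi\abs{\ker\phi}^k$ is exactly what the paper uses), and split the $\eta_k$ analysis by whether $\iota_H$ is an isomorphism. Your explicit introduction of $\Lambda$ and the intertwining square is a cleaner packaging of reasoning the paper leaves implicit, but it does not change the argument.
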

\begin{proof} All the claim follow simply by writing out the definitions. 
\begin{align*}
	\ker(\widetilde{\Gamma}_{\vec x}) ~&=~ \{\iota\circ \phi\circ \pi_G \mid \iota\circ \phi\circ \pi_G (\vec x) = 0  \}\;,\\
~&\simeq~ \{\phi \mid \phi\circ \pi_G (\vec x) = 0  \}\;,\\	
~&=~ \ker(\widetilde{\Gamma}_{\pi(\vec x)}) \;\;.	
\end{align*}
Observe that since $\iota_H$ is injective, $\ker(\iota_H\circ\phi\circ \pi_G) = \pi_G^{-1} \parens{\ker(\phi)}$. Since, $\pi_G$ is a surjection we have that $\pi_G$ is a $|\ker{\pi_G}|$ to one map on the entire image which is $G$. Thus,
\[
\abs{\ker(\iota_H\circ\phi\circ \pi_G)} ~=~ \abs{\pi_G^{-1} \parens{\ker(\phi)}} ~=~ \abs{\ker(\pi_G)} \cdot \abs{\ker(\phi)} \;.
\]
The computation of $\gamma_k$ is similar.
\begin{align*}
	\gamma_k(\tG, \tH) ~&=~ \sum_{\psi \in \LiftHom(\tG, \tH)} \abs{\ker{\psi}}^k\;,\\
	~&=~ \sum_{\phi \in \Hom(G, H)} \abs{\ker{\iota_H\circ\phi\circ \pi_G}}^k\;,\\
		~&=~ \abs{\ker(\pi_G)}^k \sum_{\phi \in \Hom(G, H)} \abs{\ker{\phi}}^k ~=~ \abs{\ker(\pi_G)}^k\cdot \gamma_{k}(G,H).
\end{align*}
Now, by definition $\eta_k$ is the fraction of tuples $\vec{x}$ such that $\widetilde{\sfH}_{\vec{x}} \neq \tH^k$. But, $\widetilde{\sfH}_{\vec{x}} = \iota_H\parens{{\sfH}_{\pi(\vec{x})}}$ 
Clearly, if $\iota_H$ is not an isomorphism,  $\eta_k = 1$ as $\widetilde{\sfH}_{\vec x} \subsetneq \tH^k$ for any $k \geq 1$. If it is, then it is equal precisely to the fraction of tuples for which ${\sfH}_{\vec{x}} = H^k$, \ie for an $\eta_k(G,H)$-fraction. The last claim is a simple consequence of the $N$-to-one property of $\pi$.
\end{proof}

In the following few subsections, we will utilize this lifting technique to extend the results we have obtained to other settings.

\subsection{Character Testing via Abelianization trick}\label{subsec:character}

When the target $H$ is abelian, all homomorphisms from $G\to H$, factorize through the \textit{abelianistaion} of $G$, and thus all the homomorphisms are in fact lifts of homomorphisms between abelian groups. We state this more precisely: 

\begin{fact}[Homomorphisms abelianize]\label{fact:hom_ab}
	Let $G$ be any group and $H$ be an abelian group. Let $[G,G] = \angles{ ghg^{-1}h^{-1} \mid g, h\in G}$ be the derived (or commutator) group, and $G/[G,G]$ its abelianisation. Then, $\Hom(G, H) \cong \Hom(G/[G,G], H)$.
\end{fact}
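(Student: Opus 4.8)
The statement to prove is \cref{fact:hom_ab}: that $\Hom(G,H) \cong \Hom(G/[G,G], H)$ when $H$ is abelian. This is a standard fact, often called the universal property of abelianization. Let me sketch the proof.

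\textbf{Plan.} The plan is to establish the isomorphism via the universal property of the quotient $G/[G,G]$. First I would recall that $[G,G]$, the subgroup generated by all commutators $ghg^{-1}h^{-1}$, is a \emph{normal} subgroup of $G$ (indeed it is characteristic, since conjugation permutes commutators), so the quotient $G/[G,G]$ is a well-defined group, and moreover it is abelian (because the image of every commutator is trivial in the quotient). Let $q\colon G \to G/[G,G]$ be the canonical projection.

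\textbf{Key steps.} The map I would exhibit is precap composition with $q$: define $\Psi\colon \Hom(G/[G,G], H) \to \Hom(G,H)$ by $\Psi(\bar\phi) = \bar\phi \circ q$. This is well-defined since a composition of homomorphisms is a homomorphism, and it is clearly a group homomorphism with respect to pointwise multiplication in both $\Hom$-groups (since $H$ is abelian, $(\bar\phi_1\bar\phi_2)\circ q = (\bar\phi_1\circ q)(\bar\phi_2\circ q)$). For injectivity: if $\bar\phi\circ q = \bar\psi \circ q$, then since $q$ is surjective, $\bar\phi = \bar\psi$. For surjectivity: given any $\phi \in \Hom(G,H)$, I claim $[G,G] \subseteq \ker\phi$. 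Indeed, $\phi(ghg^{-1}h^{-1}) = \phi(g)\phi(h)\phi(g)^{-1}\phi(h)^{-1} = 1$ because $H$ is abelian; since $\ker\phi$ is a subgroup containing all commutators, it contains the subgroup they generate, namely $[G,G]$. Then the first isomorphism theorem (or the universal property of quotients) produces a unique $\bar\phi\colon G/[G,G] \to H$ with $\bar\phi\circ q = \phi$, so $\phi$ is in the image of $\Psi$. Hence $\Psi$ is a bijective homomorphism, i.e. an isomorphism.

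\textbf{Main obstacle.} There is essentially no obstacle here — this is a textbook fact, and the only thing that needs a moment's care is verifying that $\Psi$ respects the group structure on the $\Hom$-sets (which uses commutativity of $H$) and that $[G,G] \subseteq \ker\phi$ for every homomorphism $\phi$ into an abelian group. Both are one-line verifications. In the write-up I would likely just state it concisely: $[G,G] \subseteq \ker\phi$ for all $\phi\in\Hom(G,H)$ since $H$ is abelian, so every $\phi$ factors uniquely through $q$, giving the claimed bijection, which is readily checked to be a group isomorphism.
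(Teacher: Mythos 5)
Your proof is correct and complete; the paper itself states this as a Fact without proof, since it is the standard universal property of abelianization. Your argument — that $[G,G]\subseteq\ker\phi$ for any $\phi\in\Hom(G,H)$ with $H$ abelian, so every such $\phi$ factors uniquely through the projection $q\colon G\to G/[G,G]$, and precomposition with $q$ is a bijective group homomorphism — is exactly the expected textbook reasoning.
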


Thus, $\Hom(G,H) = \LiftHom(G,H)$ where the lift is via the  projection $\pi_G: G\to [G,G]$.

\subsubsection{Character Testing for $\GL_n(q)$}
Let $\GL_n(q)$ be the group of inverible $n\times n$ matrices over $\F_q$ for $q >2, n\geq 2$, and let $\F_q^*$ be the multiplicative group of non-zero elements of $\F_q$. Note that $\F_q^* \cong \Z_{q-1}$.  We wish to study the testing of homomorphisms $f: \GL_n(q)\to \F_q^*$, \ie the $\F_q$-characters or the one-dimensional \textit{representations}.

\begin{fact}[Linear Characters of $\GL_n(q)$]\label{fact:characters_gln}
	Let $G = \GL_n(q)$ for $q >2$. Then, $[G,G] = \SL_n(q)$, \ie matrices of determinant $1$, and thus, $G/[G,G] \cong \F_q^* \cong \Z_{q-1}$. Thus, $\Hom(\GL_n(q), \F_q^*) \cong \Hom(\F_q^*, \F_q^*)$. Moreover, for any prime power $q$, $\Z_{q-1}$ is a cyclic group.
\end{fact}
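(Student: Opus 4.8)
The plan is to establish the three assertions in turn: first $[\GL_n(q),\GL_n(q)] = \SL_n(q)$; then the resulting abelianization $\GL_n(q)/[\GL_n(q),\GL_n(q)] \cong \F_q^* \cong \Z_{q-1}$; and finally the claimed isomorphism of homomorphism sets, which is immediate from \cref{fact:hom_ab}.

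For the commutator identity, one inclusion is free: $\det \colon \GL_n(q) \to \F_q^*$ is a homomorphism into an abelian group, so every commutator has determinant $1$, giving $[\GL_n(q),\GL_n(q)] \subseteq \SL_n(q)$. The content is the reverse inclusion. I would invoke the standard fact that $\SL_n(q)$ is generated by the elementary transvections $t_{ij}(c) := I + c\,E_{ij}$ with $i \neq j$ and $c \in \F_q$ (where $E_{ij}$ is the matrix unit), so it suffices to realize each $t_{ij}(c)$ as a commutator in $\GL_n(q)$. Writing $d_i(a)$ for the diagonal matrix with $a$ in position $i$ and $1$ elsewhere, conjugation scales the $(i,j)$-entry by $a$, so that $d_i(a)\,t_{ij}(b)\,d_i(a)^{-1} = t_{ij}(ab)$, and hence
\[
[d_i(a),\,t_{ij}(b)] \;=\; t_{ij}(ab)\,t_{ij}(b)^{-1} \;=\; t_{ij}\!\bigl(b(a-1)\bigr).
\]
Because $q > 2$ there is some $a \in \F_q^*$ with $a \neq 1$, so $a-1$ is a unit; choosing $b = c\,(a-1)^{-1}$ exhibits $t_{ij}(c)$ as a commutator. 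Since these transvections generate $\SL_n(q)$, we get $\SL_n(q) \subseteq [\GL_n(q),\GL_n(q)]$, and equality follows. This is precisely where the hypothesis $q > 2$ enters: for $q = 2$ the only available scalar is $a = 1$ and the argument degenerates, matching the fact that $\GL_2(2) \cong \Sym_3$ has commutator subgroup $A_3 \subsetneq \Sym_3 = \SL_2(2)$.

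Next, $\det$ is surjective onto $\F_q^*$ (e.g.\ $d_1(a) \mapsto a$) with kernel $\SL_n(q)$, so the first isomorphism theorem together with the previous paragraph gives $\GL_n(q)/[\GL_n(q),\GL_n(q)] \cong \F_q^*$. Since $\F_q^*$ is a finite subgroup of the multiplicative group of a field it is cyclic, of order $q-1$, so $\F_q^* \cong \Z_{q-1}$. Finally, as $\F_q^*$ is abelian, \cref{fact:hom_ab} yields $\Hom(\GL_n(q),\F_q^*) \cong \Hom\bigl(\GL_n(q)/[\GL_n(q),\GL_n(q)],\,\F_q^*\bigr) \cong \Hom(\F_q^*,\F_q^*) \cong \Hom(\Z_{q-1},\Z_{q-1})$.

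There is no serious obstacle here: the only ingredient beyond routine bookkeeping is the generation of $\SL_n(q)$ by transvections combined with the one-line commutator computation above, and the single point worth flagging explicitly in the write-up is that $q > 2$ is genuinely needed, which that computation makes transparent.
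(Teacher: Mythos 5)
The paper states this as a \texttt{Fact} without supplying a proof, so there is no internal argument to compare against. Your write-up is the standard proof of this well-known result, and it is correct in all its steps: the inclusion $[\GL_n(q),\GL_n(q)]\subseteq\SL_n(q)$ via the determinant homomorphism, the reverse inclusion via generation of $\SL_n(q)$ by transvections together with the commutator identity $[d_i(a),t_{ij}(b)] = t_{ij}(b(a-1))$, the isolation of where $q>2$ is used (existence of a unit $a-1$), the use of the first isomorphism theorem for $\det$, and the invocation of \cref{fact:hom_ab} (that homomorphisms into an abelian group factor through the abelianization) to get the final $\Hom$-isomorphism. Your $q=2$ counterexample $\GL_2(2)\cong\Sym_3$ with commutator subgroup $A_3$ is a good sanity check, though one could also remark that for $q=2$ the target $\F_q^*$ is trivial, which is the more directly relevant degeneracy for character testing.
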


We are now ready to use the lifting machinery and deduce a result for testing characters of $\GL_n(q)$, \ie $\Hom(\GL_n(q), \F_q^*)$ as we already have a testing algorithm~(\cref{thm:cyclic_gen}) for $\Hom(\F_q^*,\F_q^*) = \Hom(\Z_{q-1},\Z_{q-1})$.

\begin{theorem}[Character Testing for $\GL_n(q)$]\label{theo:GL_n}
Let $G = \GL_n(q)$ be the group of inverible $n\times n$ matrices over $\F_q$ for $q >2, n\geq 2$. Let $f: G\to \F_q^*$ be any function and fix an integer $k\geq 4$. Then if $f$ passes $\test_k$ with probability $\delta_k$, there exists a character $\phi \in \Hom(\GL_n(q), \F_q^*)$ such that $\agr(f,\phi) ~\geq~  \parens[\big]{\zeta(2)^2\cdot\delta_k}^{\frac{1}{k-3}}$.
\end{theorem}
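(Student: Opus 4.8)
The plan is to obtain this as an immediate consequence of the cyclic-group result \cref{thm:cyclic_gen} together with the lifting machinery of \cref{subsec:lifting}, with no new analysis beyond bookkeeping. \textbf{Step 1 (set up the lift).} By \cref{fact:characters_gln} (an instance of \cref{fact:hom_ab}), the derived subgroup of $G=\GL_n(q)$ is $\SL_n(q)$, the quotient map $\pi_G:\GL_n(q)\twoheadrightarrow \GL_n(q)/\SL_n(q)\cong \F_q^*\cong \Z_{q-1}$ is a surjection, and taking $\iota_H$ to be the identity on $\F_q^*$ we get $\Hom(\GL_n(q),\F_q^*)=\LiftHom(\GL_n(q),\F_q^*)$, lifted from $\Hom(\Z_{q-1},\Z_{q-1})$. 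Hence the test $\test_k$ for $(\GL_n(q),\F_q^*)$ is exactly the lift of $\test\_\ker_k(\Z_{q-1},\Z_{q-1})$: it samples $\vec x \propto \abs{\ker(\widetilde{\Gamma}_{\vec x})}$ and checks whether some character agrees with $f$ on $\vec x$ (equivalently, whether the values $f(x_i)$ obey the same $\F_q^*$-relations forced on the $\det(x_i)$ by a character).

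\textbf{Step 2 (transfer the two relevant constants).} Since $\iota_H$ is an isomorphism, \cref{lem:lift_gamma} gives $\eta_k(\GL_n(q),\F_q^*)=\eta_k(\Z_{q-1},\Z_{q-1})$, and the right-hand side equals $1$ for all $k\geq 2$ because $\abs{\Hom(\Z_{q-1},\Z_{q-1})}\leq q-1<(q-1)^k$ (as used in the proof of \cref{thm:cyclic_gen} via \cref{lem:reduction_p}). Thus the analogue of \cref{lem:eta_one} holds in the lifted setting:
\[
\sum_{\phi\in \Hom(\GL_n(q),\F_q^*)} \agr(f,\phi)^k ~=~ \delta_k(f)\cdot \frac{\gamma_k(\GL_n(q),\F_q^*)}{\abs{\GL_n(q)}^k}\,.
\]
Again by \cref{lem:lift_gamma}, $\gamma_k(\GL_n(q),\F_q^*)=\abs{\SL_n(q)}^k\cdot \gamma_k(\Z_{q-1},\Z_{q-1})$, and since $\abs{\GL_n(q)}=\abs{\SL_n(q)}\cdot (q-1)$ the kernel-size factors cancel exactly, leaving $\gamma_k(\GL_n(q),\F_q^*)/\abs{\GL_n(q)}^k=\gamma_k(\Z_{q-1},\Z_{q-1})/(q-1)^k$, which lies in $[1,\zeta(k-1)^2]$ by \cref{propn: cyclic_gamma_gen}.

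\textbf{Step 3 (run the moment argument).} With these two facts in hand, the telescoping computation in the proof of \cref{thm:cyclic_gen} applies verbatim with $\GL_n(q)$ in place of $\Z_n$: from \cref{eq:max_approx} one gets $\max_\phi \agr(f,\phi)\geq \frac{\delta_i\,\gamma_i}{\abs{\GL_n(q)}\,\delta_{i-1}\,\gamma_{i-1}}$ for each $i$; multiplying over $i=4,\dots,k$, dropping $\delta_3\leq 1$, and inserting the lower bound $\gamma_k\geq \abs{\GL_n(q)}^k$ and the upper bound $\gamma_3\leq \zeta(2)^2\,\abs{\GL_n(q)}^3$ yields the claimed agreement guarantee. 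I expect essentially no real obstacle here; the only points requiring a moment's attention are confirming that the sampling distribution that $\test_k$ induces on $\GL_n(q)^k$ is the push-forward of $\cD_{\ker}$ on $\Z_{q-1}^k$, and that the property $\eta_k=1$ is preserved under the lift --- both of which are exactly what \cref{lem:lift_gamma} records.
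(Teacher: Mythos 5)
Your proposal is correct and matches the paper's own proof almost line for line: the paper also sets up the lift via $\det:\GL_n(q)\twoheadrightarrow\F_q^*$, invokes \cref{lem:lift_gamma} to get $\eta_k=1$ and the exact cancellation $\gamma_k(\GL_n(q),\F_q^*)/\abs{\GL_n(q)}^k=\gamma_k(\F_q^*,\F_q^*)/(q-1)^k$, and then says to "reuse the proof of \cref{thm:cyclic_gen}," which is precisely the telescoping moment argument you spell out in Step 3. The only difference is cosmetic: you make explicit the push-forward-of-$\cD_{\ker}$ remark and the $\gamma_3\leq\zeta(2)^2\abs{G}^3$, $\gamma_k\geq\abs{G}^k$ endpoints, which the paper leaves implicit in its appeal to the cyclic-group proof.
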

\begin{proof}
Let $\det: \GL_n(q)\to \F_q^*$ be the surjection that maps $\GL_n(q)\to [\GL_n(q),\GL_n(q)]$. then, $|\ker(\det)| = \frac{\abs{\GL_n(q)}}{q-1}$. 	Then using from~\cref{lem:lift_gamma} we get that $\eta_k(\GL_n(q),\F_q^*) =1$ and thus, using~\cref{lem:eta_one}:
	\begin{align*}
		\sum_{\phi\in\Hom(\GL_n(q), \F_q^*)}\agrHom^k ~&=~ \delta_i \cdot\frac{\gamma_i(\GL_n(q), \F_q^*)}{\abs{\GL_n(q)}^k}\\
		~&=~ \delta_i\cdot \frac{|\ker(\det)|^k\cdot \gamma_i(\F_q^*, \F_q^*)}{\abs{\GL_n(q)}^k} &&[\text{\cref{lem:lift_gamma}}]\\
		~&=~ \delta_i \cdot  \frac{\gamma_i(\F_q^*, \F_q^*)}{(q-1)^k} .
	\end{align*}
	Now, one can reuse the proof of~\cref{thm:cyclic_gen}, as the RHS expression is identical. 
\end{proof}

\subsubsection{Character Testing for Lie Algebras}

The above abelianization approach generalizes to other structures such as \textit{Lie algebras}. A finite-dimensional Lie algebra, $\mathfrak{g}$, is a finite-dimensional vector space over a field $\F$ with a Lie bracket $[\cdot, \cdot ] : \fkg\times \fkg\to \fkg$ which is a bilinear map such that 
\[
 [x,x] = 0 \quad \text{and} \quad  [x,[y,z]] + [y,[z,x]] + [z,[x,y]] = 0, \;\; \forall \,  x,y,z \in \fkg.
\]
A Lie algebra is \textit{abelian} if $[\cdot, \cdot ]$ is identically zero. 

\begin{definition}[Lie algebra homomorphisms] A map $\phi: \fkg\to \fkh$ between two Lie algebras is a homomorphism if it is a linear map, \ie a homomorphism as vector spaces, and additionally if $[f(x), f(y)] = f([x,y])$. We refer to these homomorphisms as $\LieHom(\fkg, \fkh)$. 
\end{definition}

As an example, let $\mathfrak{gl}_n(q) = \F_q^{n\times n}$, the space of $n\times n$-matrices. It clearly has a vector space structure, and we define the bracket as $[x,y] := xy-yx$, where the multiplication is matrix multiplication. Thus, for any abelian $\fkh$, a Lie algebra homomorphism is a linear map such that $f([x,y]) = f(xy) -f(yx) = 0$ for every $x,y \in \fkg$. We will now see that this reduces to lifted homomorphisms between vector spaces.

\begin{fact}[Lie algebra homomorphisms abelianize]\label{fact:lie_ab}
	Let $\fkg$ be a finite-dimensional Lie algebra and define its derived algebra as $[\fkg,\fkg] = \mathrm{span}\braces{ [x,y] \mid x, y\in \fkg}$. Then, for any abelian Lie algebra $\fkh$, $\LieHom(G, H) \cong \Hom(\fkg/[\fkg,\fkg], \fkh) \cong \Hom(\F_q^n, \F_q^m)$ where the $n,m$ are the dimensions of $\fkg/[\fkg,\fkg]$ and $\fkh$ as a $\F_q$-vector space, also called the ranks of $\fkg, \fkh$.
\end{fact}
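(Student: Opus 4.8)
The plan is to run the same argument as in \cref{fact:hom_ab} (homomorphisms abelianize), with groups replaced by Lie algebras. First I would verify that $[\fkg,\fkg] = \mathrm{span}\{[x,y] : x,y\in\fkg\}$ is an ideal of $\fkg$. This is immediate from bilinearity of the bracket alone: for $z\in\fkg$ and $w = \sum_i c_i[x_i,y_i]\in[\fkg,\fkg]$ we have $[z,w] = \sum_i c_i[z,[x_i,y_i]]$, and each $[z,[x_i,y_i]]$ is a bracket of two elements of $\fkg$, hence lies in the spanning set of $[\fkg,\fkg]$ (the Jacobi identity is not needed). Therefore the quotient $\fkg/[\fkg,\fkg]$ inherits a well-defined Lie algebra structure, and in it $[\bar x,\bar y] = \overline{[x,y]} = 0$ for all $x,y$; that is, $\fkg/[\fkg,\fkg]$ is abelian, i.e.\ just an $\F_q$-vector space equipped with the zero bracket. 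Write $\pi : \fkg\to\fkg/[\fkg,\fkg]$ for the canonical projection.

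Next I would set up the correspondence $\LieHom(\fkg,\fkh)\cong\{\,\F_q\text{-linear maps }\fkg/[\fkg,\fkg]\to\fkh\,\}$ for $\fkh$ abelian, using the universal property of the quotient. If $\phi\in\LieHom(\fkg,\fkh)$ then $\phi([x,y]) = [\phi(x),\phi(y)] = 0$ since $\fkh$ is abelian, so $[\fkg,\fkg]\subseteq\ker\phi$ and $\phi$ descends to a linear map $\bar\phi:\fkg/[\fkg,\fkg]\to\fkh$ with $\phi = \bar\phi\circ\pi$. Conversely, for any linear map $\psi:\fkg/[\fkg,\fkg]\to\fkh$, the composite $\psi\circ\pi$ is linear and satisfies the bracket condition automatically, since $(\psi\circ\pi)([x,y]) = \psi(0) = 0 = [(\psi\circ\pi)(x),(\psi\circ\pi)(y)]$; hence $\psi\circ\pi\in\LieHom(\fkg,\fkh)$. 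These assignments are mutually inverse, and both are $\F_q$-linear in their argument (note $\LieHom(\fkg,\fkh)$ is closed under pointwise addition and scalar multiplication precisely because $\fkh$ is abelian), so the correspondence is an isomorphism of $\F_q$-vector spaces, not merely a bijection.

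Finally I would put this in the stated form. A choice of $\F_q$-basis identifies $\fkg/[\fkg,\fkg]$ with $\F_q^n$ where $n = \dim_{\F_q}(\fkg/[\fkg,\fkg])$ is the rank of $\fkg$, and since $\fkh$ is abelian, $\fkh = \fkh/[\fkh,\fkh]\cong\F_q^m$ with $m = \dim_{\F_q}\fkh$; $\F_q$-linear maps $\F_q^n\to\F_q^m$ are exactly the elements of $\Hom(\F_q^n,\F_q^m)$ in the sense used in \cref{sec:vspace} (the linear characters tested there). This yields $\LieHom(\fkg,\fkh)\cong\Hom(\fkg/[\fkg,\fkg],\fkh)\cong\Hom(\F_q^n,\F_q^m)$. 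The whole argument is routine; the only step that needs a moment's thought is the very first one — checking that the span of all brackets really is an ideal, so that the quotient is a Lie algebra at all — but as noted this already follows from bilinearity, and everything after it is the standard universal-property factorization exactly paralleling the group case.
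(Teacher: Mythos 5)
The paper states this as a Fact without giving a proof (it is treated as standard background, parallel to \cref{fact:hom_ab} for groups). Your argument is correct and is exactly the standard universal-property factorization one would expect; the small remark that bilinearity alone (without the Jacobi identity) suffices to show $[\fkg,\fkg]$ is an ideal, and the note that $\LieHom(\fkg,\fkh)$ is a vector space precisely because $\fkh$ is abelian, are both accurate and tighten the exposition a bit.
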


Again for our example, $\fkg = \mathfrak{gl}_n(\F_q)$, we have $\fkg/[\fkg,\fkg] \cong \F_q$. This is because $\tr([x,y]) = 0$ for every $x,y$. Moreover, every trace $0$ matrix can be generated by a linear span of such commutators\footnote{To show this, one can use $[e_{ij}, e_{k\ell}] = \delta_{jk}e_{i\ell} - \delta_{i\ell}e_{kj}$. Here $e_{i,j}$ represents the elementary matrix which is zero everywhere and has $1$ in the $(i,j)$ entry.}, and thus, $[\fkg,\fkg]$ is the algebra of trace-zero matrices. We therefore have the folowing isomorphism for its characters, \ie \[\mathrm{LieHom}(\mathfrak{gl}_n(\F_q), \F_q) ~\cong~ \Hom(\F_q, \F_q).\] More explicity, the characters are of the form $\chi\circ \tr$ for any homomorphism $\chi: \F_q\to \F_q$ while for $\GL_n(q)$ they were $\chi \circ \det$ for $\chi: \F_q^*\to \F_q^*$. 
Now, we can deduce a testing result by lifting the cyclic case, just as we did for $\GL_n(q)$. The only difference is that we use~\cref{thm:cyclic_bounded} instead of~\cref{thm:cyclic_gen} to obtain a slightly better query complexity.


\begin{theorem}[Character Testing for $\mathfrak{gl}_n(q)$]\label{theo:gln}
	Let $q$ be a power of a prime $p$, and let $\fkg =  \mathfrak{gl}_n(q) = \F_q^{n\times n}$, be the space of $n\times n$-matrices. Let $k \geq 3$ be an integer.  Let $f: \fkg\to \F_q$ be any function. If $f$ passes $\test_k(G,H, \cD_k)$ with probability $\delta_k$, then, 
	\[
\max_{\phi \in \Hom(\fkg,\F_q)} \agr(f,\phi) ~\geq~  \parens[\Big]{\frac{(p-1)^2}{p^2}\cdot \delta_k}^{\frac{1}{k-1}} .\]
\vspace{0.1em}
\end{theorem}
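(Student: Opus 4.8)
The plan is to obtain this statement as an immediate consequence of the lifting machinery of \cref{subsec:lifting}, in exactly the same way \cref{theo:GL_n} was deduced from \cref{thm:cyclic_gen}. First I would identify the target code. Viewing $\fkg=\mathfrak{gl}_n(q)=\F_q^{n\times n}$ purely as a finite abelian group under addition, a Lie-algebra character $\phi:\fkg\to\F_q$ is nothing but a group homomorphism vanishing on the derived algebra $[\fkg,\fkg]$. Using the commutator identity $[e_{ij},e_{k\ell}]=\delta_{jk}e_{i\ell}-\delta_{i\ell}e_{kj}$ one checks that $[\fkg,\fkg]=\mathfrak{sl}_n(q)$ (the trace-zero matrices), so the abelianization is $\fkg/[\fkg,\fkg]\cong\F_q$, realized by the trace map $\pi:=\tr:\fkg\twoheadrightarrow\F_q$. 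By \cref{fact:lie_ab} this exhibits $\LieHom(\fkg,\F_q)=\LiftHom(\fkg,\F_q)$, the lift of $\Hom(\F_q,\F_q)$ through $\pi_G=\tr$ with $\iota_H=\mathrm{id}_{\F_q}$, and shows that $\test_k(\fkg,\F_q,\cD_k)$ is the pullback along $\tr$ of the base-code test for $\Hom(\F_q,\F_q)$ (the code listed as $\Hom(\Z_q,\Z_q)$ in \cref{tab:main2}).

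Next I would transport the two group-theoretic constants that drive the soundness analysis. Applying \cref{lem:lift_gamma}, since $\iota_H$ is an isomorphism we get $\eta_k(\fkg,\F_q)=\eta_k(\F_q,\F_q)$, which equals $1$ in the relevant range of $k$ because $|\Hom(\F_q,\F_q)|\le|\F_q|^k$ with room to spare (this is \cref{lem:eta_one} when $q$ is prime). Likewise $|\ker\tr|=|\mathfrak{sl}_n(q)|=|\fkg|/q$, so $\gamma_k(\fkg,\F_q)=(|\fkg|/q)^k\,\gamma_k(\F_q,\F_q)$. Feeding these into \cref{eq:3} (equivalently \cref{lem:eta_one}), and writing $\delta_k$ for the passing probability of the lifted test, one obtains
\[
\sum_{\phi\in\LieHom(\fkg,\F_q)}\agr(f,\phi)^k ~=~ \delta_k\cdot\frac{\gamma_k(\mathfrak{gl}_n(q),\F_q)}{|\mathfrak{gl}_n(q)|^k} ~=~ \delta_k\cdot\frac{\gamma_k(\F_q,\F_q)}{q^k},
\]
which is term-for-term identical to the expression governing the base code $\Hom(\F_q,\F_q)$ on $|\F_q|^k$-many tuples.

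From here the argument is verbatim the proof of \cref{thm:cyclic_bounded}: bound $\gamma_k(\F_q,\F_q)$ from above and below by \cref{cor:boundedrank} (source $\Z_{p^r}$, target $\F_q$ of $p$-rank $t$, with $t=1$ when $q$ is prime), combine with \cref{eq:max_approx} to get $\max_\phi\agr(f,\phi)\ge\frac{\delta_i\gamma_i}{q\,\delta_{i-1}\gamma_{i-1}}$ for each admissible $i$, multiply these inequalities over $i$ in the allowed range, telescope, and drop the factor $\delta_{i-1}\le1$; the $\gamma$-ratios collapse to the constant $\frac{(p-1)^2}{p^2}$, yielding the stated lower bound on $\max_\phi\agr(f,\phi)$.

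I do not expect any essential difficulty here beyond what \cref{lem:lift_gamma} already provides: its whole point is that the soundness analysis sees the group only through $\eta_k$ and $\gamma_k$, both of which pass through the lift unchanged up to the harmless factor $|\ker\tr|^k$. The two genuine checks are (i) the commutator computation $[\mathfrak{gl}_n(q),\mathfrak{gl}_n(q)]=\mathfrak{sl}_n(q)$, so that the abelianization is really $\F_q$ and not something smaller, and (ii) $\eta_k(\F_q,\F_q)=1$, so that \cref{eq:3} collapses to a single term. The only mildly delicate point is bookkeeping: when $q=p^m$ is a proper prime power, $\F_q\cong\Z_p^m$ has $p$-rank $m$, so one should invoke \cref{cor:boundedrank}/\cref{thm:cyclic_bounded} with that rank, which shifts both the admissible values of $k$ and the exponent accordingly.
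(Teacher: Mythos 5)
Your proposal follows the paper's own proof almost exactly: identify $\LieHom(\mathfrak{gl}_n(q),\F_q)$ with the lift of $\Hom(\F_q,\F_q)$ along $\tr$, transport $\gamma_k$ and $\eta_k$ through \cref{lem:lift_gamma}, rewrite $\sum_\phi\agr(f,\phi)^k=\delta_k\cdot\gamma_k(\F_q,\F_q)/q^k$, and then reuse the telescoping argument of \cref{thm:cyclic_bounded}. You are somewhat more explicit than the paper in justifying $[\mathfrak{gl}_n,\mathfrak{gl}_n]=\mathfrak{sl}_n$ and in flagging the prime-power subtlety (the table lists the base code as $\Hom(\Z_q,\Z_q)$, which matches the additive group of $\F_q$ only when $q$ is prime). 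One caveat: even taking $q=p$ prime, the base code is $\Hom(\Z_p,\Z_p)$, so $t=1$ in \cref{thm:cyclic_bounded}, and telescoping over $i\in[3,k]$ produces the exponent $\frac{1}{k-2}$, not the $\frac{1}{k-1}$ printed in the theorem. That appears to be an error in the paper's statement rather than in your reasoning, but your closing phrase ``yielding the stated lower bound'' quietly inherits it; a careful write-up should either correct the exponent or explain where the extra factor comes from.
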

\begin{proof}
Let $\tr: \mathfrak{gl}_n(q) \to \F_q$ be the surjection that maps $\mathfrak{gl}_n(q) \to [\mathfrak{gl}_n(q) ,\mathfrak{gl}_n(q) ]$. then, $|\ker(\tr)| = \frac{\abs{\mathfrak{gl}_n(q) }}{q}$ 	Then using from~\cref{lem:lift_gamma} we get that $\eta_k(\mathfrak{gl}_n(q),\F_q) =1$ and thus, we use~\cref{lem:eta_one}:
	\begin{align*}
		\sum_{\phi\in\Hom(\mathfrak{gl}_n(q) , \F_q^*)}\agrHom^k ~&=~ \delta_i \cdot\frac{\gamma_i(\mathfrak{gl}_n(q) , \F_q^*)}{\abs{\mathfrak{gl}_n(q)}^k}\\
		~&=~ \delta_i\cdot \frac{|\ker(\tr)|^k\cdot \gamma_i(\F_q^*, \F_q^*)}{\abs{\mathfrak{gl}_n(q)}^k} &&[\text{\cref{lem:lift_gamma}}]\\
		~&=~ \delta_i \cdot  \frac{\gamma_i(\F_q, \F_q)}{q^k} .
	\end{align*}
	Now, one can reuse the proof of~\cref{thm:cyclic_bounded}, as the RHS expression is identical. 
\end{proof}

To extend this result to arbitrary Lie algebras $\fkg$, we will need to lift the result for the vector space, \ie $\Hom(\F_q^n, \F_q)$. Since that proof does not directly follow from the computation of $\gamma$, we will need to derive it a bit more carefully which we do now.  

%
%

%


\subsection{Lifting Vector Space} 

We will now specialize to the case when $G,H = (\F_q^n, \F_q)$. Let $\tG$ be a group that projects to $\F_q^n$, and we will take $\tH = H$. The goal is to deduce testing for $\Hom(\tG, \F_q)$ from that for $\Hom(\F_q^n, \F_q)$ in~\cref{sec:vspace}.  For $\vec{x} \in \tG^k$, define it $\rank(\vec x) = \rank(\pi(\vec x))$. Moreover, from~\cref{lem:lift_gamma} for any $j$, $\Pr{{\vec x} \in G^k}{\rank(\vec x) = j} = \Pr{{\vec y} \in (\F_q^n)^k}{\rank(\vec y) = j}  $. 

\begin{claim}[{A lifted variant of~\cref{claim: Vexpression}}]\label{claim: Vexpression2}
	 Let $\tG$ be any group such that $\pi:\tG\to \F_q^n$ is a surjection, $k\geq 1$ be any integer. Then, 
	 \[\sum_{\phi \in \LiftHom(\tG, \F_q)} \agrHom^k ~\approx_{q^{-n}}~ \Pr{\vec x\sim (\F_q^n)^k}{\rank(\vec x)=k}  \cdot q^{n-k} + q^{-(k-1)} \cdot    \sum^{k}_{j=1} \binom{k}{j} (q-1)^{j-1} \delta_j'(f)\;, \]
	 	where $\delta_j'(f) ~:=~  \Ex{\vec x\sim G^j}{\vtest \mid \pi(\vec x)\in {\cal R}_j}.$ where ${\cal R}_j$ is as in~\cref{def:rj}.
\end{claim}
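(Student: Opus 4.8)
The plan is to run the proof of~\cref{claim: Vexpression} through the projection $\pi\colon\tG\to\F_q^n$. The starting point is that $\widetilde{\Gamma}_{\vec x}\colon\LiftHom(\tG,\F_q)\to\F_q^k$ is still a homomorphism of abelian groups, hence $N$-to-one onto its image $\widetilde{\sfH}_{\vec x}$ with $N=\abs{\ker(\widetilde{\Gamma}_{\vec x})}$; repeating the computation of~\cref{lem:alpha_k} (equivalently~\cref{lem:fqk}) verbatim gives
\[
\sum_{\phi\in\LiftHom(\tG,\F_q)}\agrHom^k ~=~ \Ex{\vec x\sim\tG^k}{\indicator{f(\vec x)\in\widetilde{\sfH}_{\vec x}}\cdot\abs{\ker(\widetilde{\Gamma}_{\vec x})}}.
\]
Since $\widetilde{\Gamma}_{\vec x}(\phi\circ\pi)=\Gamma_{\pi(\vec x)}(\phi)$, we have $\widetilde{\sfH}_{\vec x}=\sfH_{\pi(\vec x)}$, and by~\cref{lem:lift_gamma}, $\abs{\ker(\widetilde{\Gamma}_{\vec x})}=\abs{\ker(\Gamma_{\pi(\vec x)})}=q^{\,n-\rank(\pi(\vec x))}=q^{\,n-\rank(\vec x)}$. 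So the right-hand side is $\Ex{\vec x\sim\tG^k}{\beta(\pi(\vec x))}$ in the notation of~\cref{claim: Vexpression}, the only difference being that the $f$-values are still read off the tuple $\vec x$ upstairs.

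Next I would partition $\tG^k$ by $\rank(\vec x)=\rank(\pi(\vec x))\in\{k,\,k-1,\,\le k-2\}$ and estimate each block exactly as in~\cref{claim: Vexpression}, transporting every tuple statistic downstairs via the pushforward identity of~\cref{lem:lift_gamma}: for any $S\subseteq(\F_q^n)^k$, $\Pr{\vec x\sim\tG^k}{\pi(\vec x)\in S}=\Pr{\vec y\sim(\F_q^n)^k}{\vec y\in S}$. When $\rank(\vec x)=k$, $\widetilde{\sfH}_{\vec x}=\sfH_{\pi(\vec x)}=\F_q^k$, so the indicator is $1$ and, after applying the pushforward identity to the full-rank event and using~\cref{claim:tupleProb}, this block equals $\Pr{\vec y\sim(\F_q^n)^k}{\rank(\vec y)=k}\cdot q^{n-k}$. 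The $\rank(\vec x)\le k-2$ block is $O(q^{-n})$ by exactly the counting used in~\cref{claim: Vexpression} (again via the pushforward identity), so it is absorbed in $\approx_{q^{-n}}$. For $\rank(\vec x)=k-1$, the tuple satisfies a unique linear relation, so I partition by $\pi(\vec x)\in\mathcal{R}_j$ for $j\in\{1,\dots,k\}$, each of probability $\approx_{q^{-2n}}\binom{k}{j}(q-1)^{j-1}q^{-n}$ by~\cref{claim:tupleProb}, and with $\abs{\ker(\widetilde{\Gamma}_{\vec x})}=q^{\,n-(k-1)}$ on this block.

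The one genuinely new step is the lifted analogue of~\cref{claim:deltaj}, namely
\[
\Ex{\vec x\sim\tG^k}{\indicator{f(\vec x)\in\widetilde{\sfH}_{\vec x}} \mid \pi(\vec x)\in\mathcal{R}_j} ~=~ \delta_j'(f),
\]
and I expect this to be the main (though short) obstacle. The key observation is that $\widetilde{\sfH}_{\vec x}=\sfH_{\pi(\vec x)}$ is cut out by the single relation $\sum_{\ell\in S}a_\ell h_\ell=0$ supported on the $j$-element index set $S$ that the relation of $\pi(\vec x)$ determines, so $\indicator{f(\vec x)\in\widetilde{\sfH}_{\vec x}}=\indicator{\sum_{\ell\in S}a_\ell f(x_\ell)=0}$ depends only on the $f$-values at coordinates in $S$. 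Conditioning first on the identity of $S$ (a uniformly random $j$-subset, by symmetry of $\tG^k$ under coordinate permutations) and then noting that, given $\pi(\vec x)\in\mathcal{R}_j$ with relation supported on $S$, the marginal of $(x_\ell)_{\ell\in S}$ is that of a uniform $j$-tuple of $\tG$ whose $\pi$-image lies in $\mathcal{R}_j$ and is independent of the remaining coordinates (each $\pi$-fibre has size $\abs{\ker\pi}$, so uniformity over $\tG$ descends to uniformity over $\F_q^n$), collapses the $k$-tuple expectation to the $j$-tuple quantity $\delta_j'(f)$, independent of $S$; the outer average over $S$ is then vacuous. This is precisely the argument of~\cref{claim:deltaj} run through $\pi$, the extra care being only that $f$ lives on $\tG$ while the relation structure and the conditioning event live on $\F_q^n$. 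Plugging these three blocks back together exactly as in~\cref{claim: Vexpression} yields the stated identity; everything other than the lifted~\cref{claim:deltaj} is a line-by-line transcription of the vector-space proof with the pushforward identity of~\cref{lem:lift_gamma} inserted wherever a tuple probability appears.
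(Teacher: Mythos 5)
Your proposal is correct and follows the same route as the paper: transport the vector-space computation through $\pi$, using the $N$-to-one property of $\widetilde{\Gamma}_{\vec x}$, the identity $\abs{\ker(\widetilde{\Gamma}_{\vec x})}=\abs{\ker(\Gamma_{\pi(\vec x)})}$, and the pushforward of tuple probabilities from \cref{lem:lift_gamma}, splitting by $\rank(\pi(\vec x))\in\{k,k-1,\le k-2\}$. The one place you are actually more explicit than the paper is the lifted analogue of \cref{claim:deltaj} (reducing the $k$-tuple conditional expectation to the $j$-tuple quantity $\delta_j'$); the paper treats this step tersely (``exactly as before''), and your spelled-out justification — that the test depends only on the $f$-values at the support $S$, and that the marginal of $(x_\ell)_{\ell\in S}$ given $\pi(\vec x)\in\mathcal{R}_j$ with support $S$ is uniform over $j$-tuples with $\pi$-image in $\mathcal{R}_j$ because the number of valid completions depends only on $\rank(\pi((x_\ell)_{\ell\in S}))=j-1$ — is the right one.
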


\begin{proof}
We restate~\cref{eq:t1t2} in~\cref{claim: Vexpression} for any $g:\F_q^n\to \F_q$,
\[
		\sum_{\phi \in \Hom(\F_q^n, \F_q) } \agr(g,\phi)^k ~\approx_{q^{-n}}~   \sfT_{k-1} +  \sfT_{k}
\]
Denote these terms as $\sfT_k(\F_q^n)$ and $\sfT_{k-1}(\F_q^n)$. We will now compute these two terms for an arbitrary $\tG$ that projects to $\F_q^n$. For any $\vec x\in \tG^k$, define 
	\begin{align}
\nonumber	\beta(\vec x) ~&:=~ \indicator{f(\vec x) \in \Hx} \cdot \abs[\big]{\ker(\widetilde{\Gamma}_{\vec x}) } \\[1.5pt]
\nonumber~&=~ \indicator{f(\vec x) \in \Hx} \cdot \abs[\big]{\ker(\Gamma_{\pi({\vec x})}) } && [\text{\cref{lem:lift_gamma}}] \\
\label{eq:beta2}	~&=~ q^{n-\rank(\vec x)} \cdot \indicator{f(\vec x) \in \Hx} \;\;. 
	\end{align}

If $\rank(\vec x)=k$, then $\sfH_{\vec x} =  \sfH_{\pi(\vec x)}=H^k$, and thus, $\indicator{f(\vec x) \in \Hx}=1$ for all such full rank $\vec x$. 
	This gives us that $\sfT_k$ term is unchanged.
\begin{align*}
\sfT_{k}(G) ~&:=~ \Pr{\vec x\sim \tG^k}{\rank(\vec x)=k} \cdot q^{n-k} ~=~ \Pr{\vec y\sim  (\F_q^n)^k}{\rank(\vec y)=k} \cdot q^{n-k} = T_k(\F_q^n). 
\end{align*} 
	
We now compute $ \sfT_{k-1}$, exactly as before . The only change is that the expression for $\delta_j$ is different but its coefficients, which are constants depending on the group are exactly the same as that for $\F_q$ due to the fact that projection $\pi$ is $N$-to-one. 
	\begin{align*}
		\sfT_{k-1} (f) &~=~	\sum^{k}_{j=1}   \Pr{\vec x\in \tG^k}{ \pi(\vec x) \in {\cal R}_{j} }		\cdot \Ex{\vec x\in \tG^k}{ \beta(\vec x) \mid  \pi(\vec x) \in {\cal R}_{j} }\; , \\
		&~=~   q^{n-(k-1)} \cdot    \sum^{k}_{j=1}   \Pr{\vec x\in \tG^k}{ \pi(\vec x) \in {\cal R}_{j} }  \cdot  
		 \Ex{\vec x\in \tG^k}{ \vtest \mid  \pi(\vec x) \in {\cal R}_{j} } &&[\text{Using~\cref{eq:beta2}}] \; ,
		\\
		&~=~   q^{n-(k-1)} \cdot    \sum^{k}_{j=1}   \Pr{\vec y\in (\F_q^n)^k}{ \vec y \in {\cal R}_{j} }  \cdot  
		 \delta_j'(f) \;\;.
	\end{align*}
	
	Therefore, we have obtained an identical expression as in ~\cref{claim: Vexpression} except that $\delta_j$ is replaced by $\delta_j'$. The claim then follows from that computation.		
\end{proof}

We can now easily define the lifted test as:
\begin{tcolorbox}[colframe=teal, colback=white, title={${\test\_\mathsf{LiftedVSpace}}_k(f)$}, label=test:lifted]
	\begin{itemize}
		\item Sample $(x_1,\dots, x_k) \sim \pi^{-1}\parens{{\cal R}_k}$.
		\item If $(f(x_1),\dots, f(x_k))\in {\sfH}_{\vec x}$: return $1$; otherwise: return $0$
	\end{itemize}
\end{tcolorbox}

\begin{theorem}[{Lifted variant of~\cref{theo:vspace}}]\label{theo:vspace_lift}
 Let  $\tG$ be any group that projects to $\F^n_{q}$ for some finite field of order $q$. Let $k\geq 3$ be any odd integer. Then if $f$ passes  \hyperref[test:lifted]{$\test\_\mathsf{LiftedVSpace}_k$}
  with probability $\delta_k$, there exists a homomorphism $\phi \in \LiftHom(\tG,\F_q)$ such that:
 \[\agr(f,\phi) ~\geq~ \frac{1}{q} + \parens[\Big]{\frac{q-1}{q}} \parens[\Big]{\frac{q\delta_k-1}{q-1}}^{\frac{1}{k-2}} \;\;.\]
\end{theorem}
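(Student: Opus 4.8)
The plan is to transport the proof of~\cref{theo:vspace} to the lifted setting, substituting~\cref{claim: Vexpression2} for~\cref{claim: Vexpression}. First I would record that the test~\hyperref[test:lifted]{$\test\_\mathsf{LiftedVSpace}_k$} samples $\vec x \sim \pi^{-1}({\cal R}_k)$ and accepts iff $f(\vec x) \in \sfH_{\vec x}$, so its acceptance probability is exactly $\delta_k = \delta_k'(f) = \Ex{\vec x \in \tG^k}{\vtest \mid \pi(\vec x) \in {\cal R}_k}$, which is precisely the quantity appearing in~\cref{claim: Vexpression2}. As in~\cref{sec:vspace}, I work with the shifted agreement $\widetilde{\agr}(f,\phi) = \tfrac{q\,\agr(f,\phi)-1}{q-1}$, now over $\phi \in \LiftHom(\tG,\F_q)$.

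Next I would establish the lifted analogue of~\cref{cor:Vexptest}, namely
\[
\sum_{\phi \in \LiftHom(\tG,\F_q)} \widetilde{\agr}(f,\phi)^k ~\approx_{q^{-n}}~ \tfrac{1}{q-1}\bigl(q\delta_k - 1\bigr).
\]
The argument is the same binomial expansion used in~\cref{cor:Vexptest}: write $(q\,\agr(f,\phi)-1)^k = \sum_{i} \binom{k}{i}(-1)^{k-i}(q\,\agr(f,\phi))^i$, sum over $\phi$, and substitute~\cref{claim: Vexpression2} together with the rank-count estimates of~\cref{claim:tupleProb}. The crucial point — which is exactly the last clause of~\cref{lem:lift_gamma} — is that $\Pr{\vec x \in \tG^k}{\pi(\vec x) \in {\cal R}_j} = \Pr{\vec y \in (\F_q^n)^k}{\vec y \in {\cal R}_j}$, so the coefficients in the expansion are the same group-independent constants as in the vector-space case; only the $\delta_j$'s get replaced by the $\delta_j'$'s. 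Hence~\cref{fact:binom} again annihilates every $\delta_j'$ term with $j < k$, the remaining pure-power terms collapse exactly as in~\cref{cor:Vexptest}, and one is left with $\tfrac{1}{q-1}(q\delta_k-1)$ up to $O(q^{-n})$.

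Finally I would run the moment (Johnson-type) argument of~\cref{theo:vspace} unchanged. For odd $k \geq 3$,
\[
\max_{\phi} \widetilde{\agr}(f,\phi)^{k-2} \cdot \sum_{\phi} \widetilde{\agr}(f,\phi)^{2} ~\geq~ \sum_{\phi} \widetilde{\agr}(f,\phi)^{k} ~\geq~ \tfrac{1}{q-1}\bigl(q\delta_k - 1\bigr),
\]
the last inequality by the lifted~\cref{cor:Vexptest} and the hypothesis that $f$ passes with probability $\delta_k$. Using the $k=2$ instance of that corollary together with $\delta_2' \leq 1$ gives $\sum_\phi \widetilde{\agr}(f,\phi)^2 \leq 1 + O(q^{-n})$, so some $\phi \in \LiftHom(\tG,\F_q)$ satisfies $\widetilde{\agr}(f,\phi)^{k-2} \geq \tfrac{1}{q-1}(q\delta_k-1)$; unwinding the shift $\widetilde{\agr}(f,\phi) = \tfrac{q\,\agr(f,\phi)-1}{q-1}$ yields the stated bound (up to the $\pm O(q^{-n})$ slack implicit throughout~\cref{sec:vspace}).

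I do not anticipate a genuine obstacle: all the substantive work is already contained in~\cref{claim: Vexpression2} and~\cref{lem:lift_gamma}, and what remains is bookkeeping from~\cref{cor:Vexptest,theo:vspace}. The only delicate point is verifying that the various $O(q^{-n})$ error terms — arising from the approximate counts in~\cref{claim:tupleProb} that propagate through~\cref{claim: Vexpression2} — stay lower-order and do not interfere with the leading estimate; this is handled exactly as in the unlifted proof by absorbing everything into a single $\pm O(q^{-n})$.
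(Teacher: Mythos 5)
Your proposal is correct and follows the same route the paper takes: the paper's own proof is exactly the one-line observation that the argument of Theorem~\ref{theo:vspace} goes through verbatim once Claim~\ref{claim: Vexpression2} replaces Claim~\ref{claim: Vexpression}, which you have simply unpacked (re-deriving the lifted Corollary~\ref{cor:Vexptest} via the same binomial cancellation, then applying the moment inequality with $\sum_\phi \widetilde{\agr}(f,\phi)^2 \lesssim 1$). Your observation that the $\pm O(q^{-n})$ slack, present in Theorem~\ref{theo:vspace} but silently omitted from the statement of Theorem~\ref{theo:vspace_lift}, really does propagate is a fair and correct point of care.
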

\begin{proof}
	The proof is identical to the short computation in~\cref{theo:vspace} as~\cref{cor:Vexptest} holds just as before by using~\cref{claim: Vexpression2} instead of~\cref{claim: Vexpression}.
\end{proof}

As an application of the above theorem, we can generalize the results to two setups.

\begin{corollary}\label{cor:lift_lie}
For any $k\geq 3$, \hyperref[test:lifted]{$\test\_\mathsf{LiftedVSpace}_k$} is a $(k,\delta, \ep(\delta))$ sound test for $\Hom(G,\F_p)$ for any finite group $G$ and prime $p$, and for $\LieHom(\fkg,\F_q)$ for any finite-dimensional Lie algebra, $\fkg$,  and prime power $q$.
\end{corollary}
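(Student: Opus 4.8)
The plan is to reduce both claims to the lifted vector-space test already analysed in \cref{theo:vspace_lift}. For each of the two settings it will be enough to produce a surjection onto an $\F_q$-vector space $\F_q^n$ along which the code one wants to test coincides \emph{exactly} with the set $\LiftHom$ of lifts of the base code $\Hom(\F_q^n,\F_q)$ from \cref{sec:vspace}; granting that identification, the test $\test\_\mathsf{LiftedVSpace}_k$ and the soundness bound of \cref{theo:vspace_lift} transfer with no further work, so essentially all the real analysis has already been done.

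For the group case I would argue as follows. Given a finite group $G$ and a prime $p$, \cref{fact:hom_ab} shows every homomorphism $G\to\F_p$ factors through $G/[G,G]$, and since $\F_p$ has exponent $p$ it factors further through $\bigl(G/[G,G]\bigr)/p\bigl(G/[G,G]\bigr)$, which is an $\F_p$-vector space $\F_p^n$ with $n$ the $p$-rank of $G$ (cf.\ \cref{fact:decomp} and the note after \cref{tab:main2}). Composing the two quotient maps gives $\pi_G\colon G\twoheadrightarrow\F_p^n$, and then $\Hom(G,\F_p)=\LiftHom(G,\F_p)$ for the lift along $\pi_G$ with $\iota=\mathrm{id}_{\F_p}$. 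Applying \cref{theo:vspace_lift} with $\tG=G$ and $q=p$ immediately yields: whenever $f$ passes $\test\_\mathsf{LiftedVSpace}_k$ with probability $\delta_k$ it agrees to the stated extent with some $\phi\in\LiftHom(G,\F_p)=\Hom(G,\F_p)$, i.e.\ $(k,\delta,\ep(\delta))$-soundness.

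For the Lie algebra case the same script runs with \cref{fact:lie_ab} in place of \cref{fact:hom_ab}: for a finite-dimensional $\fkg$ over $\F_q$ one has $\LieHom(\fkg,\F_q)\cong\Hom(\fkg/[\fkg,\fkg],\F_q)\cong\Hom(\F_q^n,\F_q)$, with $n$ the rank of $\fkg$, realised by precomposition with the $\F_q$-linear quotient $\pi\colon\fkg\twoheadrightarrow\fkg/[\fkg,\fkg]\cong\F_q^n$. Regarding $\fkg$ as its underlying finite additive group makes $\pi$ a surjective group homomorphism onto $\F_q^n$, and the lifts of $\Hom(\F_q^n,\F_q)$ along $\pi$ are precisely the $\F_q$-linear functionals on $\fkg$ vanishing on $[\fkg,\fkg]$, i.e.\ $\LieHom(\fkg,\F_q)=\LiftHom(\fkg,\F_q)$. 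Then \cref{theo:vspace_lift} with $\tG=\fkg$ gives the soundness, with the query count and soundness function $\ep(\delta)=\tfrac1q+\tfrac{q-1}{q}\bigl(\tfrac{q\delta-1}{q-1}\bigr)^{1/(k-2)}$ inherited unchanged (likewise the parity restriction on $k$).

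The only point where I expect to have to be careful is conventional rather than technical: the base code of \cref{sec:vspace} is the $q^n$-element code of $\F_q$-\emph{linear} maps, so I must check that lifting it along $\pi_G$ (resp.\ $\pi$) gives back exactly the intended object and nothing bigger. In the Lie case, for instance, there are $\F_p$-linear but not $\F_q$-linear functionals $\fkg\to\F_q$, yet these are neither Lie characters nor lifts, so the identification $\LieHom=\LiftHom$ is clean; the analogous check in the group case is automatic because $\F_p$ is a prime field. Once that bookkeeping is settled, the element of $\LiftHom(\tG,\F_q)$ returned by \cref{theo:vspace_lift} is genuinely a homomorphism (resp.\ a Lie character), and nothing else is required.
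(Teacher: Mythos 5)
Your proof is correct and takes essentially the same route as the paper: both reduce to \cref{theo:vspace_lift} by invoking \cref{fact:hom_ab} (group case) and \cref{fact:lie_ab} (Lie case) to identify the target homomorphism set with $\LiftHom(\tG,\F_q)$ along a surjection onto $\F_q^n$. Your construction of the group-case surjection via the exponent-$p$ quotient of $G/[G,G]$ is equivalent to the paper's coordinatewise projections from the $p$-component decomposition (which invokes \cref{example:lift_cyclic}), and your explicit check that $\LieHom(\fkg,\F_q)$ coincides exactly with $\LiftHom(\fkg,\F_q)$ (excluding $\F_p$-linear but non-$\F_q$-linear functionals) is a welcome precision that the paper's sketch leaves implicit.
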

\begin{proof}
We briefly sketch the arguments which just collect our earlier observations:  

\begin{itemize}
	\item From~\cref{fact:characters_gln} for any finite-dimensional Lie algebra, $\fkg$,  $\Hom(\fkg, \F_q) = \LiftHom(\F_q^n, \F_q)$ where $\pi : \fkg\to [\fkg,\fkg]$ is some canonical projection.
	\item Let $G$ be an arbitrary finite group and let $G/[G,G]$ have $p$-rank $n$, and let its $p$-component be $\oplus_{i=1}^n \F_{p^{b_i}}$. Then, we have a projection $\pi = \oplus_i \pi_i : G\to \F_p^n$ where $\pi_i: \F_{p^{b_i}}\to \F_p$ be the canonical projection. Combining~\cref{fact:hom_ab} and~\cref{example:lift_cyclic}, we get  $\Hom(G,\F_p) = \LiftHom(\F_p^n,\F_p)$. 
	\item If one wishes to generalize the above to a prime power $q$, we need the condition that if $q = p^a$, then the $p$-component of $G/[G,G]$ has summands of larger order than $q$. That is, if $\parens{G/[G,G]}_p = \oplus_i \F_{p^{b_i}}$, then $b_i \geq a$ for each $i$. \end{itemize}
Now, we may plug these all into~\cref{theo:vspace_lift}.
\end{proof}

%
%
%
%

%
%


\bibliographystyle{alphaurl}
\bibliography{macros,references}

@string{stoc90 ="Proceedings of the 22nd ACM Symposium on Theory of Computing"}

@string{stoc03 ="Proceedings of the 35th ACM Symposium on Theory of Computing"}

@string{stoc07 ="Proceedings of the 39th ACM Symposium on Theory of Computing"}

@string{stoc08 ="Proceedings of the 40th ACM Symposium on Theory of Computing"}

@string{stoc17 ="Proceedings of the 49th ACM Symposium on Theory of Computing"}

@string{stoc21 ="Proceedings of the 53rd ACM Symposium on Theory of Computing"}

@string{focs95 ="Proceedings of the 36th IEEE Symposium on Foundations of Computer Science"}

@STRING{acm 	= "Association for Computing Machinery"}

@STRING{and	= " and "}

@article{KRV23,
  title={{Sublinear-Time Computation in the Presence of Online Erasures}},
  author={Kalemaj, Iden and Raskhodnikova, Sofya and Varma, Nithin},
  journal={Theory of Computing},
  volume={19},
  number={1},
  pages={1--48},
  year={2023},
  publisher={Theory of Computing Exchange},
  eprint={2109.08745}
}

@InProceedings{BKMR24,
  author =	{Ben-Eliezer, Omri and Kelman, Esty and Meir, Uri and Raskhodnikova, Sofya},
  title =	{{Property Testing with Online Adversaries}},
  booktitle =	{15th Innovations in Theoretical Computer Science Conference (ITCS 2024)},
  year =	{2024},
  doi =		{10.4230/LIPIcs.ITCS.2024.11},
  eprint = {2311.16566}
  }

@misc{KMNR25,
      title={Homomorphism Testing with Resilience to Online Manipulations}, 
      author={Esty Kelman and Uri Meir and Debanuj Nayak and Sofya Raskhodnikova},
      year={2025},
      eprint={2511.23363}
      }

@article{rowell10,
	title={Extraspecial two-groups, generalized yang-baxter equations and braiding quantum gates},
	author={Rowell, Eric C and Zhang, Yong and Wu, Yong-Shi and Ge, Mo-Lin},
	journal={Quantum Information \& Computation},
	volume={10},
	number={7},
	pages={685--702},
	year={2010},
	publisher={Rinton Press, Incorporated Paramus, NJ}
}

@misc{MR24,
      title={{Derandomized Non-Abelian Homomorphism Testing in Low Soundness Regime}}, 
      author={Tushant Mittal and Sourya Roy},
      year={2024},
      eprint={2405.18998},
      archivePrefix={arXiv},
      primaryClass={cs.CC},
}

@article{PPR06,
	title={Tolerant property testing and distance approximation},
	author={Parnas, Michal and Ron, Dana and Rubinfeld, Ronitt},
	journal={Journal of Computer and System Sciences},
	volume={72},
	number={6},
	pages={1012--1042},
	year={2006},
	publisher={Elsevier}
}

@article{Gopalan13,
	author       = {Parikshit Gopalan},
	title        = {{A Fourier-Analytic Approach to Reed-Muller Decoding}},
	journal      = {{IEEE} Trans. Inf. Theory},
	volume       = {59},
	number       = {11},
	pages        = {7747--7760},
	year         = {2013},	
	doi = {10.1109/TIT.2013.2274007},
}

@article{LS74,
title = {On the minimal degrees of projective representations of the finite Chevalley groups},
journal = {Journal of Algebra},
volume = {32},
number = {2},
pages = {418-443},
year = {1974},
issn = {0021-8693},
doi = {https://doi.org/10.1016/0021-8693(74)90150-1},
author = {Vicente Landazuri and Gary M Seitz}
}

@article{Seg40,
 author = {Segal, Irving E.},
 title = {The automorphisms of the symmetric group},
 fjournal = {Bulletin of the American Mathematical Society},
 journal = {Bull. Am. Math. Soc.},
 issn = {0002-9904},
 volume = {46},
 pages = {565},
 year = {1940},
 language = {English},
 doi = {10.1090/S0002-9904-1940-07261-1},
 zbMATH = {3099424},
 Zbl = {0061.03301}
}

@article{LS05, 
title={Character degrees and random walks in finite groups of {Lie} type}, 
volume={90},
 DOI={10.1112/S0024611504014935},
  number={1},
   journal={Proceedings of the London Mathematical Society},
    author={Liebeck, Martin W. and Shalev, Aner},
     year={2005}, 
     pages={61–86}
     }

@inproceedings{ISS07,
author = {Ivanyos, G\'{a}bor and Sanselme, Luc and Santha, Miklos},
title = {An efficient quantum algorithm for the hidden subgroup problem in extraspecial groups},
year = {2007},
isbn = {9783540709176},
publisher = {Springer-Verlag},
address = {Berlin, Heidelberg},
booktitle = {Proceedings of the 24th Annual Conference on Theoretical Aspects of Computer Science},
pages = {586–597},
numpages = {12},
location = {Aachen, Germany},
series = {STACS'07},
eprint = {quant-ph/0701235}
}

@MISC {stackex,
    TITLE = {A general formula for the number of conjugacy classes of $\mathbb{S}_n \times \mathbb{S}_n$ acted on by $ \mathbb{S}_n$},
    AUTHOR = {Marty Isaacs},
    HOWPUBLISHED = {MathOverflow},
    year = {2014},
    NOTE = {URL:https://mathoverflow.net/q/162275 (version: 2014-04-03)},
}

@article{Pan04,
  title = {ON THE NUMBER OF CONJUGACY CLASSES OF FINITE p-groups},
  author = {Pantea, Casian Alexandru},
  langid = {english},
  year = {2004},
  journal = {Mathematica (Cluj)},
  volume = {46 (69)},
  number = {2},
  pages = {193-203},
  publisher = {EDITURA ACADEMIEI ROMA{\^A}NE} 
}

@inproceedings{GKS06,
	author       = {Elena Grigorescu and
	Swastik Kopparty and
	Madhu Sudan},
	title        = {Local Decoding and Testing for Homomorphisms},
	booktitle    = {Approximation, Randomization, and Combinatorial Optimization. Algorithms and Techniques},
	pages        = {375--385},
	publisher    = {Springer},
	year         = {2006},
	doi          = {10.1007/11830924\_35}
}

@article{MIP21,
	author = {Ji, Zhengfeng and Natarajan, Anand and Vidick, Thomas and Wright, John and Yuen, Henry},
	title = {{MIP*} = {RE}},
	year = {2021},
	issue_date = {November 2021},
	publisher = {Association for Computing Machinery},
	address = {New York, NY, USA},
	volume = {64},
	number = {11},
	issn = {0001-0782},
	doi = {10.1145/3485628},
	journal = {Commun. ACM},
	pages = {131--138},
	numpages = {8}
}

@InProceedings{NV17,
	author = {Natarajan, Anand and Vidick, Thomas},
	title = {A quantum linearity test for robustly verifying entanglement},
	year = {2017},
	isbn = {9781450345286},
	doi = {10.1145/3055399.3055468},
	booktitle = stoc17
}

@misc{GGMT23,
      title={On a conjecture of Marton}, 
      author={W. T. Gowers and Ben Green and Freddie Manners and Terence Tao},
      year={2023},
      eprint={2311.05762},
      archivePrefix={arXiv},
      primaryClass={math.NT}
}

@InProceedings{GS14,
  author    = {Guo, Alan and Sudan, Madhu},
  booktitle = {Approximation, Randomization, and Combinatorial Optimization. Algorithms and Techniques (APPROX/RANDOM)},
  title     = {{List Decoding Group Homomorphisms Between Supersolvable Groups}},
  year      = {2014},
  doi       = {10.4230/LIPIcs.APPROX-RANDOM.2014.737},
  isbn      = {978-3-939897-74-3},
  issn      = {1868-8969},
  urn       = {urn:nbn:de:0030-drops-47359},
}

@article{Kiwi03,
title = {Algebraic testing and weight distributions of codes},
journal = {Theoretical Computer Science},
volume = {299},
number = {1},
pages = {81--106},
year = {2003},
issn = {0304-3975},
doi = {10.1016/S0304-3975(02)00816-2},
author = {M. Kiwi},
}

@InProceedings{DSW06,
  author    = {Dinur, Irit and Sudan, Madhu and Wigderson, Avi},
  booktitle = {Approximation, Randomization, and Combinatorial Optimization. Algorithms and Techniques},
  title     = {Robust Local Testability of Tensor Products of {LDPC} Codes},
  year      = {2006},
  pages     = {304--315},
  publisher = {Springer Berlin Heidelberg},
  doi       = {10.1007/11830924_29},
  isbn      = {978-3-540-38045-0},
}

@InProceedings{BCH+95,
  author    = {M. Bellare and D. Coppersmith and J. H{\aa}stad and M. Kiwi and M. Sudan},
  booktitle = focs95,
  year      = {1995},
  title     = {Linearity testing in characteristic two},
  pages     = {432--441},
}

@MISC {suji,
    TITLE = {Image of a fixed element under a random endomorphism in an Abelian group},
    AUTHOR = {Robin Chapman},
    HOWPUBLISHED = {MathOverflow},
    NOTE = {\url{https://mathoverflow.net/q/30378} (version: 2010-07-04)},
    year = {2010}
}

@Article{HW03,
  author       = {Johan H{\aa}stad and Avi Wigderson},
  year         = {2003},
  journal = {Random Structures \& Algorithms},
  title        = {Simple Analysis of Graph Tests for Linearity and {PCP}},
  number       = {2},
  pages        = {139--160},
  volume       = {22},
  doi = {10.1002/rsa.10068}
}

@InProceedings{DGKS08,
  author    = {Dinur, Irit and Grigorescu, Elena and Kopparty, Swastik and Sudan, Madhu},
  booktitle = stoc08,
  title     = {Decodability of group homomorphisms beyond the {Johnson} bound},
  year      = {2008},
  pages     = {275--284},
  doi       = {10.1145/1374376.1374418},
  isbn      = {9781605580470}}

@Article{BCLR07,
  author    = {Ben-Or, Michael and Coppersmith, Don and Luby, Mike and Rubinfeld, Ronitt},
  title     = {Non-{A}belian homomorphism testing, and distributions close to their self-convolutions},
  issn      = {1098-2418},
  number    = {1},
  pages     = {49--70},
  doi       = {10.1002/rsa.20182},
  volume    = {32},
  journal   = {Random Structures \& Algorithms},
  month     = aug,
  publisher = {Wiley},
  year      = {2007},
}

@InProceedings{BK21,
	title = {Optimal inapproximability of satisfiable k-{LIN} over non-abelian groups},
	author = {Bhangale, Amey and Khot, Subhash},
	year = {2021},
	doi = {10.1145/3406325.3451003},
	booktitle = stoc21
}

@misc{BFL03,
	title={Near representations of finite groups},
	author={Babai, L{\'a}szl{\'o} and Friedl, Katalin and Luk{\'a}cs, Andr{\'a}s},
	note={Manuscript},
	year={2003},
}

@article{MR15,
	title={Approximate representations, approximate homomorphisms, and low-dimensional embeddings of groups},
	author={Moore, Cristopher and Russell, Alexander},
	journal={SIAM Journal on Discrete Mathematics},
	volume={29},
	number={1},
	pages={182--197},
	year={2015},
	publisher={SIAM},
	doi = {10.1137/140958578},
	eprint = {1009.6230}
}

@article{OY16,
  title = {Testing properties of functions on finite groups},
  volume = {49},
  ISSN = {1098-2418},
  DOI = {10.1002/rsa.20639},
  number = {3},
  journal = {Random Structures \& Algorithms},
  publisher = {Wiley},
  author = {Oono,  Kenta and Yoshida,  Yuichi},
  year = {2016},
  month = feb,
  pages = {579--598},
  eprint ={1509.00930}
}

@article{GH17,
	title={Inverse and stability theorems for approximate representations of finite groups},
	author={Gowers, William Timothy and Hatami, Omid},
	journal={Sbornik: Mathematics},
	volume={208},
	number={12},
	pages={1784},
	year={2017},
	publisher={IOP Publishing},
	eprint = {1510.04085},
	doi = {10.1070/SM8872}
}

@article{Gow08,
  title = {Quasirandom {{Groups}}},
  author = {Gowers, W. T.},
  year = {2008},
  month = may,
  journal = {Combinatorics, Probability and Computing},
  volume = {17},
  number = {3},
  pages = {363--387},
  publisher = {{Cambridge University Press}},
  issn = {1469-2163, 0963-5483},
  doi = {10.1017/S0963548307008826},
  urldate = {2024-02-10},
  langid = {english},
  eprint = {0710.3877}
}

@Article{Sanders10,
  author       = {Tom Sanders},
  year         = {2012},
  journal = {Analysis \& PDE},
  title        = {On the {B}ogolyubov-{R}uzsa Lemma},
  volume = {5},
  number = {3},
  eprint          = {1011.0107},
  doi = {10.2140/apde.2012.5.627}
}

@InProceedings{Sam07,
  author    = {A. Samorodnitsky},
  booktitle = stoc07,
  title     = {Low-degree tests at large distances},
  year      = {2007},
}

@InProceedings{BLR90,
  author    = {M. Blum and M. Luby and R. Rubinfeld},
  booktitle = stoc90,
  year      = {1990},
  title     = {Self-Testing/Correcting with Applications to Numerical Problems},
  pages     = {73--83},
  doi = {10.1145/100216.100225}
}

@InProceedings{BSVW03,
  author    = {Eli {Ben-Sasson} and Madhu Sudan and Salil P. Vadhan and Avi Wigderson},
  booktitle = stoc03,
  year      = 2003,
  title     = {Randomness-efficient low degree tests and short {PCP}s via $\epsilon$-biased sets},
  pages     = {612-621},
  doi = {10.1145/780542.780631}
}
\end{document}